\documentclass[titlepage,runningheads]{llncs}

\usepackage[latin1]{inputenc} 
\usepackage{url} 
\usepackage{latexsym, stmaryrd}
\usepackage{amssymb, amsbsy, amsmath} 

\usepackage{comment}
\usepackage{ifthen}
\newboolean{extended-abstract}
%
%
\setboolean{extended-abstract}{false}
%
%
\ifthenelse{\boolean{extended-abstract}}%
{\excludecomment{RR}\includecomment{ABS}}%
{\includecomment{RR}\excludecomment{ABS}}
%




\usepackage{float}
\floatstyle{boxed}
\newfloat{mafigure}{thp}{pipo}
\floatname{mafigure}{Figure}

\setlength{\textfloatsep}{12pt}



\newtheorem{fact}{Fact}{\bfseries}{\itshape}
\newcommand{\cqfd}{\hfill $\Box$}

\newcommand{\finex}{\hfill $\Diamond$}
\renewcommand{\paragraph}[1]{\vskip 1mm \noindent {#1}}





\newcommand{\Farc}[2]{\displaystyle \frac{#1}{#2}}
\newcommand{\F}{\mathcal{F}}  
\newcommand{\X}{\mathcal{X}}
\newcommand{\G}{\mathcal{G}}
\newcommand{\Pos}{\mathcal{P}\mathit{os}}
\newcommand{\rootp}{\mbox {\footnotesize $\Lambda$}}

\newcommand{\C}{\mathcal{C}}
\newcommand{\D}{\mathcal{D}}
\newcommand{\E}{\mathcal{E}}
\newcommand{\Hyp}{\mathcal{H}}
\newcommand{\R}{\mathcal{R}}
\newcommand{\RC}{{\R_\C}}
\newcommand{\RD}{{\R_\D}}
\newcommand{\T}{\mathcal{T}}
\newcommand{\sort}{\mathit{sort}}
\newcommand{\Ind}{\mathcal{I}}
\newcommand{\InferInd}{\vdash_{\Ind}}
\newcommand{\modelsind}{\models_{\mathcal{I}\mathit{nd}}}

\newcommand{\state}[1]{\sideset{_\llcorner}{_\lrcorner}{\mathop{#1}}}
\newcommand{\xstate}[2]{\sideset{_\llcorner}{}{\mathop{#1}}_{\raise 0.25ex \hbox{$\scriptstyle \lrcorner$}}^{#2}}
\newcommand{\dxstate}[2]{\mbox{$\xstate{#1}{#2}$}}

\newcommand{\lcons}{\llbracket}
\newcommand{\rcons}{\rrbracket}
\newcommand{\cons}[1][{\ }]{\,\mathopen{\lcons} {#1} \mathclose{\rcons}}

\newcommand{\mgi}{\operatorname{mgi}}
\newcommand{\depth}{\mathit{d}}
\newcommand{\NF}{\mathrm{NF}}

\newcommand{\sol}{\mathit{sol}}
\newcommand{\deriv}{\mathopen{:}}
\newcommand{\var}{\mathit{var}}

\newcommand{\mul}{\mathit{mul}}

\newcommand{\lpo}{\mathit{lpo}}

\newcommand{\Nat}{\mathsf{Nat}}

\newcommand{\Bool}{\mathsf{Bool}}
\newcommand{\List}{\mathsf{List}}

\newcommand{\Set}{\mathsf{Set}}

\newcommand{\true}{\mathit{true}}
\newcommand{\false}{\mathit{false}}


%
\newcommand{\sdec}{\mathit{dec}}

\newcommand{\senc}{\mathit{enc}}

\newcommand{\decorate}{\mathit{decorate}}
\newcommand{\ins}{\mathit{ins}}

\newcommand{\occur}{\in}
\newcommand{\occurbis}{\Subset}
\newcommand{\re}{\mathsf{Red}}
\newcommand{\sorted}{\mathit{sorted}}
\newcommand{\minl}{\mathit{min}}
\newcommand{\tie}{\mathit{tie}}
\newcommand{\rev}{\mathit{rev}}

\newcommand{\bad}{\mathsf{Bad}}


\def\frew#1#2#3#4#5#6#7#8{
\setbox0=\hbox{$#6 #7 #1 #8$}%
\setbox1=\hbox{$#6 #7 #2 #8$}%
\ifdim \wd0>\wd1 \rlap{\rlap{\hbox to \wd0{#5}}%
                            {\hbox to\wd0{\hfil\lower #3\box1\relax\hfil}}}{\raise #4\box0}%
\else \rlap{\rlap{\hbox to \wd1{#5}}{\hbox to\wd1{\hfil\raise #4\box0\relax\hfil}}}{\lower #3\box1}%
\fi
}

\def\fstep#1#2#3#4#5{\mathchoice{\frew{#1}{#2}{1.1ex}{1.1ex}{#5}{\scriptstyle}{#3}{#4}}%
                                {\frew{#1}{#2}{0.82ex}{1.1ex}{#5}{\scriptstyle}{#3}{#4}}%
                                {\frew{#1}{#2}{0.51ex}{0.82ex}{#5}{\scriptscriptstyle}{#3}{#4}}%
                                {\frew{#1}{#2}{0.51ex}{0.69ex}{#5}{\scriptscriptstyle}{#3}{#4}}}

\newcommand{\lrstep}[2]{\mathrel{\fstep{#1}{#2}{\;\>}{\>\>\;}{\rightarrowfill}}}
\newcommand{\rlstep}[2]{\mathrel{\fstep{#1}{#2}{\;\>\>}{\;\>}{\leftarrowfill}}}

\pagestyle{plain}
\sloppy 



\author{Adel Bouhoula\inst{1}
\and Florent Jacquemard\inst{2}}

\institute{%
Higher School of Communications of Tunis (Sup'Com),\\
University of November 7th at Carthage, 
Tunisia.
\email{adel.bouhoula@supcom.rnu.tn}
\and
INRIA Saclay-{\^ I}le-de-France and LSV, CNRS/ENS Cachan, France.
\email{florent.jacquemard@inria.fr}
}


\titlerunning{Automated Induction with Constrained Tree Automata}
\authorrunning{Adel Bouhoula and Florent Jacquemard}

\begin{document}

\begin{ABS}
\title{Automated Induction\\ with Constrained Tree Automata%
\thanks{A long version of this extended abstract
is available as a research report~\cite{BouhoulaJacquemard08rrlsv07}.}
\thanks{This work has been partially supported by INRIA/DGRSRT grants 06/I09 and 0804.}}
\end{ABS}

\begin{RR}
\title{Automated Induction\\ for Complex Data Structures 
\thanks{A preliminary version of these results appeared in 
the proceedings of the 4th {I}nternational {J}oint {C}onference on {A}utomated {R}easoning 
({IJCAR}'08) \cite{BouhoulaJacquemard08ijcar}.}
\thanks{This work has been partially supported by INRIA/DGRSRT grants 06/I09 and 08--04
and  a grant \emph{SSHN} of the French Institute for Cooperation in the French Embassy in Tunisia.}}
\end{RR}


\maketitle

\begin{abstract}
We propose a procedure for automated implicit inductive theorem proving for equational specifications made of rewrite rules with conditions and constraints.
The constraints are interpreted over constructor terms (representing data values), and may express syntactic equality, disequality, ordering and also membership in a fixed tree language.  
Constrained equational axioms between constructor terms are supported and can be used in order to specify complex data structures like sets, sorted lists, trees, powerlists...

Our procedure is based on tree grammars with constraints, a formalism which can describe exactly the initial model of the given specification (when it is sufficiently complete and terminating).
They are used in the inductive proofs first as an induction scheme for the generation of subgoals at induction steps, second for checking validity and redundancy criteria by reduction to an emptiness problem, and third for defining and solving membership constraints.

We show that the procedure is sound and refutationally complete.
It generalizes former test set induction techniques and yields natural proofs for several non-trivial examples presented in the paper, these examples are difficult 
to specify and carry on automatically with related induction procedures.

\paragraph{\bf\small Keywords:} {\small Automated Inductive Theorem Proving, Rewriting, Tree Automata,
Program Verification.}
\end{abstract}


\section{Introduction}
Given a specification $\R$ of a program or system $S$
made of equational Horn clauses,
proving a property $P$ for $S$ generally
amounts to show the validity of $P$
in the minimal Herbrand model of $\R$,
also called \emph{initial model} of $\R$
(inductive validity).
In this perspective, it is important to have
automated induction theorem proving procedures
supporting a specification language
expressive enough to axiomatize
complex data structures 
like sets,  sorted lists, powerlists, 
complete binary trees, \textit{etc}.
Moreover, it is also important 
to be able to automatically generate induction schemas
used for inductive proofs
in order to minimize user interaction.
However, theories of complex data structures
generate complex induction schemes, and
the automation of inductive proofs is therefore difficult for such theories. 

It is common to assume that $\R$ is
built with \emph{constructor function symbols} 
(to construct terms representing data)
and 
\emph{defined symbols}
(representing the operations defined on constructor terms).
Assuming in addition the
\emph{sufficient completeness} of $\R$
(every ground (variable-free) term is reducible, 
using the axioms of $\R$, to a constructor term)
and the termination of $\R$, 
a set of representants 
for the initial model of $\R$
(the model in which we want to proof the validity of conjectures)
is the set of ground constructor terms not reducible by $\RC$
(the subset of equations of $\R$ between terms made of constructor symbols),
called constructor \emph{normal forms}.

In the case where the constructors are \emph{free}
($\RC = \emptyset$),
the set of constructor normal forms is simply the set of ground 
terms built with constructors and 
it is very easy in this case to define an induction schema.
This situation is therefore convenient for inductive reasoning,
and many inductive theorem provers require free constructors, 
termination and sufficient completeness.
However, it is not expressive enough to 
define complex data structures.
With rewrite rules between constructors, 
the definition of induction schema is more complex,
and requires a finite description of 
the set of constructor normal-forms.
%
Some progress has been done \emph{e.g.} in~\cite{BouhoulaJouannaud01}
and~\cite{BouhoulaJouannaudMeseguer} 
in the direction of handling specification with non-free constructors,
with severe restrictions (see related work below). 

Tree automata (TA) with constraints, or equivalently regular tree grammars with constraints,
have appeared to be a well suited framework
for the decision of problems related to term rewriting
(see~\cite{tata} for a survey).
This is the case for instance of ground reducibility, 
the property that all the ground instances of a given term are reducible
by a given term rewriting system (TRS).
This property was originally shown decidable for all TRS by David Plaisted~ \cite{plaisted85ic}; 
it is reducible to the (decidable) problem of emptiness for tree automata with disequality 
constraints (see e.g.~\cite{ComonJacquemard03}).
TA with constraints permit a finite representation of the set of constructor normal-forms
when $\RC$ is a left-linear TRS
(set of rewrite rules without multiple occurrences of variables in their left-hand-sides).
Indeed, on one hand TA can do linear pattern-matching,
hence they can recognize terms which are reducible by $\RC$,
and on the other hand, 
the class of TA languages is closed under complementation.
When the axioms of $\RC$ are not linear, or are constrained, 
some extensions of TA (or grammars) are necessary,
with transitions able to check constraints 
on the term in input, see e.g.~\cite{tata}.

\medskip
In this paper, we propose a framework for inductive theorem proving 
for theories containing
constrained rewrite rules between constructor terms and
conditional and constrained rewrite rules for defined functions.
The key idea is a strong and natural integration of tree grammars with constraints 
in an implicit induction procedure,
where they are used as induction schema.
%
%
Very roughly, our procedure starts with the
automatic computation of an induction schema,
in the form of a constrained tree grammar generating
constructor normal form.
This grammar is used later for the generation of subgoals from a conjecture $C$, 
by the instantiation of variables using the grammar's production rules,
triggering induction steps during the proof.
All generated subgoals are either deleted, following some criteria,
or they are reduced, using axioms or induction hypotheses,
or conjectures not yet proved, providing that they are smaller
than the goal to be proved. Reduced subgoals become then new conjectures
and $C$ becomes an induction hypothesis.
Moreover, constrained tree grammars are used as a decision procedure 
for checking the deletion criteria during induction steps.

%
%
Our method subsumes former test set induction procedures 
like~\cite{BouhoulaRusinowitch95jar,Bouhoula97jsc,BouhoulaJouannaud01},
by reusing former theoretical works on tree automata with constraints.
It is \emph{sound} and \emph{refutationally complete}
(any conjecture that is not valid in the initial model will be disproved)
when $\R$ is sufficiently complete  and the constructor subsystem $\RC$ is terminating.
Without the above hypotheses,
it still remains  sound and refutationally complete for
a restricted kind of conjectures, where all the variables are constrained to 
belong to the language of constructor normal forms. 
This restriction is expressible in the specification language (see below).
When the procedure fails, it implies that the conjecture
is not an inductive theorem, 
provided that $\R$ is \emph{strongly} complete
(a stronger condition for sufficient completeness) and ground confluent.
There is no requirement for \emph{termination} of the whole set of rules $\R$, 
unlike~\cite{BouhoulaRusinowitch95jar,Bouhoula97jsc},
but instead only for separate termination 
of the respective sets of rules for defined function and
for the constructors.

Moreover, if a conjecture $C$ restricted as above 
is proved in a sufficiently complete specification $\R$
and $\R$ is further consistently extended into $\R'$ with additional axioms 
for specifying \emph{partial} (non-constructor) functions,
then the former proof of $C$ remains valid in $\R'$,
see Section~\ref{sec:partial}.

The support of constraints permits in some cases to 
use the constrained completion technique
of~\cite{KirchnerKirchnerRusinowitch90ria}
in order to transform a non-terminating theory into a terminating one, 
by the addition of ordering constraints in constructor rules, 
see Section~\ref{se-ordering}.
It permits in particular to make proofs modulo non orientable axioms, without
having to modify the core of our procedure.
%

\medskip

We shall consider a specification of ordered lists as a running example 
throughout the paper.
Consider first non-stuttering lists
(lists which do not contain two equal successive elements)
built with the constructor symbols $\emptyset$ (empty list) 
and $\ins$ (list insertion) and following this rewrite rule:

\begin{equation}
\ins(x,\ins(x, y))  \to  \ins(x, y)
\tag{$\mathsf{c}_0$} \label{eq:ins-double}
\end{equation}

Rewrite rules can be enriched with constraints
built on predicates with a fixed interpretation on ground constructor terms.
For example, using ordering constraints built with $\succ$
we can specify ordered lists by the following axiom:

\begin{equation}
\ins(x_1,\ins(x_2, y))  \to  \ins(x_2, \ins(x_1, y))\cons[x_1 \succ x_2]
\tag{$\mathsf{c}_1$} \label{eq:ins-ord}
\end{equation}

Another interesting example is the case of membership constraints of the form $x : L$ where $L$ 
is a fixed regular tree language 
(containing only terms made of constructor symbols).
%
\begin{RR}
Such constraints can be useful in the context of system verification.
Assume that we have specified a defined symbol $\mathit{trace}$
characterizing the set of possible sequences of events of some system
i.e. $\mathit{trace}(\ell)$ reduces to $\mathit{true}$ iff
$\ell$ is a correct list of events (represented as constructor terms).
Now, assume also that we have defined a regular language $\mathit{Bad}$
(of ground constructor terms) representing lists of faulty events,
by mean e.g. of a (finite) tree grammar.
We can express in this way, for instance, that some undesirable event occurs eventually,
or that some event is always followed (eventually) by an expected answer,
or any kind of linear temporal property.
We can express with the constrained conjecture
\( \mathit{trace}(y) \neq \mathit{true} \cons[y : \mathit{Bad}] \)
that no bad list is a trace of the system.
Hence, showing that this conjecture is an inductive consequence
of the specification of the system amounts to do
verification of trace properties (i.e. reachability properties).
More details about this problematic, in the context of security protocol verification, are given in Section~\ref{sec:protocols}.

\end{RR}
We consider also stronger constraints 
which restrict constructor terms to be in normal form 
(i.e. not reducible by the axioms).
Let us come back to the example of non-stuttering sorted lists 
(sorted lists without duplication),
and add to the above rules the axioms below
which define a membership predicate $\occurbis$, 
using the information that lists are sorted:
\begin{align}
x \occurbis \emptyset  & \rightarrow \false
   \tag{$\mathsf{m}'_0$} \label{eq:occur2-empty}\\
x_1 \occurbis \ins(x_2, y_2)    & \rightarrow \true 
  {\cons[x_1 \approx x_2 ]}
  \tag{$\mathsf{m}'_1$} \label{eq:occur2-eq} \\
x_1 \occurbis y_1 & \rightarrow \false 
  {\cons[ y_1 \approx \ins(x_2, y_2), x_1 \prec x_2, y_1 \deriv \mathsf{NF}]}
  \tag{$\mathsf{m}'_2$} \label{eq:occur2-leq} \\
x_1 \occurbis  \ins(x_2, y_2) & \rightarrow x_1 \occurbis y_2 
  {\cons[x_2 \prec x_1]}
  \tag{$\mathsf{m}'_3$} \label{eq:occur2-geq} 
\end{align}

The constraint $y_1 \deriv \mathsf{NF}$ expresses the fact that this subterm is a
constructor term in normal form, i.e. that it is a sorted list.
Without this constraint, the specification would be inconsistent.
Indeed, let us consider the ground term 
$t = 0 \occurbis  \ins(s(0), \ins(0, \emptyset))$.
This term $t$  can be reduced into 
both $\true$ and $\false$, since $\ins(s(0), \ins(0, \emptyset))$ is not in normal form.
\begin{RR}
In Section \ref{sec:example}, we elaborate on these examples on sorted lists.
\end{RR}
%
Using constraints of the form $. : \mathsf{NF} {}$ as above
also permits the user to specify, 
directly in the rewrite rules, some ad-hoc reduction
strategies for the application of rewriting.
Such strategies include for instance several refinements of 
the innermost strategy
which corresponds to the {\em call by value} computation
in functional programming languages, where arguments are fully
evaluated before the function application.

\begin{RR}
Some non-trivial examples, including the above one, 
treated with our method are given in 
Section~\ref{sec:example} (sorted lists and verification of trace properties)
and Section~\ref{sec:powerlist} (powerlists).
Our procedure yields very natural and readable proofs on these examples
which are difficult (if not impossible) to specify and to carry on with 
the most of the other induction procedures.
\end{RR}


\paragraph{\bf Related work.}
The principle of our procedure is close to
test-set induction approaches~\cite{BouhoulaRusinowitch95jar,Bouhoula97jsc}.
The real novelty here is that test-sets are replaced by constrained tree grammars,
the latter being more precise induction schemes.
Indeed, 
they provide an \emph{exact} finite description of the initial model
of the given specification,
(under some assumptions like sufficient completeness 
and termination for axioms),
whereas cover-sets and test-sets are over-approximative in similar cases.

\begin{RR}
The soundness of cover-set~\cite{ZhangKKCADE88} 
and test-set~\cite{BouhoulaRusinowitch95jar,Bouhoula97jsc}
induction techniques 
do not require that the constructors are free. 
But, in this case, cover-sets and test-sets 
are over-approximating induction schemas,
in the sense that they may represent some reducible ground terms.
This may cause the failure (a result of the form ``don't know'') 
of the induction proof. 
On the other hand, the refutational completeness of test-set induction technique
is not guaranteed in this case.
\end{RR}

The first author and Jouannaud~\cite{BouhoulaJouannaud01} 
have used tree automata techniques to generalize test set induction 
to specifications with non-free constructors.
This work has been generalized in~\cite{BouhoulaJouannaudMeseguer} for membership equational logic.
These approaches, unlike the procedure presented in this paper, 
work by transforming the initial specification in order to get rid of 
rewrite rules for constructors.
Moreover, the axioms for constructors are assumed to be 
unconstrained and unconditional \emph{left-linear} rewrite rules,
which is still too restrictive for the specification of structures like sets or 
sorted lists...

\begin{ABS}
Kapur~\cite{Kapur} has proposed a method (implemented in the system \textsf{RRL}) 
for mechanizing cover set induction if the constructors are not free.
This handles in particular the specification of powerlists. 
We show in Section~\ref{se-powerlist} how our method can address similar problems.
\end{ABS}

\begin{ABS}
We describe in~\cite{BouhoulaJacquemard08rrlsv07} two proofs,
done resp. by Jared Davis and Sorin Stratulat,
of a conjecture on sorted lists,
done resp. by Jared Davis and Sorin Stratulat,
with \textsf{ACL2} using a library for ordered sets~\cite{Davis-osets}
and with 
\textsf{SPIKE}~\cite{BouhoulaRusinowitch95jar,Bouhoula97jsc,stratulat01}.
Both proofs require the 
addition of non-trivial lemmas 
whereas our procedure can prove the conjecture without additional lemma.
\end{ABS}

\begin{RR}
The theorem prover of \textsf{ACL2}~\cite{ACL2} 
is a new version of the Boyer-Moore theorem prover, \textsf{Nqthm}.
Its input language is a subset of the programming language \textsf{Common LISP}.
It is a very general formalism for the specification of systems,
and therefore permits in particular the specification of complex data structures mentioned above.
%
The example of sorted lists, presented in Section~\ref{sec:example}
can be processed with \textsf{ACL2}, but  the proof requires  the user
to add manually some lemmas, whereas the proof with our procedure does not require any lemma
(see Section~\ref{sec:acl2}).
The specification language of our approach is much less expressive than the one of \textsf{ACL2}, 
but the intention is to minimize the interaction with the user during the proof process,
in order to prevent the user from 
time consumption and the good level of expertise 
(both in the system to be verified and in the theorem prover)
which are often required in order to come up with the necessary key lemmas.
An interactive proof on the same specification with \textsf{SPIKE} 
is also presented in Section~\ref{sec:example}.

Kapur~\cite{Kapur} has proposed a method (implemented in the system \textsf{RRL}) 
for mechanizing cover set induction if the constructors are not free.
He defines particular specifications which may include in the declaration
of function symbols (including constructors) some \emph{applicability conditions}.
This handles in particular the specification of powerlists, 
as illustrated by some examples.
We show in Section~\ref{se-powerlist} how our method can address similar problems.

In~\cite{sengler96termination}, Sengler proposes a system \textsf{INKA} for automated
termination analysis of recursively defined algorithm 
over data types like sets and arrays.
It can handle constructor relations, under restrictions.
When it succeeds, this method provides an explicit induction scheme 
which can be exploited with an explicit inductive theorem proving procedure.

We lack a concrete base of comparison between 
our method and the  two above approaches,
because it was impossible for us to process our examples with
 \textsf{INKA}  (which is discontinued since 1997)  or  \textsf{RRL}. 
Let us outline some other important differences between our procedure
and these approaches.
The above explicit induction procedures 
are not well suited for the refutation of false conjectures.
When such a system fails, it is not possible to conclude
whether the conjecture is not valid or if the system need assistance from 
the user in order to complete the proof.
On the opposite, our implicit induction procedure 
is refutationally complete: any false conjecture will be refuted, 
under the assumptions mentioned above.
This property is of particular interest for debugging specifications 
of flawed systems or programs or also for the detection of attacks
on security protocols like in~\cite{BouhoulaJacquemard07arspa}
(see Section~\ref{sec:protocols}).
Finally, unlike explicit induction systems which are hierarchical, 
our procedure supports mutual induction. 
It is crucial for handling mutually recursive 
functions~\cite{Bouhoula97jsc}.
\end{RR}


\section{Preliminaries}
The reader is assumed familiar with the basic notions of term
rewriting~\cite{DershowitzJouannaud90} and first-order logic.  Notions and
notations not defined here are standard.

\paragraph{\bf Terms and substitutions.}
We assume given a many sorted signature $(\mathcal{S},\F)$ (or simply $\F$, for short) 
where $\mathcal{S}$ is a set of {\em sorts} and $\F$ is a finite set of function symbols with arities.
We assume moreover that the signature $\F$ comes in two parts, 
$\F = \C \uplus \D$ where $\C$ a set of \emph{constructor symbols},
and $\D$ is a set of \emph{defined symbols}.
Let $\X$ be a family of sorted variables. 
We sometimes denote variables with sort exponent like $x^{S}$ 
in order to indicate that $x$ has sort $S \in \mathcal{S}$.
The set of well-sorted terms over $\F$ 
(resp. constructor well-sorted terms) with variables in $\X$
will be denoted by $\T(\F, \X)$ (resp. $\T(\C, \X)$).  
The subset of $\T(\F, \X)$ (resp. $\T(\C, \X)$) 
of variable-free terms, or {\em ground} terms, 
is denoted $\T(\F)$ (resp. $\T(\C)$).
We assume that each sort contains a ground term.
The sort of a term $t\in \T(\F, \X)$ is denoted $\sort(t)$.

A term $t$ is identified as usual with a function from its
set of \emph{positions} (strings of positive integers) $\Pos(t)$ 
to symbols of $\F$ and $\X$, where positions are strings of positive integers. 
We denote the empty string (root position) by $\rootp$. 
The length of a position $p$ is denoted $|p|$.
The {\em depth} of a term $t$, denoted $\depth(t)$, 
is the maximum of $\{ |p| \mid p \in \Pos(t) \}$.
The {\it subterm} of $t$ at position $p$ is denoted by $t|_p$. 
The result of replacing $t|_p$ with $s$ at position $p$ in $t$ is denoted by $t[s]_p$. This notation 
is also used to indicate that $s$ is a subterm of $t$, in which case $p$ may be omitted.  
We denote the set of variables occurring in $t$ by $\var(t)$.
A term $t$ is {\em linear} if every variable of $\var(t)$ occurs exactly once in $t$.  

A \emph{substitution} 
is a finite mapping $\{ x_1 \mapsto t_1, \ldots, x_n \mapsto t_n \}$
where $x_1, \ldots, x_n \in \X$ and $t_1,\ldots t_n \in \T(\F, \X)$.
As usual, we identify substitutions with their morphism extension to terms.
A \emph{variable renaming} is a 
substitution mapping variables to variables.
We use 
postfix notation for substitutions application and composition. 
A substitution $\sigma$ is \emph{grounding} for a term $t$ 
if $t\sigma$ is ground.
\begin{RR}
The most general common instance of some terms $t_1, \ldots, t_n$ is denoted by 
$\mgi(t_1,\ldots,t_n)$.
\end{RR}

\paragraph{\bf Constraints and constrained terms.}
We assume given a constraint language $\mathcal{L}$, which is
a finite set of predicate symbols with a recursive Boolean interpretation
in the domain of ground constructor terms of $\T(\C)$.
Typically, $\mathcal{L}$ may contain the syntactic equality $. \approx .$
(syntactic disequality $. \not\approx .$),
some (recursive) simplification ordering $. \prec .$ on ground constructor terms
(for instance a lexicographic path ordering~\cite{DershowitzJouannaud90}),
and membership $.\,\deriv L$ to a fixed tree language $L \subseteq \T(\C)$
(like for instance the languages of well sorted terms or constructor terms in normal-form).  
%
\emph{Constraints} on the language $\mathcal{L}$ are Boolean combinations of atoms of the form
$P(t_1,\ldots,t_n)$  where $P \in \mathcal{L}$ and $t_1,\ldots,t_n \in \T(\C,\X)$.
By convention, an empty combination is interpreted to true.

The application of substitutions is extended from terms to constraints
in a straightforward way, and we may therefore 
define a solution for a constraint $c$ as a (constructor) 
substitution $\sigma$ grounding for all terms in $c$ 
and such that $c\sigma$ is interpreted to true.
The set of solutions of the constraint $c$ is denoted $\sol(c)$.
A constraint $c$ is \emph{satisfiable} if $\sol(c) \neq \emptyset$
(and \emph{unsatisfiable}  otherwise).

A \emph{constrained term} $t\cons[c]$ is a linear term $t \in\T(\F,\X)$ 
together with a constraint $c$, which may share some variables with $t$.  
Note that the assumption that $t$ is linear is not restrictive,  
since any non linearity may be expressed in the constraint, for instance $f(x, x)\cons[c]$ 
is semantically equivalent to $f(x, x') \cons[c \land x \approx x']$, where
the variable $x'$ does not occur in $c$.

\paragraph{\bf Constrained clauses.}
A \emph{literal} is an equation $s = t$ or a disequation $s \neq t$ or an oriented equation $s \to t$ between two terms.
A \emph{constrained clause} $C\cons[c]$ is a disjunction $C$ of literals
together with a constraint $c$.
A constrained clause $C\cons[c]$ is said to \emph{subsume} 
a constrained clause $C'\cons[c']$ if there is a substitution $\sigma$ such that $C\sigma$ is a sub-clause of $C'$ 
and $c' \land \lnot c\sigma$ is unsatisfiable. 

A \emph{tautology} is a constrained clause 
$s_1 = t_1 \vee \ldots \vee s_n = t_n \cons[d]$ 
such that  $d$ is a conjunction of equational constraints, 
$d = u_1 \approx v_1 \wedge \ldots \wedge u_k \approx v_k$
and 
there exists $i \in [1..n]$ such that $s_i \sigma = t_i \sigma$ where
$\sigma$ is the mgu of $d$.

\begin{RR}
\paragraph{\bf Orderings.}
%
A \emph{reduction ordering} is a well-founded ordering on $\T(\F, \X)$ 
monotonic wrt 
contexts and substitutions.
A \emph{simplification ordering} is
a reduction ordering 
which moreover contains the strict subterm ordering.
We assume from now on given a simplification ordering $>$
total on $\T(\F)$,
defined, \textit{e.g.}, on the top of a precedence as an lpo $\succ_\lpo$~\cite{DershowitzJouannaud90}. 
%


The \emph{multiset} extension $>^\mul$ of an ordering $>$
is defined as the smallest ordering relation on 
multisets such that $M \cup \{ t \} >^\mul M \cup \{s_1, \ldots, s_n \}$
if $t > s_i$ for all $i \in [1..n]$.
The extension $>_e$ of the ordering $>$ on terms 
to literals is defined as the multiset extension $>^\mul$ to 
the multisets containing the term arguments of the literals.
The extension of the ordering $>$ on terms to clauses is 
the multiset extension $>_e^\mul$ applied to the multiset of literals.
\end{RR}

\paragraph{\bf Constrained rewriting.}
A \emph{conditional constrained rewrite rule} is a constrained clause of the form
$\Gamma \Rightarrow l \to r \cons[c]$ such that
$\Gamma$ is a conjunction of equations, 
called the \emph{condition} of the rule,
the terms $l$ and $r$ (called resp. left- and right-hand side) are linear and 
  have the same sort, 
and $c$ is a constraint.
When the condition $\Gamma$ is empty, it is called a \emph{constrained rewrite rule}.
A set of conditional constrained, resp. constrained, rules is called a 
\emph{conditional constrained} (resp. \emph{constrained})  \emph{rewrite system}.
%
\begin{RR}

\end{RR}
Let $\R$ be a conditional constrained rewrite system.
The relation $s\cons[d]$ rewrites to $t\cons[d]$ by $\R$, 
denoted $s\cons[d] \lrstep{}{\R} t\cons[d]$,
is defined recursively by the existence of 
a rule $\rho \equiv \Gamma \Rightarrow \ell \to r \cons[c] \in \R$,
a position $p\in \Pos(s)$, 
and a substitution $\sigma$ such that 
$s|_p = \ell \sigma$, $t|_p = r\sigma$, 
$d\sigma \wedge \neg c \sigma$ is unsatisfiable,
and $u\sigma \downarrow_{\R} v\sigma$ for all $u=v \in \Gamma$.
%
The transitive and reflexive transitive closures, 
of $\lrstep{}{\R}$ are denoted 
$\lrstep{+}{\R}$ and $\lrstep{*}{\R}$, and $u \downarrow_{\R} v$
stands for
$\exists w,\ u \lrstep{*}{\R} w \rlstep{*}{\R} v$.

Note the semantical difference between conditions and constraints in  rewrite rules.
The validity of the condition is defined wrt the system $\R$
whereas the interpretation of constraint is fixed and independent from $\R$.

A constrained term $s\cons[c]$ is \emph{reducible} by $\R$ if there is some $t\cons[c]$ 
such that $s\cons[c] \lrstep{}{\R} t\cons[c]$.  
Otherwise $s\cons[c]$ is called \emph{irreducible}, or an $\R$-normal form.
A substitution $\sigma$ is {\em irreducible} by $\R$ if its image
contains only $\R$-normal forms.
A constrained term $t \cons[c]$ is {\em ground reducible} 
(resp. \emph{ground irreducible}) if $t\sigma$ is reducible (resp. irreducible)
for every irreducible solution $\sigma$ of $c$ grounding for $t$.

The system $\R$ is \emph{terminating} if there is no infinite
sequence $t_1 \lrstep{}{\R} t_2 \lrstep{}{\R} \ldots$,
$\R$ is \emph{ground confluent} if for any ground terms $u, v, w \in \T(\F)$, 
$v\rlstep{*}{\R} u \lrstep{*}{\R} w$, implies that $v \downarrow_{\R} w$, and
$\R$ is {\em ground convergent} if $\R$ is both ground confluent and terminating.
The {\em depth} of a non-empty set $\R$ of rules,
denoted $\depth(\R)$,
is the maximum of the depths of the left-hand sides of rules in $\R$.

\paragraph{\bf Constructor specifications.}
We assume from now on given a conditional constrained rewrite system $\R$.
The subset of 
$\R$ containing only function symbols from $\C$ is denoted $\RC$ 
and $\R \setminus \RC$ is denoted $\RD$.

%

\paragraph{\bf Inductive theorems.}
A clause $C$ is a \emph{deductive theorem} of $\R$ (denoted $\R \models C$)
if it is valid in any model of $\R$. 
A clause $C$ is an \emph{inductive theorem} of $\R$ 
(denoted $\R \modelsind C$)
iff for all for all substitution $\sigma$ grounding for $C$,
$\R \models C\sigma$.

We shall need below to generalize the definition of 
inductive theorems to constrained clauses as follows:
a constrained clause $C \cons[c]$ is an inductive theorem of $\R$ 
(denoted $\R \modelsind C \cons[c]$)
if for all substitutions $\sigma \in \sol(c)$ grounding for $C$
we have $\R \models C\sigma$.

\paragraph{\bf Completeness.}
A function symbol $f \in \D$ is \emph{sufficiently complete} wrt $\R$ 
iff for all $t_1, \ldots , t_n \in \T(\C)$, 
there exists $t$ in $\T(\C)$ such that $f(t_1, \ldots ,t_n)$ $\lrstep{+}{\R}$ $t$.  
We say that the system $\R$ is  sufficiently complete 
iff every defined operator $f \in \D$ is sufficiently complete wrt $\R$.
%
%
Let $f$ $\in \D$ be a function symbol and let:
\[ \Bigl\{\Gamma_1 \Rightarrow f(t^1_1,\ldots,t^1_k) \rightarrow r_1 \cons[c_1],\ldots,
\Gamma_n \Rightarrow f(t^n_1,\ldots,t^n_k) \rightarrow r_n \cons[c_n] \Bigr\}\]
be a maximal subset of rules of $\RD$ whose left-hand sides are identical 
up to variable renamings 
$\mu_1, \ldots, \mu_n$, \emph{i.e.}  
$f(t^1_1,\ldots,t^1_k)\mu_1 = f(t^2_1,\ldots,t^2_k)\mu_2 = \ldots
f(t^n_1,\ldots,t^n_k)\mu_n$.
We say that $f$ is \emph{strongly complete} wrt $\R$ (see~\cite{Bouhoula97jsc})
if $f$ is sufficiently complete wrt $\R$ and
$\R \modelsind \Gamma_1 \mu_1\cons[c_1 \mu_1] \vee \ldots \vee \Gamma_n
\mu_n\cons[c_n \mu_n]$ for every subset of $\R$ as above.
The system $\R$ is said strongly complete if every function symbol $f \in \D$ is
strongly complete wrt $\R$.



\begin{RR}

\section{Sorted Lists and Verification of Trace Properties} 
\label{se-example} \label{sec:example}
In this section, we present some examples for motivating the techniques introduced in this paper.
These examples illustrate the fact that our approach supports constraints in 
the axioms (both for constructor and defined functions) 
and the conjectures.
Note that constrained rules are not supported by test set induction procedures.

\subsection{Constructor Specification, Normal Form Grammar}
\label{sec:sorted-lists-grammar}
Consider a signature with sort 
\( \mathcal{S} = \{ \Bool, \Nat, \Set \} \), and constructor symbols:
\[  \C = \bigl\{ \true,\false:\Bool,\, 0:\Nat,\, s:\Nat \to \Nat,\, 
    \emptyset:\Set,\, \ins:\Nat \times \Set \to \Set \bigr\} 
\]
and a constructor rewrite system for ordered lists without duplication:
\[
\begin{array}{rcl}
\RC & = & 
\left\{ 
\begin{array}{rcl}
         \ins(x_1,\ins(x_2, y)) & \to & \ins(x_2, y)\cons[x_1 \approx x_2]\\
         \ins(x_1,\ins(x_2, y))  & \to  & \ins(x_2, \ins(x_1, y))\cons[x_1 \succ x_2]
\end{array}
\right\}
\end{array}
\]
Note the presence of constraints in these rewrite rules.
The equality constraint in the first rule permits the elimination of (successive) redundancies in lists,
and the ordering constraint in the second rule ensures that the application of this rule 
will sort the lists.
Note that the first rule actually corresponds to the unconstrained rewrite rule:
$ \ins(x, \ins(x, y)) \to \ins(x, y)$.
As outlined in introduction, this rule cannot be handled by the procedures 
of~\cite{BouhoulaJouannaud01,BouhoulaJouannaudMeseguer}, 
because it is not not left-linear.

Constrained grammar 
are presented formally in Section~\ref{se-grammar}.
In this section, we shall only give a taste of 
this formalism and how their are used in the automatic inductive proof of conjectures. 

\noindent The set of ground $\RC$-normal forms is described by the following
set of patterns:
\[ 
\begin{array}{lcl}
\NF(\RC) =  & & \{ x :\Bool \} \cup \{ x :\Nat \} \cup \{ \emptyset \}
   \cup \{ \ins(x,\emptyset)  \mid x:\Nat \}\\
 & \cup & \{ \ins(x_1, \ins(x_2, y)) \mid x_1, x_2:\Nat, \ins(x_2, y) \in \NF(\RC), x_1 \prec x_2 \}
\end{array} \] 
We build a constrained grammar $\G_\NF(\RC)$ which generates $\NF(\RC)$
by means of non-terminal replacement guided by some production rules.
The four first subsets of $\NF(\RC)$ are generated by a tree grammar
from the four non-terminals: 
$\bigl\{ \xstate{x}{\Bool}, \xstate{x}{\Nat},\xstate{x}{\Set},\state{\ins(x, y)} \bigr\}$
and using the production rules
(the non terminals are considered below modulo variable renaming):
\[ 
\begin{array}{rclcrclcrcl}
\xstate{x}{\Bool} & := &  \true & &
\xstate{x}{\Bool} & := & \false\\[1mm]
\xstate{x}{\Nat} & := & 0 & & 
\xstate{x}{\Nat} & := & s(\xstate{x_2}{\Nat})\\[1mm]
\xstate{x}{\Set} &  := & \emptyset & & 
\state{\ins(x, y)} & := & \multicolumn{5}{l}{\ins(\xstate{x}{\Nat},\xstate{x}{\Set})}
\end{array}
\]
For the last subset of $\NF(\RC)$, we need to apply the negation of the constraint 
$x_1 \approx x_2 \lor x_1 \succ x_2$ in the production rules of the grammar.
For this purpose, we add the production rule:
\[  
\state{\ins(x, y)} \quad := \quad 
\ins(\dxstate{x}{\Nat}, \state{\ins(x_2, y_2)})\cons[x^\Nat \prec x_2] 
\] 
Note that the variables in the non terminal $ \state{\ins(x_2, y_2)}$ in the right member
of the above production rule have been 
renamed in order to be distinguished from the variables in the non terminal in the left member.

\subsection{Defined Symbols and Conjectures}
We complete the above signature with the set of defined function symbols:
\[ 
\D = \{ \sorted: \Set \to \Bool, 
\occur, \occurbis: \Nat \times \Set \rightarrow \Bool \}
\]
and the conditional constrained TRS $\RD$ containing the following rules:
\begin{align}
\sorted(\emptyset) & \to \true
 \tag{$\mathsf{s}_0$} \label{eq:sorted-empty}\\
\sorted(\ins(x,\emptyset)) &\to \true
 \tag{$\mathsf{s}_1$} \label{eq:sorted-single} \\
\sorted(\ins(x_1,\ins(x_2, y))) & \to \sorted(\ins(x_2, y)) \cons[x_1 \prec x_2]
  \tag{$\mathsf{s}_2$} \label{eq:sorted-2} 
\end{align}
Note that there is no axiom for the case $\cons[x_1 \succeq x_2]$.
The defined function $\sorted$ is nevertheless sufficiently complete wrt $\R$.
We can show with an induction (on the size of the term)
that every term $t$  of the form  $\sorted(\ins(t_1,\ins(t_2, \ell)))$  
can be reduced to a constructor term.
If $t_1 \prec t_2$, then (\ref{eq:sorted-2}) applies and the term obtained is smaller than $t$.
If $t_1 \succeq t_2$, then $t$ is reducible by $\RC$ into the smaller
$\sorted(\ins(t_2, \ell)))$ if $t_1 \approx t_2$ or
into $\sorted(\ins(t_2,\ins(t_1, \ell)))$ if $t_1 \succ t_2$, 
and this latter term is furthermore reduced by the rule (\ref{eq:sorted-2}) of $\RD$ into 
$\sorted(\ins(t_1, \ell))$.

The rules (\ref{eq:occur2-empty}-\ref{eq:occur2-geq}) implements a membership test restricted to ordered lists.
The function $\occur$ specified below another variant of a membership test on lists.
\begin{align}
x \occur \emptyset  & \rightarrow \false
   \tag{$\mathsf{m}_0$} \label{eq:occur-empty}\\
x_1 \occur \ins(x_2, y)  & \rightarrow \true \cons[x_1 \approx x_2]
 \tag{$\mathsf{m}_1$} \label{eq:occur-eq} \\
x_1 \occur \ins(x_2, y)  & \rightarrow x_1 \occur y \cons[x_1 \not\approx x_2] 
  \tag{$\mathsf{m}_2$} \label{eq:occur-neq}
\end{align}
\begin{align}
x \occurbis \emptyset  & \rightarrow \false
   \tag{$\mathsf{m}'_0$}\\
x_1 \occurbis \ins(x_2, y_2)    & \rightarrow \true \cons[x_1 \approx x_2 ]
  \tag{$\mathsf{m}'_1$}\\
x_1 \occurbis y_1  & \rightarrow \false \cons[ x_1 \prec x_2, y_1 \deriv \state{\ins(x_2, y_2)}]
  \tag{$\mathsf{m}'_2$}\\
x_1 \occurbis  \ins(x_2, y_2) & \rightarrow x_1 \occurbis y_2 \cons[x_2 \prec x_1]
  \tag{$\mathsf{m}'_3$}
\end{align}
Like $\sorted$, the defined functions $\occur$ and $\occurbis$ are sufficiently complete wrt $\R$.

The above version of the rule (\ref{eq:occur2-leq}) is the formal one
(the version in introduction was given in a simplified notation).
Note the presence of the membership constraint 
$y_1 \deriv \state{\ins(x_2, y_2)}$ in~(\ref{eq:occur2-leq}).
It refers to the above normal form grammar $\G_\NF(\RC)$ and hence restricts
the variable $y_1$ to be a constructor term headed by $\ins$ and in normal form.

One may wonder why we added this membership constrained and 
why a rule ($\mathsf{m}''_2$) of the form 
$x_1 \occurbis \ins(x_2, y_2) \rightarrow \false \cons[x_1 \prec x_2]$
would not be satisfying.
The reason is that with the rule ($\mathsf{m}''_2$) instead of~(\ref{eq:occur2-leq}), 
the specification is not consistent.
Indeed, let us consider the ground term $t = 0 \occurbis  \ins(s(0), \ins(0, \emptyset))$.
Note that $t$ is not in normal form.
It can be rewritten on one hand into
$0 \occurbis \ins(0, \ins(s(0), \emptyset))$ by $\RC$,
which is in turn rewritten  into $\true$ using (\ref{eq:occur2-eq}).
On the other hand, $t$ can be rewritten into $\false$ by ($\mathsf{m}''_2$).
This second rewriting is not possible with (\ref{eq:occur2-eq}), because of 
the membership constraint in this rule.

Another idea to overcome this problem should be to add a condition as in:
\begin{align}
\sorted(y) = true \Rightarrow
x_1 \occurbis \ins(x_2, y)  & \rightarrow \false \cons[x_1 \prec x_2]
  \tag{$\mathsf{m}'''_2$} \label{eq:occur4-leq} 
\end{align}
The specification with~(\ref{eq:occur4-leq}) is inconsistent as well since
the term $t$ is rewritten by $\RC$ into
$\sorted(\ins(0, \ins(s(0), \emptyset))$, 
which is rewritten into $\true$ by $\RD$.
Therefore, the addition of the membership constraint in rule~(\ref{eq:occur2-leq})
is necessary for the specification of $\occurbis$.


\bigskip
\noindent Let us consider 
the two following conjectures that we are willing to prove by induction:
\begin{align} 
\sorted(y) & =  \true \label{ex-goal1} \\ 
x \occurbis y & =  x \occur y \label{ex-goal-occur} 
\end{align}

\subsection{Test Set Induction} \label{sec:ex-test-set}
Roughly, the principle of a proof by 
test set induction~\cite{BouhoulaRusinowitch95jar,Bouhoula97jsc}
is the one presented in introduction except that:
\begin{enumerate}
\item the induction scheme is a \emph{test set}  (a finite set of terms).
\item \label{it:test-set-instanciation} variables in the goals are instantiated by terms from the test set.
\end{enumerate}
Moreover, the instantiation in~\ref{it:test-set-instanciation}  can be 
restricted to so called \emph{induction variables} (see~\cite{Bouhoula97jsc}),
which are the variables occurring (in a term of a goal)
at a non-variable and non-root position
of some left-hand sides of rules of $\RD$. 

Let us try to prove (\ref{ex-goal1}) using the test set induction technique.
A test set\footnote{This test set is an over approximating description of the 
set of constructor terms in normal form. 
For instance, the term
$\ins(s(0), \ins(0, \emptyset))$ is an instance of the third element of the test set 
but it is not in normal form.}
for $\R$ (and sort $\Set$)  has to contain:
\[ \mathcal{TS}(\Set,\R) = \bigl\{\emptyset,\, \ins(x_1,\emptyset),\,\ins(x_1, \ins(x_2, y))\bigr\} \]

We start by replacing $y$ in (\ref{ex-goal1}) by the terms 
from the test set $\mathcal{TS}(\Set,\R)$, and obtain:
\begin{alignat}{2}
\sorted(\emptyset) =  \true \label{ex-i11}\\
\sorted(\ins(x_1,\emptyset)) =\true \label{ex-ii11}\\
\sorted(\ins(x_1,\ins(x_2, y))) =\true \label{ex-iii11}
\end{alignat}

Subgoals~(\ref{ex-i11}) and~(\ref{ex-ii11}) are simplified by $\RD$
(respectively with rules~(\ref{eq:sorted-empty}) 
and~(\ref{eq:sorted-single})) into $\true = \true$ which is a tautology.
Subgoal~(\ref{ex-iii11}) cannot be simplified by $\RD$,
because of the constraints in rewrite rules.
Subgoal~(\ref{ex-iii11}) does not contain any induction variable, 
and therefore, it cannot be further instantiated. 
So, the proof stops without a conclusion.
Hence, we fail to prove Conjecture~(\ref{ex-goal1}) with test set induction technique.

Concerning Conjecture~(\ref{ex-goal-occur}), the specification of the rules 
for $\occurbis$ contains membership constraints. 
This kind of specification is not supported by the current test-set induction procedures.

\subsection{Constrained Grammars based Induction} \label{sec:ex-grammars}
As discussed above, we need to add appropriate constraints while instantiating
the induction goals. This is precisely what constrained tree grammars do. 

Our procedure, presented in Section~\ref{se-inference},  roughly works as follows:
given a conjecture $C$
we try to apply the production rules of the normal form grammar to $C$ 
(instead of instantiating by terms of a test set)
as long as the depth of the clauses obtained is smaller or equal to the maximal
depth of a left-hand-side of $\RD$.
All clauses obtained must be reducible by $\R$,
or by induction hypotheses or either by others conjectures
not yet proved and smaller than $C$.
If this succeeds, the clauses obtained after simplification
are considered as new subgoals 
and for their proof we can use $C$ as an induction hypothesis.
Otherwise, the procedure fails and 
we have established a disproof under some assumptions on $\R$.

In order to prove Conjecture~(\ref{ex-goal1}),
we constraint the variable $y$ of this clause 
to belong to one of the languages defined
by non-terminals (of a compatible sort) of the normal form grammar $\G_\NF(\RC)$.
This is not restrictive since $\RC$ is terminating and $\R$ is sufficiently complete.
\begin{align}
 sorted(y) & = \true \cons[y \deriv \dxstate{x}{\Set}]          
 \tag{\ref{ex-goal1}.a}  \label{ex-decore1}\\
 sorted(y) & = \true  \cons[y \deriv \state{\ins(x_1, y_1)}]
 \tag{\ref{ex-goal1}.b} \label{ex-decore2}
\end{align}

Let us apply the above principle to the proof of Conjecture~(\ref{ex-goal1}).
The application of the production rules of the grammar 
to~(\ref{ex-decore1}) and~(\ref{ex-decore2}) returns:
\begin{align}
sorted(\emptyset) & =  \true
 \tag{\ref{ex-i11}'}\\
\sorted(\ins(x_1, \emptyset)) & = \true  \cons[x_1 \deriv \dxstate{x}{\Nat}]
  \tag{\ref{ex-ii11}'} \\
\sorted(\ins(x_1, \ins(x_2, \emptyset))) & = \true \cons[x_1, x_2 \deriv \dxstate{x}{\Nat}, x_1 \prec x_2] 
  \tag{\ref{ex-iii11}'} \\
\sorted(\ins(x_1, \ins(x_2, y_2))) & = \true \tag{\ref{ex-iii11}''}\\
 & \cons[x_1, x_2, x_3 \deriv \dxstate{x}{\Nat}, y_2 \deriv \state{\ins(x_3, y_3)}, 
         x_1 \prec x_2, x_2 \prec x_3] \notag
\end{align}
For obtaining~(\ref{ex-ii11}'), (\ref{ex-iii11}') and (\ref{ex-iii11}''),
several steps of application of the production rules
of the grammar are necessary.
Subgoals~(\ref{ex-i11}'),~(\ref{ex-ii11}')  are simplified by $\RD$ into a tautology, 
like in Section~\ref{sec:ex-test-set}.
Unlike Section~\ref{sec:ex-test-set}, Subgoal~(\ref{ex-iii11}') can now be simplified
using the rule~(\ref{eq:sorted-2})  of $\RD$, because of its constraint $x_1 \prec x_2$.
Moreover, Subgoal~(\ref{ex-iii11}'') can be reduced by 
the rule~(\ref{eq:sorted-2}) into:
\[ \sorted(\ins(x_2, y_2)) = \true 
   \cons[x_2, x_3 \deriv \dxstate{x}{\Nat}, y_2 \deriv \state{\ins(x_3, y_3)}, x_2 \prec x_3] \]
This latter subgoal can be itself simplified
into $\true = \true$ by~(\ref{ex-goal1}),
used here as an induction hypothesis.
This terminates the inductive proof of~(\ref{ex-goal1}).

\bigskip

For the proof of Conjecture~(\ref{ex-goal-occur}), the situation is more complicated.
The decoration of the variables of~(\ref{ex-goal-occur}) with non terminals
of the grammar $\G_\NF(\RC)$ returns:
\begin{align}
x \occurbis y & = x \occur y 
  \cons[x \deriv \dxstate{x}{\Nat}, y \deriv \dxstate{x}{\Set} ]
  \tag{\ref{ex-goal-occur}.a}  \label{ex2-deco1}\\
x \occurbis y & = x \occur y 
  \cons[x \deriv \dxstate{x}{\Nat}, y \deriv \state{\ins(x_1, y_1)} ]
  \tag{\ref{ex-goal-occur}.b} \label{ex2-deco2}
\end{align}
The application of the production rules of $\G_\NF(\RC)$ to these clauses gives:
\begin{align}
x \occurbis \emptyset & =  x \occur \emptyset 
  \label{ex2-i} \\ 
x \occurbis \ins(x_1, \emptyset) & =  x \occur \ins(x_1, \emptyset) 
  \cons[x, x_1 \deriv \dxstate{x}{\Nat}]
  \label{ex2-ii}\\ 
x \occurbis \ins(x_1,  \ins(x_2, \emptyset)) & =  x \occur \ins(x_1,  \ins(x_2, \emptyset))
 \cons[x, x_1, x_2 \deriv \dxstate{x}{\Nat}, x_1 \prec x_2]  
 \label{ex2-iii} \\ 
x \occurbis \ins(x_1, \ins(x_2, y_2)) & =  x \occur \ins(x_1,  \ins(x_2, y_2)) \notag \\
 & \cons[x, x_1, x_2 \deriv \dxstate{x}{\Nat}, y_2 \deriv \state{\ins(x_3, y_3)}, x_1 \prec x_2, x_2 \prec x_3]  
 \label{ex2-iv} 
\end{align}
The clause~(\ref{ex2-i})  is reduced, 
using~(\ref{eq:occur2-empty}) and~(\ref{eq:occur-empty}),
to the tautology $\false = \false$.

In order to simplify (\ref{ex2-ii}), 
we restrict to the cases corresponding to the constraints of
the rules~(\ref{eq:occur2-eq}),~(\ref{eq:occur2-leq}) and~(\ref{eq:occur2-geq}).
This technique, called \emph{Rewrite Splitting}, is defined formally in Section~\ref{se-inference}.
We obtain respectively:
\begin{align}
\true & =  x \occur \ins(x_1, \emptyset) \cons[x_1 \deriv \dxstate{x}{\Nat}, x \approx x_1]
  \tag{\ref{ex2-ii}.1} \label{ex2-ii.1}\\ 
\false & =  x \occur \ins(x_1, \emptyset)
  \cons[x_1 \deriv \dxstate{x}{\Nat}, \ins(x_1, \emptyset) \deriv \state{\ins(x_2, y_2)}, x \prec x_2]
  \tag{\ref{ex2-ii}.2} \label{ex2-ii.2}\\ 
x \occurbis \emptyset & =  x \occur \ins(x_1, \emptyset) \cons[x_1 \deriv \dxstate{x}{\Nat}, x_1 \prec x]
  \tag{\ref{ex2-ii}.3} \label{ex2-ii.3}
\end{align}
Note that the constraint in (\ref{ex2-ii.2}) implies that $x_1 = x_2$.
All these subgoal are reduced into tautologies $\true = \true$ or $\false = \false$
using respectively the following rules of $\RD$:
\begin{itemize}
\item (\ref{eq:occur-eq}) for (\ref{ex2-ii.1}), 
\item (\ref{eq:occur-neq}) and~(\ref{eq:occur-empty}) for (\ref{ex2-ii.2}) (with $x_1 = x_2$),  
\item (\ref{eq:occur2-empty}) for the left member of (\ref{ex2-ii.3}),
and  (\ref{eq:occur-neq}) then~(\ref{eq:occur-empty}) for its right member.
\end{itemize}

The subgoal (\ref{ex2-iii}) 
is also treated by \emph{Rewrite Splitting} with the
rules (\ref{eq:occur2-eq}), (\ref{eq:occur2-leq}), (\ref{eq:occur2-geq}) of $\RD$, 
similarly as above. 


Let us now finish the proof of Conjecture~(\ref{ex-goal-occur}), 
with the subgoal~(\ref{ex2-iv}).
By rewrite splitting with the
rules (\ref{eq:occur2-eq}), (\ref{eq:occur2-leq}), (\ref{eq:occur2-geq}), we obtain:
\begin{multline}
\true  =  x \occur \ins(x_1,  \ins(x_2, y_2)) \notag \\
  \cons[x, x_1, x_2, x_3 \deriv \dxstate{x}{\Nat}, y_2 \deriv \state{\ins(x_3, y_3)}, 
             x_1 \prec x_2, x_2 \prec x_3, x \approx x_1]  
 \tag{\ref{ex2-iv}.1} \label{ex2-iv.1} 
\end{multline}
\begin{multline}
\false  =  x \occur \ins(x_1,  \ins(x_2, y_2)) \notag \\
\left\lcons
\begin{array}{l} 
  x, x_1, x_2, x_3, x_4 \deriv \dxstate{x}{\Nat}, y_2 \deriv \state{\ins(x_3, y_3)}, x_1 \prec x_2, x_2 \prec x_3, \\
  \ins(x_1,  \ins(x_2, y_2)) \deriv \state{\ins(x_4, y_4)}, x \prec x_4
\end{array}%
\right\rcons                
 \tag{\ref{ex2-iv}.2} \label{ex2-iv.2} 
\end{multline}
\begin{multline}
x \occurbis \ins(x_2, y_2)  =  x \occur \ins(x_1,  \ins(x_2, y_2)) \notag \\
  \cons[x, x_1, x_2, x_3 \deriv \dxstate{x}{\Nat}, y_2 \deriv \state{\ins(x_3, y_3)}, x_1 \prec x_2, x_2 \prec x_3, x_1 \prec x]  
 \tag{\ref{ex2-iv}.3} \label{ex2-iv.3} 
\end{multline}

The subgoal~(\ref{ex2-iv.1}) is simplified by (\ref{eq:occur-eq}) into the tautology $\true = \true$.

The subgoal~(\ref{ex2-iv.3}) is simplified by (\ref{eq:occur-neq}) into:
\begin{equation}
 x \occurbis \ins(x_2, y_2)  =  x \occur \ins(x_2, y_2)
  \cons[x, x_2, x_3 \deriv \dxstate{x}{\Nat}, y_2 \deriv \state{\ins(x_3, y_3)}, 
             x_2 \prec x_3 ] 
\label{ex2-iv.3.1}
\end{equation}             
At this point, we are allowed to use the goal~(\ref{ex-goal-occur}) as an induction hypothesis 
since we have perform a reduction step on the subgoals.
A simplification of~(\ref{ex2-iv.3.1}) using~(\ref{ex-goal-occur}) gives the tautology:
\[
 x \occurbis \ins(x_2, y_2)  =  x \occurbis \ins(x_2, y_2)
  \cons[x, x_2, x_3 \deriv \dxstate{x}{\Nat}, y_2 \deriv \state{\ins(x_3, y_3)}, 
             x_2 \prec x_3 ] 
\]

For the subgoal~(\ref{ex2-iv.2}), note that in the constraints, $x \prec x_4$ implies $x \prec x_1$.
Hence~(\ref{ex2-iv.2}) can be simplified by~(\ref{eq:occur-neq}) into:
\( 
\false  =  x \occur \ins(x_2, y_2)   \cons[\ldots] 
\).
A simplification of the above subgoal using~(\ref{ex-goal-occur}) (as an induction hypothesis) 
gives:
\( \false  =  x \occurbis \ins(x_2, y_2) \cons[\ldots] \).
The above subgoal has the same constraints as~(\ref{ex2-iv.2}),
and it can be observed that this constraint implies $x \prec x_2$. 
Therefore, 
we can simplify this subgoal using~(\ref{eq:occur2-leq}) into the tautology $\false = \false$.

In conclusion, Conjecture~(\ref{ex-goal-occur})  can be proved with our approach based on constrained grammars without
the addition of any lemmas.

\subsection{Proof with ACL2}
\label{sec:acl2}
A proof of Conjecture~(\ref{ex-goal-occur}) was done by 
Jared Davis\footnote{Jared Davis, personal communication.}
with the \textsf{ACL2} theorem prover, using his library \textsf{osets} for finite set theory~\cite{Davis-osets}.
In this library, sets are implemented on fully ordered lists (wrt an ordering \verb!<<!).
The definition in \textsf{osets} of a function \verb!insert a X!,
for insertion of an element \verb!a! to a list \verb!X! is the same as the above axioms of $\RC$:
\begin{verbatim}
(defun insert (a X)
  (declare (xargs :guard (setp X)))
  (cond ((empty X) (list a))
        ((equal (head X) a) X)
        ((<< a (head X)) (cons a X))
        (t (cons (head X) (insert a (tail X))))))
\end{verbatim}
It refers to the functions \verb!head!  and \verb!tail!
which return respectively the first (smallest) element in list 
(the LISP \verb!car!) and the rest of a list (LISP \verb!cdr!).
The guard \verb!(setp X)! ensures that \verb!X! is 
a fully ordered list without duplication.

The library \textsf{osets} contains a definition of membership 
similar to the axioms of 
(\ref{eq:occur-empty}--\ref{eq:occur-neq}) of $\RD$ for the definition of $\in$:
\begin{verbatim}
(defun in (a X)
  (declare (xargs :guard (setp X)))
  (and (not (empty X))
       (or (equal a (head X))
	   (in a (tail X)))))
\end{verbatim}

Next, our defined function $\occurbis$ becomes the following \verb$inb$:
\begin{verbatim}
(defun inb (a X)
 (declare (xargs :guard (setp X)))
 (and (not (empty X))
      (not (and (setp X) (<< a (head X))))
      (or (equal a (head X))
          (inb a (tail X)))))
\end{verbatim}

\noindent The conjecture~(\ref{ex-goal-occur}) becomes:
\begin{verbatim}
(defthm in-is-inb
 (equal (in a X)
        (inb a X)))
\end{verbatim}

Using the \textsf{osets} library, the system proved everything except the following subgoal:
\begin{verbatim}
   (IMPLIES (AND (NOT (EMPTY X))
                 (SETP X)
                 (<< A (HEAD X)))
            (EQUAL (IN A X) (INB A X))).
\end{verbatim}

The following lemma permits to finish the proof:
\begin{verbatim}
  (defthm head-minimal
    (implies (<< a (head X))
	     (not (in a X)))
    :hints(("Goal" 
	    :in-theory (enable primitive-order-theory))))
\end{verbatim}

The lemma \verb!head-minimal!
was not available to users of the library \textsf{osets}.
It will be incorporated (together with the technical lemma for its proof) 
in the  appropriate file of the \textsf{osets} library.
%
\begin{verbatim}
(local (defthm lemma
	   (implies (and (not (empty X))
			 (not (equal a (head X)))
			 (not (<< a (head (tail X))))
			 (<< a (head X)))
		    (not (in a X)))
	   :hints(("Goal" 
		   :in-theory (enable primitive-order-theory)
		   :cases ((empty (tail X)))))))
\end{verbatim}

Note that this proof uses several theorems and hints
included in the \textsf{osets} library.
Without this library, the \textsf{ACL2} theorem prover
would need the addition of several key lemmas and hints.
For finding them, the user would be required both
experience and a good understanding of the problem and how to solve it.

\subsection{Assisted Proof with SPIKE}
\label{sec:spike}
Conjecture~(\ref{ex-goal-occur}) was proved 
with the last version of \textsf{SPIKE}
by Sorin Stratulat\footnote{Sorin Stratulat, personal communication.}

Since \textsf{SPIKE} does not support constrained axioms, 
constraints are expressed as conditions.
The specification of $\sorted$ becomes:
\begin{verbatim}
sorted(Nil) = true;
sorted(ins(x, Nil)) = true;
x1 <= x2 = true => sorted(ins(x1, ins(x2, y))) = sorted(ins(x2, y));
x1 <= x2 = false => sorted(ins(x1, ins(x2, y))) = false;
\end{verbatim}

\noindent The axioms for $\occur$ and $\occurbis$ are respectively:
\begin{verbatim}
in(x1, Nil) = false;
x1 = x2 => in(x1, ins(x2, y)) = true;
x1 <> x2 => in(x1, ins(x2, y)) = in(x1, y);
\end{verbatim}
\noindent and
\begin{verbatim}
in'(x1, Nil) = false;
x1 = x2 => in'(x1, ins(x2, y)) = true;
x2 < x1 = true => in'(x1, ins(x2, y)) = in'(x1, y);
x1 < x2 = true, osetp(ins(x2,y)) = true => in'(x1, ins(x2, y)) = false;
x1 < x2 = true, osetp(ins(x2,y)) = false => in'(x1, ins(x2, y)) = in'(x1, y);
\end{verbatim}

\noindent The unary predicate \verb|osetp| characterizes 
ordered lists. It is defined by the following axioms.
\begin{verbatim}
osetp(Nil) = true;
osetp(ins(x, Nil)) = true;
osetp(ins(x, ins(y, z))) = and(x < y, osetp(ins(y, z)));
\end{verbatim}

With this predicate, the conjecture is expressed as follows.
\begin{verbatim}
osetp(y) = true => in(x, y) = in'(x, y);
\end{verbatim}

A particular user specified strategy and
the following additional lemmas were necessary 
for the termination of the proof with \textsf{SPIKE}.
The three first lemma are natural, the last one is less intuitive.
\begin{verbatim}
osetp(y) = true => sorted(y) = true;
osetp(ins(u1, u2)) = true => osetp(u2) = true;
u1 < u2 = true, u2 < u3 = true => u1 < u3 = true;
osetp(ins(u4, u5)) = true , u2 < u4 = true  => in(u2, u5) = false;
\end{verbatim}

\subsection{Verification of Trace Properties} \label{sec:protocols}
We have seen in the previous sections how membership constraints
can be used in the axioms of $\R$ for the specification of
operations on complex data structures, and how our method can handle it.
Our procedure can also handle membership constraints in the conjecture.
This feature can be used for instance in order to  restrict some terms to a particular pattern.
It is very useful in the context of the verification of infinite systems,
in order to express that a trace of events belongs to a (regular) set of bad traces.

In~\cite{BouhoulaJacquemard07arspa} we follow this approach 
for the verification of security properties of cryptographic protocols,
using an adaptation of the procedure of this paper in order to deal with
specifications which are not necessarily confluent and sufficiently complete.
In this section we wont describe in full details the specification of~\cite{BouhoulaJacquemard07arspa} 
but we shall roughly describe the main lines of the approach.
Consider the following conjecture:
\begin{equation} \label{eq:trace}
\mathit{trace}(y) \neq \mathit{true} \cons[ y \deriv \dxstate{x}{\List}, y \deriv  \dxstate{x}{\bad}] 
\end{equation}
Here, the membership constraint $y \deriv \dxstate{x}{\List}$ restricts $y$ to be generated by the
non terminal $\dxstate{x}{\List}$ of the normal form constrained tree grammar.
It means that $y$ is a constructor term  in normal form (as in the above example of sorted lists)
representing a list of events of a system.
The second membership constraint $y \deriv  \dxstate{x}{\bad}$
further restricts $y$ to belong to a regular tree language representing faulty traces
(traces which lead to a state of the system corresponding to a failure, an attack for instance).
Finally the clause $\mathit{trace}(y) \neq \mathit{true}$ expresses that $y$ is not 
a trace of the system.
Hence the above conjecture~(\ref{eq:trace}) means that every bad trace is not reachable. 

The defined function $\mathit{trace}$ can be  specified using constrained conditional rewrite rules.
For instance, in~\cite{BouhoulaJacquemard07arspa}, 
we follow the approach of Paulson~\cite{Paulson98} for the inductive 
specification of the messages exchanges of the protocol,
and of the actions of the insecure communication environment.
Note also that we extends this model with equations specifying
the cryptographic operations, like the following non-left-linear equation for the decryption operator $\sdec$ in a symmetric
cryptosystem:
\(
\sdec\bigl(\senc(x, y), y\bigr ) \to x 
\).
These axioms, sometimes referred as  \emph{explicit destructors} equations, 
permit a strict extension of the verification model 
(they allow strictly more attacks on protocols)
and they are specified as constructor equations of $\RC$ in our model.

\end{RR}


\section{Constrained Tree Grammars} \label{se-grammar}

Constrained tree grammars have been introduced in~\cite{ComonThese88}, 
in the context of automated induction.
The idea of using such formalism for induction theorem proving 
is also in \textit{e.g.}~\cite{BouhoulaJouannaud01,Comon01handbook},
because it is known that they can generate the languages of normal-forms
for arbitrary term rewriting systems.

\begin{RR}
In this paper, we push the idea one step beyond with 
a full integration of tree grammars with constraints in our 
induction procedure.
Indeed, constrained tree grammars are used here:
\begin{itemize}
\item[i.] as an induction scheme (instead of test-sets), 
for triggering induction steps by instantiation of subgoals using production rules,
\item[ii.] as a decision procedure for checking deletion criteria,
including tests like ground irreducibility or validity in restricted cases, 
as long as emptiness is decidable.
\item[iii.] for  the definition and treatment of constraints of membership in fixed tree languages,
in particular languages of normal forms.
\end{itemize}
\end{RR}

\noindent We present in this section the definitions and results suited to our purpose.
\begin{definition} \label{def:grammar}
A \emph{constrained grammar} $\G = (Q, \Delta)$ is given by:
1. a finite set $Q$ of non-terminals of the form $\state{u}$, where $u$ is a linear term of $\T(\F,\X)$,
2. a finite set $\Delta$ of production rules of the form 
\(\state{v} := f(\state{u_1},\ldots,\state{u_n}) \cons[c]\) 
where  $f \in \F$, $\state{v}$, $\state{u_1}$,\ldots, $\state{u_n} \in Q$ 
(modulo variable renaming) and $c$ is a constraint.
\end{definition}
The non-terminals are always considered modulo variable renaming.
In particular, we assume \emph{wlog} (for technical convenience)
that the above term $f(u_1,\ldots,u_n)$ is linear
and that  $\var(v) \cap \var(f(u_1,\ldots,u_n)) = \emptyset$.

\subsection{Languages of Terms} \label{sb-production}
We associate to a given constrained grammar $\G = (Q, \Delta)$ a finite set
of new unary predicates of constraint of the form $.\, \deriv \state{u}$, 
where $\state{u} \in Q$ (modulo variable renaming). 
Constraints of the form $t \deriv \state{u}$  called
\emph{membership constraints} and their interpretation is given below.
The production relation between constrained terms $\vdash^y_{\G}$ 
is defined by:
\[ 
t[y] \cons[y \deriv \state{v} \land d] 
\vdash^y_{\G}
t[f(y_1,\ldots,y_n)] 
\cons[ y_1 \deriv \state{u_1}  \land \ldots \land y_n \deriv \state{u_n} \land c \land d\tau ] 
\] 
if there exists $\state{v} := f(\state{u_1},\ldots,\state{u_n}) \cons[c] \in \Delta$  
such that $f(u_1,\ldots,u_n) = v\tau$, 
and $y_1$,\ldots,$y_n$ are fresh variables.
The variable $y$, constrained to be in the language defined by the non-terminal
$ \state{v}$ is replaced by $f(y_1,\ldots, y_n)$ where the variables $y_1,\ldots, y_n$
are constrained to the respective languages of non-terminals  $\state{u_1},\ldots,\state{u_n}$.
The union of the relations $\vdash^y_\G$ for all $y$ is 
denoted $\vdash_\G$ and
the reflexive transitive and transitive closures of the relation
$\vdash_\G$ are respectively denoted by $\vdash_\G^*$ and~$\vdash_\G^+$
($\G$ may be omitted).
\begin{definition} \label{def-generated} \label{def:language}
The language $L(\G,\state{u})$ is the set of ground terms $t$
\emph{generated} by 
\begin{RR}
a constrained grammar 
\end{RR}
$\G$ from a non-terminal $\state{u}$, 
\textit{i.e.} 
such that $y\cons[y \deriv \state{u}] \vdash^* t\cons[c]$ where $c$ is satisfiable. 
\end{definition}
\noindent Given $Q' \subseteq Q$, 
we write $L(\G, Q') = \bigcup_{\state{u}\in Q'}L(\G,\state{u})$ 
and $L(\G) = L(\G, Q)$.
%
%
\noindent Given a constrained grammar $\G = (Q, \Delta)$, 
we can now define $\sol(t \deriv \state{u})$, 
where $\state{u} \in Q$,
as $\{ \sigma \mid t\sigma \in L(\G,\state{u}) \}$. 

\begin{ABS}
\begin{example} \label{ex:grammar-nat}
Let us consider the sort $\Nat$ of natural integers built with the
constructor symbols $0$ and $s$.
These terms are generated by the grammar with production rules
$\xstate{x}{\Nat} := 0$ and
$\xstate{x}{\Nat} := s(\xstate{x_2}{\Nat})$.\finex
\end{example}
\end{ABS}

\begin{RR}
\begin{example}
With the normal grammar of 
Section~\ref{sec:ex-grammars},
denoted $\G$ in this example, we have:
$L(\G, \xstate{x}{\Bool}) = \{ \true, \false \}$, 
$L(\G, \xstate{x}{\Nat}) = \{ 0, s^n(0) \mid n > 0 \}$, 
$L(\G, \xstate{x}{\Set}) = \{ \emptyset \}$, 
$L(\G, \state{\ins(x_1, x_2)}) = \{ \ins(s^{n_1}(0), \ins(\ldots, \ins(s^{n_k}(0)))) \mid
  k \geq 1, n_1 < \ldots < n_k \}$, 
\finex
\end{example}
\end{RR}

\begin{RR}
Note that every regular tree language $L$ can be generated by a constrained tree grammar 
following Definitions~\ref{def:grammar} and~\ref{def:language},
with production rules of the form:
$\dxstate{x}{S} := f(\dxstate{x}{S_1}, \ldots, \dxstate{x}{S_1})$ 
where $S_1, \ldots, S_n, S$ are new sorts representing the non terminals 
of a regular tree grammar generating $L$.
\end{RR}

\begin{RR}
The intersection between the language generated by a constrained tree grammar 
(in some non-terminal) and a regular tree language
is generated by a constrained tree grammar.
The constrained grammar for the intersection is built with a product construction.
\end{RR}

\subsection{Languages of Normal Forms} \label{sb-nf}
\begin{ABS}
In~\cite{BouhoulaJacquemard08rrlsv07},
we present the automatic construction of a constrained grammar
$\G_\NF(\RC) = (Q_\NF(\RC),\Delta_\NF(\RC))$
which
\end{ABS}
\begin{RR}
The constrained grammar $\G_\NF(\RC) = (Q_\NF(\RC),\Delta_\NF(\RC))$ 
defined in Figure~\ref{nfgf}
\end{RR}
generates the language of ground $\RC$-normal forms.
Its construction is a generalization of the one 
of~\cite{ComonJacquemard03}.
Intuitively, it corresponds to the complementation and completion 
of a grammar for $\RC$-reducible terms
(such a grammar does mainly pattern matching of left members of rewrite rules),
where every subset of states (for the complementation)
is represented by the most general common instance of its elements (if they are unifiable).
\begin{RR}
For purpose of the the construction of $\G_\NF(\RC)$, a new sort $\re$ is added 
to $\mathcal{S}$,  
(the sort of reducible terms), 
and hence also a new variable $x^\re$.
\end{RR}
\begin{RR}
\begin{mafigure}[ht]
\begin{tabular}{l} 
\quad 
\(
\mathcal{L}(\RC)  =  
\left\{
\begin{array}{rcl}
 {u} & | &  u \mbox{~is~a~strict~subterm~of~} l \mbox{~for  some~} l \to r \cons[c] \in \RC \\
 & & \mbox{or~} u \mbox{~is~a~subterm~of~} l \mbox{~if~} c \mbox{~is empty} 
\end{array}
\right\} 
\) \\      
\quad 
\(
Q_\NF(\RC) = \left\{     
\begin{array}{l}     
\state{\mgi(t_1,\ldots,t_n)} \mid \{ {t_1},\ldots,{t_n} \} \mbox{~is a maximal}\\       
\multicolumn{1}{r}{\mbox{subset of~} \mathcal{L}(\RC) \mbox{~s.t.~} t_1,\ldots,t_n \mbox{~are unifiable}}       
\end{array}  
\right\} 
\uplus \bigl\{ \xstate{x}{{S}} \bigm| S \in \mathcal{S} \bigr\}
\)\\  
~\\  
\quad $\Delta_\NF(\RC)$ contains:\\   
\quad every  $\xstate{x}{\re} := f(\state{u_1},\ldots,\state{u_n}) \cons$ such that one
 of the $u_i$ at least is $x^\re$,\\     
\quad every $\xstate{x}{\re} := f(\state{u_1},\ldots,\state{u_n}) \cons[c]$ and
 every $\state{t} := f(\state{u_1},\ldots,\state{u_n}) \cons[\neg c]$\\     
\quad\quad  such that  $f \in \F$ with profile $S_1,\ldots,S_n \to S$\\        
\quad\quad and $\state{u_1} ,\ldots, \state{u_n} \in Q_\NF(\RC)$, $u_1,\ldots,u_n$
 have respective sorts $S_1,\ldots,S_n$\\      
\quad\quad \( t = \mgi\bigl\{ u \bigm| \state{u} \in Q_\NF(\RC)\mbox{~and~}
                                              u \mbox{~matches~} f(u_1,\ldots,u_n) \bigr\}\)\\ 
\quad\quad \( \displaystyle c \equiv \bigvee_{\scriptstyle l \to r  \cons[e] \in \RC,\  \scriptstyle f(u_1,\ldots,u_n) = l\theta } e\theta \)
\end{tabular} 
\caption{Constrained grammar $\G_\NF(\RC)$  for $\RC$-normal forms  \label{nfgf} \label{fi-nf}} 
\end{mafigure}
\end{RR}
\begin{ABS}
Due to space limitations, we cannot describe the general construction of $\G_\NF(\RC)$
here but we rather present the case where $\RC$ contains the constructor axioms 
given in introduction.
\begin{example} \label{ex:grammar-nf}
Let $\RC$ contain the axioms (\ref{eq:ins-double}) and (\ref{eq:ins-ord}) given in introduction.
Let $\G_\NF(\RC)$ contain the two productions rules given Example~\ref{ex:grammar-nat}
and
$\xstate{x}{\Set}  := \emptyset$,
$\state{\ins(x, y)} := \ins(\xstate{x}{\Nat},\xstate{x}{\Set})$
(for singleton lists)
and 
$\state{\ins(x, y)} \quad := \quad 
\ins(\dxstate{x}{\Nat}, \state{\ins(x_2, y_2)})\cons[x^\Nat \prec x_2] $.
Note that the variables in the non terminal $ \state{\ins(x_2, y_2)}$ in the right member
of the latter production rule have been 
renamed in order to be distinguished from the variables in the non terminal in the left member.
This grammar $\G_\NF(\RC)$ generates the set of ground constructor terms in normal-form for $\RC$.
They represent the ordered lists of natural numbers (of sort $\List$).  
\finex\end{example}
\end{ABS}
\begin{RR}
An example of a constrained grammar for $\RC$-normal forms constructed this way
was given in Section~\ref{sec:sorted-lists-grammar}.
\end{RR}

\begin{RR}
\begin{lemma} \label{lem:GNFRC}
For every term $t \in \T(\C)$, 
$t \in L(\G_\NF(\RC), \state{u})$ for some $\state{u} \in  Q_\NF(\RC) \setminus \{ \xstate{x}{\re} \}$ 
iff $t$ is an  $\RC$-normal form.
\end{lemma}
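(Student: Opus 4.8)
The plan is to prove a stronger, more informative invariant by structural induction on the ground constructor term $t$, from which the stated equivalence follows immediately. Inducting on the lemma as stated is awkward, because the production rules of $\G_\NF(\RC)$ refer to the specific child non-terminals, so I must track not merely \emph{whether} $t$ lands in a non-$\re$ state, but \emph{which} state, together with the precise set of left-hand-side subpatterns that $t$ matches. Concretely, for every $t = f(t_1,\ldots,t_n) \in \T(\C)$ of sort $S$ I would establish: (i) there is a \emph{unique} $\state{u} \in Q_\NF(\RC)$ with $t \in L(\G_\NF(\RC),\state{u})$; (ii) this $u$ equals $x^\re$ iff $t$ is $\RC$-reducible; (iii) if $u \neq \xstate{x}{\re}$ then $t$ is a ground instance of $u$ and, for every $v \in \mathcal{L}(\RC)$, $t$ is an instance of $v$ iff $u$ is an instance of $v$ (i.e. $u$ records exactly the left-hand-side subpatterns matched by $t$). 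The lemma is then the combination of (i) and (ii): $t$ is generated from some non-$\re$ non-terminal iff $t$ is not reducible, that is, iff $t$ is an $\RC$-normal form.

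The core technical ingredient, proved along the way and used in the inductive step, is a \emph{matching lemma}: under hypothesis (iii) for the immediate subterms $t_1,\ldots,t_n$ (landing in states $\state{u_1},\ldots,\state{u_n}$), a pattern $w = f(w_1,\ldots,w_n)$ whose immediate subterms $w_i$ all lie in $\mathcal{L}(\RC)$ — in particular, any left-hand side of a rule — matches $t$ (i.e. $t$ is a ground instance of $w$) iff $f(u_1,\ldots,u_n) = w\theta$ for some $\theta$, i.e. iff $w$ matches $f(u_1,\ldots,u_n)$ at the pattern level. This reduces ground matching of the (linear) left-hand sides against $t$ to the purely syntactic matching used in the construction of $\Delta_\NF(\RC)$. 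Linearity of left-hand sides (a standing assumption of the paper) is essential: it lets the match decompose into independent matches of $w_i$ against $t_i$, which by hypothesis (iii) hold iff $w_i$ matches the recorded state $u_i$, the $u_i$ being taken variable-disjoint.

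With the matching lemma in hand, the inductive step splits on the children. If some $t_i$ is reducible then, by (ii) for $t_i$, its state is $\xstate{x}{\re}$, the propagation rule $\xstate{x}{\re} := f(\state{u_1},\ldots,\state{u_n})$ applies and sends $t$ to $\xstate{x}{\re}$, consistently with $t$ being reducible; uniqueness follows because no non-$\re$ production admits an $\re$ child, as $\re$ is not an argument sort of any constructor. Otherwise all $t_i$ are normal forms and the only applicable productions are the paired $\xstate{x}{\re} := f(\state{u_1},\ldots,\state{u_n})\cons[c]$ and $\state{t'} := f(\state{u_1},\ldots,\state{u_n})\cons[\neg c]$ with $c = \bigvee_{l \to r\cons[e]\in\RC,\ f(u_1,\ldots,u_n)=l\theta} e\theta$. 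By the matching lemma the disjuncts of $c$ range exactly over the rules whose left-hand side matches $t$, so the instance of $c$ induced by the derivation evaluates to true iff $t$ is reducible at the root; since $t$ is ground the accumulated membership constraint is satisfiable iff it is true, whence $t$ lands in $\xstate{x}{\re}$ iff it is root-reducible (hence reducible), and otherwise in $\state{t'}$. A further application of the matching lemma shows that $t' = \mgi\{u \mid \state{u}\in Q_\NF(\RC),\ u \text{ matches } f(u_1,\ldots,u_n)\}$ records precisely the subpatterns matched by $t$, which propagates invariant (iii). The base case of constants is the same argument with $n = 0$, where only root reducibility can occur.

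The main obstacle I expect is the precise bookkeeping of constraints along a derivation: one must argue that the constraint $c$ attached to a production, once instantiated through the sequence of production steps down to the ground children, evaluates exactly to the root-reducibility condition of $t$, and that the membership constraints $y_i \deriv \state{u_i}$ generated by $\vdash_\G$ are discharged correctly by the recursive structure. This is where the determinism of the grammar and the exact-matching content of invariant (iii) must be combined carefully; the linearity of left-hand sides and the maximality/unifiability condition defining $Q_\NF(\RC)$ — which guarantees that the required $\mgi$-state actually exists and is unique — are the facts that make this bookkeeping sound.
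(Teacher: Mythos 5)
Your proof is correct and follows essentially the same route as the paper: your invariant (iii) is exactly the paper's auxiliary Fact that a term generated from a non-$\xstate{x}{\re}$ non-terminal $\state{u}$ is an instance of $u$ and that $u$ records precisely the left-hand-side subpatterns matched, and both arguments then reduce the equivalence to root-reducibility via the complementary constraints $c$ and $\lnot c$ in the paired productions. The only differences are organizational: you merge the paper's separate Fact and two implication directions into a single strengthened structural induction (adding determinism as invariant (i)) and you make explicit, as a matching lemma, the step the paper dispatches with ``by construction.''
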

\begin{proof}
We shall use the following Fact, which can
be proved by a straightforward induction on the length
of the derivation $y\cons[y \deriv \state{u}] \vdash^* t\cons[c]$.
\begin{fact} \label{fac:GNFRC-nt}
For each $\state{u} \in  Q_\NF(\RC) \setminus \{ \xstate{x}{\re} \}$,
and each $t \in L(\G_\NF(\RC), \state{u})$, 
$t$ is an instance of $u$ and 
$u = \mgi\bigl\{ v \bigm| \state{u} \in  Q_\NF(\RC) \setminus \{ \xstate{x}{\re} \}
\mbox{~and~} $t$ \mbox{~is~an~instance~of~} v \bigr\}$.
\end{fact}

Let us now show the  'only if' direction
by induction on the length
of the derivation $y\cons[y \deriv \state{u}] \vdash^* t \cons[c']$
(where $c'$ is satisfiable).

If the length is $1$, then $t$ is a nullary symbol of $\C$,
and by construction $t$ is $\RC$-irreducible.

If \( y\cons[y \deriv \state{u}] \vdash 
     f(y_1,\ldots,y_n) \cons[y_1 \deriv \state{u_1\theta} \land
     \ldots \land y_n \deriv \state{u_n\theta} \land c\theta ]
     \vdash^* t \cons[c'] = f(t_1, \ldots,t_n)\cons[c'] \)
for some production rule 
$\state{u} := f(\state{u_1},\ldots,\state{u_n}) \cons[c] \in \Delta_\NF(\RC)$  
($\theta$ is a variable renaming by fresh variables),
then for every $i \in [1..n]$, $t_i \in L(\G_\NF(\RC), \state{u_i})$,
and $\state{u_i} \neq \xstate{x}{\re}$
(otherwise we would have $\state{u} = \xstate{x}{\re}$).
Hence, by induction hypothesis, every $t_i$ is a $\RC$-normal form.
Assume that $t$ is $\RC$-reducible (it must then be reducible at root
position),
and let $l \to r \cons[d] \in \RC$ be such that 
$t = l\tau$, $\tau \in \sol(d)$ and $l$ is maximum wrt subsumption
among the rules of $\RC$ satisfying these conditions.
By construction, $u = l$ and $c = \neg d\sigma \wedge c'$.
It follows from the satisfiability of $c'$ that $\tau \in \sol(c)$
(the variables of $c$ are instantiated by ground terms 
in the above grammar derivation).
This is in contradiction 
with $c = \neg d\sigma \wedge c'$ and $\tau \in \sol(d)$.

We show now the 'if' direction by induction on $t$.

If $t$ is a nullary function symbol of sort $S$ and is $\RC$-irreducible,
then $t$ is not the left-hand side of a rule of $\RC$, and
\( y\cons[y \deriv \xstate{x}{S}] \vdash t \).

If $t = f(t_1,\ldots,t_n)$ and is $\RC$-irreducible, then
every $t_i$ is $\RC$-irreducible for $i \in [1..n]$,
hence by induction hypothesis, 
$t_i \in L(\G_\NF(\RC), \state{u_i})$
for some $\state{u_i} \in  Q_\NF(\RC) \setminus \{ \xstate{x}{\re} \}$.
It means that for all $i \in [1..n]$,
there is a derivation of $\G_\NF(\RC)$ of the form
$y\cons[y \deriv \state{u_i}] \vdash^* t_i \cons[c_i]$.
By Fact~\ref{fac:GNFRC-nt}, every $t_i$ is an instance of $u_i$,
hence $t = f(u_1,\ldots,u_n) \tau$ for some ground substitution $\tau$.
If there is a production rule
$\state{u} := f(\state{u_1},\ldots,\state{u_n}) \cons[c] \in \Delta_\NF(\RC)$,
with $\state{u} \in  Q_\NF(\RC) \setminus \{ \xstate{x}{\re} \}$
and $\tau \in \sol(c)$,
then the following derivation is possible:
 \( 
 y\cons[y \deriv \state{u}] \vdash 
 f(y_1,\ldots,y_n) \cons[y_1 \deriv \state{u_1\theta} \land
 \ldots \land y_n \deriv \state{u_n\theta} \land c\theta ]
 \vdash^* t \cons[c'] 
\)
where $c'$ is satisfiable,
and $t \in L(\G_\NF(\RC), \state{u})$.
Assume that for every such production rule,
we have $\tau \not\in \sol(c)$.
It means by construction that
there is a rule $u \to r \cons[d] \in \RC$
such that $\tau \in \sol(d)$, hence 
that $t$ is $\RC$-reducible, a contradiction.
\hfill~$\Box$
\end{proof}

Using the observation that every ground constructor term is generated by 
$\G_\NF(\RC)$, we obtain as a corollary that  
$t \in L(\G_\NF(\RC), \xstate{x}{\re})$ iff $t$ is $\RC$-reducible.
\end{RR}


\section{Inference System}  \label{sec-inference} \label{se-inference}
In this section, we present an inference system 
for our inductive theorem proving procedure. 
\begin{ABS}
The principle is the following:
given a goal (conjecture) $C$, 
we use the grammar $\G_\NF(\RC)$ of Section~\ref{sb-nf}
in order to expand $C$ into some subgoals.
%
All the generated subgoals must then either be deleted, following some criteria,
or be reduced, using axioms or induction hypotheses,
or conjectures not yet proved, providing that they are smaller
than the goal to be proved. Reduced subgoals become then new conjectures
and $C$ becomes an induction hypothesis.

The deletion criteria 
include tautologies, forward subsumption, 
clauses with an unsatisfiable constraint, 
and constructor clause that can be detected as inductively valid,
under some conditions defined precisely below.
The decision of these criteria, using $\G_\NF(\RC)$, 
is discussed in Section~\ref{sb-scheme}.

The reduction of subgoals is performed 
with the rules defined in Sections~\ref{sb-srf} and~\ref{sb-src}.
If a subgoal generated cannot be deleted or reduced, then
the procedure stop with a refutation
(the initial goal is not an inductive theorem of $\R$).
If every subgoal is deleted, then the initial goal is an inductive theorem of $\R$.

The procedure may not terminate
(the conditions in inference rules other than the deletion criteria
are recursive calls of the procedure of the form $\R \modelsind \mathit{subgoal}$).
In this case appropriate lemmas should be added by the user in
order to achieve termination.

\end{ABS}
\begin{RR}
Let us first summarize the key steps of our procedure 
with the following pseudo-algorithm%
\footnote{Note that it is only a simplified version of the procedure, 
for presentation purpose, in order to give an intuition
of how the procedure operates.}.
The complete inference system,
introduced by the examples of Section~\ref{se-example},
is presented in details in Subsections~\ref{sb-srf}, \ref{sb-src} and~\ref{sb-inferences}.

\medskip
We start with a conjecture (goal) $G$ (a constrained clause) and a rewrite 
system (with conditions and constraints) $\R$, 
with a subset $\RC$ of constructor constrained (unconditional) rewrite rules.
\begin{enumerate}
\item \label{it-method-grammar} compute the constrained tree grammar
 $\G_\NF(\mathcal{R_C})$
\item \label{it-method-instanciate}
given a goal (or subgoal) $C$, 
  generate instances of $C$ by using the production rules of $\G_\NF(\mathcal{R_C})$.
  We obtain $C_1,\ldots, C_n$.
\item \label{it-method-subgoal} \label{it-method-loop}
\textbf{for each} $C_i$, do:
 \begin{enumerate}
 \item \label{it-method-validity} \label{it-method-deletion}
  \textbf{if} $C_i$ is a tautology 
  or $C_i$ is a constructor clause and can be detected as inductively valid 
  \textbf{then} delete it 
  \item \label{it-method-reduction}
  \textbf{else if} we are in  one of the two following cases:
  \begin{enumerate} 
  \item \label{it-method-reduction-constructor}
  	$C_i$ is a constructor clause and is reducible using $\RC$, \textbf{or} 
  \item \label{it-method-reduction-defined}
  	$C_i$ contains a non-constructor symbol and is reducible using $\R$ and induction hypotheses
  \end{enumerate}
  \textbf{then} reduce $C_i$ into $C'_i$
  \item \label{it-method-disproof}
  \textbf{else} \textbf{disproof} (the initial conjecture is not an inductive theorem)
\end{enumerate}
\item 
\textbf{if} \ref{it-method-subgoal} did not fail
\textbf{then} $C$ becomes an \emph{induction hypothesis}
\item \textbf{for each} $C'_i$, do:
\begin{enumerate}
 \item \label{it-method-redundancy}
 \textbf{if} $C'_i$ is a tautology or
  it is a constructor clause and  can be detected as inductively valid 
  or it is subsumed by an axiom or induction hypothesis
   \textbf{then} delete it 
 \item \textbf{otherwise} $C'_i$ becomes a new subgoal,
 \textbf{go to}~\ref{it-method-instanciate}.
\end{enumerate}
\end{enumerate}
If every subgoal is deleted, then $G$ is an inductive theorem of $\R$.
The procedure may not terminate, 
and in this case appropriate lemmas should be added by the user in
order to achieve termination.


The deletion criteria (steps~\ref{it-method-deletion} and~\ref{it-method-redundancy})
include tautologies, forward subsumption, 
clauses with an unsatisfiable constraint, 
and constructor clause and can be detected as inductively valid,
under some conditions defined precisely below.
The procedure for testing these criteria
is based on a reduction to a tree grammar non-emptiness problem
(does there exist at least one term generated by a given grammar),
using $\G_\NF(\mathcal{R_C})$.
In particular, it should be noted that we can decide validity this way 
for clauses $C_i$ which are ground irreducible~\cite{JouannaudKounalisLICS86,Kapur91acta}
(a notion central in inductive theorem proving / proof by consistency).
It is possible to decide ground irreducibility 
also by mean of reduction to non-emptiness, following the lines of~\cite{ComonJacquemard03}.
In Section~\ref{sb-scheme}, we show how such tests can be achieved effectively,
providing that $\R$ is ground confluent, 
for some classes of tree grammar with equality and disequality constraints
studied in former works~\cite{bogaert92stacs,CaronComonCoquideDauchetJacquemard94,dauchet95jsc,ComonJacquemard03}.
The extension to other kind of constraints
(like \textit{e.g.} ordering constraints) requires algorithms for 
corresponding classes of tree grammars 
(see discussions in Sections~\ref{se-powerlist} and~\ref{se-extension}).

The reductions at step~\ref{it-method-reduction}
are performed either with standard rewriting
or with \emph{ind. contextual rewriting} 
(case~\ref{it-method-reduction-defined})
or by case analysis,
(\emph{partial splitting} in case~\ref{it-method-reduction-constructor}
and \emph{rewrite splitting} in case~\ref{it-method-reduction-defined}).
These rules are defined formally in Sections~\ref{sb-srf} and~\ref{sb-src}.
\end{RR}

\begin{RR}
\subsection{Induction Ordering}
The inference and simplification rules below rely
on an ordering defined on the top of the following
complexity measure on clauses.
\begin{definition}
  The complexity of a constrained clause $C \cons[c]$ is the pair
  made of the two following components:
  $C$, ordered by the multiset extension of the ordering $>_e$ on literals,
and the number of constraints $d\sigma$ not occurring in $c$, such that
    there exists $l \to r \cons[d] \in \RC$ and $l\sigma$ is a subterm of $C$.
\end{definition}
We denote $\gg$ the ordering on constrained clauses defined as
the lexicographic composition of 
the orderings on the two components on the complexities.
\end{RR}

\subsection{Simplification Rules for Defined Functions} 
\label{sb-srf}
Our procedure uses the simplification rules for defined symbols 
presented in Figure~\ref{srdf}.
The rules in this figure define the relation $\lrstep{}{\Hyp}_\D$
for simplifying constrained clauses using $\RD$, $\R$ and a
given set $\Hyp$ of constrained clauses considered as induction hypotheses.

\textsf{Inductive~Rewriting} simplifies goals using the axioms of $\RD$
as well as instances of the induction hypotheses of $\Hyp$, 
provided that they are smaller than the goal. 
The underlying induction principle is based on 
\begin{ABS}
a 
\end{ABS}
\begin{RR}
the
\end{RR}
well-founded ordering $\gg$ on constrained clauses
\begin{ABS}
(see~\cite{BouhoulaJacquemard08rrlsv07}).
\end{ABS}
This approach is more general than structural induction
which is more restrictive concerning simplification with induction hypotheses
(see e.g.~\cite{BouhoulaRusinowitch95jar}).
\textsf{Inductive~Contextual~Rewriting} 
can be viewed as a generalization
of a rule in~\cite{Zhang-CTRS92} to handle constraints
by recursively discharging them as inductive conjectures.
\textsf{Rewrite~Splitting} simplifies a 
clause which contains a subterm matching some left member of rule of $\RD$.  
This inference checks moreover that all cases are covered 
for the application of 
$\RD$, 
\emph{i.e.} that for each ground substitution $\tau$, 
the conditions and the constraints of at least one rule is true wrt $\tau$.
Note that this condition is always true when $\R$ is sufficiently complete, 
and hence that this check is superfluous in this case.
\textsf{Inductive~Deletion} deletes tautologies and clauses with unsatisfiable
constraints.


\begin{mafigure}
\begin{tabular}{l}
\textsf{Inductive~Rewriting}:
 $\bigl\{C \cons[c] \bigr\} 
 \lrstep{}{\Hyp}_{\D} \bigl\{C' \cons[c] \bigr\}$
\\[1mm]
\multicolumn{1}{l}{%
\begin{tabular}{ll}
  if & $C \cons[c] \lrstep{}{\rho, \sigma} C' \cons[c]$,
  $l\sigma > r\sigma$ and $l\sigma > \Gamma\sigma$\\
  where & $\rho = \Gamma \Rightarrow l \to r \cons[c] \in \RD \cup 
  \{\psi\ \mid \psi \in \Hyp \mbox{~and~} C\cons[c] \gg \psi \}$\\
\end{tabular}
} 
\\
\\
\textsf{Inductive~Contextual~Rewriting}:
$\bigl\{ \Upsilon \Rightarrow C[l\sigma] \cons[c] \bigr\} 
 \lrstep{}{\Hyp}_{\D} \bigl\{ \Upsilon \Rightarrow C[r\sigma] \cons[c] \bigr\}$
\\[1mm]
\multicolumn{1}{l}{%
  if $\R \modelsind \Upsilon \Rightarrow \Gamma\sigma \cons[c \wedge c'\sigma]$,
  $l\sigma > r\sigma$   and $\{ l\sigma \} >^\mul \Gamma\sigma$,
  where $\Gamma \Rightarrow l \to r \cons[c'] \in \RD$ 
} 
\\
\\
\textsf{Rewrite~Splitting}:
$\bigl\{C[t]_p \cons[c] \bigr\}
 \lrstep{}{\Hyp}_{\D} 
\bigl\{ \Gamma_i\sigma_i \Rightarrow C[r_i\sigma_i]_p \cons[c \land c_i\sigma_i] \bigr\}_{i \in [1..n]}$
\\[1mm]
\begin{ABS}
  if $\R \modelsind \Gamma_1\sigma_1 \cons[c_1\sigma_1] \lor\ldots
  \lor \Gamma_n\sigma_n \cons[c_n\sigma_n]$,
  $t > r_i \sigma_i$ and $\{ t \} >^\mul \Gamma_i\sigma_i$,
  where the\\
  $\Gamma_i\sigma_i \Rightarrow l_i\sigma_i \to r_i\sigma_i \cons[c_i\sigma_i]$,
  $i \leq n$,
  are all the instances of rules in $\RD$
  such that $l_i\sigma_i=t$
\end{ABS}
\begin{RR}
\begin{tabular}{l}
  if $\R \modelsind \Gamma_1\sigma_1 \cons[c_1\sigma_1] \lor\ldots
  \lor \Gamma_n\sigma_n \cons[c_n\sigma_n]$,
  $t > r_i \sigma_i$ and $\{ t \} >^\mul \Gamma_i\sigma_i$\\
  where the 
  $\Gamma_i\sigma_i \Rightarrow l_i\sigma_i \to r_i\sigma_i \cons[c_i\sigma_i]$,
  $i \in [1..n]$\\
  are all the instances of  rules 
  $\Gamma_i \Rightarrow l_i \to r_i \cons[c_i] \in \RD$
  such that $l_i\sigma_i=t$
\end{tabular}
\end{RR}
\\
\\
\textsf{Inductive~Deletion}:
$\bigl\{C \cons[c] \bigr\}  \lrstep{}{\Hyp}_{\D} \emptyset$
if $C \cons[c]$ is a tautology or $c$ is unsatisfiable
\end{tabular} 
\caption{Simplification Rules for Defined Functions\label{srdf} \label{fi-defined}}
\end{mafigure}

\subsection{Simplification Rules for Constructors} 
\label{sb-src}

The simplification rules for constructors are presented in Figure~\ref{srcf},
they define the relation $\to_\C$ for simplifying constrained clauses using $\RC$ and $\R$.

\textsf{Rewriting} simplifies goals with axioms from $\RC$.
\textsf{Partial~Splitting} eliminates ground reducible terms in a constrained 
clause $C\cons[c]$ by adding to $C\cons[c]$ the negation of constraint of some rules
of $\RC$.  Therefore, the saturated application of \textsf{Partial~splitting} 
and \textsf{Rewriting} will always lead to
\textsf{Deletion} or to ground irreducible constructor clauses.
Finally, \textsf{Deletion} and \textsf{Validity} remove
respectively tautologies and clauses with unsatisfiable constraints,
and ground irreducible constructor theorems of $\R$.


\begin{mafigure}
\begin{tabular}{l}
\textsf{Rewriting}:
$\bigl\{C \cons[c] \bigr\} \to_{\C} \bigl\{ C' \cons[c] \bigr\}$
if $C \cons[c] \lrstep{+}{\RC} C' \cons[c]$ and $C\cons[c] \gg C'\cons[c]$
\\
\\
\textsf{Partial~Splitting}:
$ \bigl\{ C[l\sigma]_p \cons[c] \bigr\} \to_{\C} 
\bigl\{ C[r\sigma]_p \cons[c \land c'\sigma], C[l\sigma]_p \cons[c \land \lnot c'\sigma] \bigr\}$\\[1mm]
\multicolumn{1}{l}{%
if $l \to r \cons[c'] \in \RC$, $l\sigma > r\sigma$,
and neither $c'\sigma$ nor $\lnot c' \sigma$ is a subformula of $c$
} 
\\
\\
\textsf{Deletion}:
$\bigl\{ C \cons[c] \bigr\} \to_{\C} \emptyset$
if $C \cons[c]$ is a tautology or $c$ is unsatisfiable
\\
\\
\textsf{Validity}: 
$\bigl\{ C \cons[c] \bigr\} \to_{\C} \emptyset$\\[1mm]
 if $C\cons[c]$ is a ground irreducible constructor clause
 and $\R \modelsind C \cons[c]$ 
\end{tabular} 
\caption{Simplification Rules for Constructors\label{srcf} \label{fi-constructors}\label{fi-constructor}}
\end{mafigure}

%
%


\begin{mafigure}[t]
\begin{tabular}{l}
\textsf{Simplification}: 
\(
\Farc{ \bigl(\E \cup \bigl\{ C \cons[c] \bigr\},\Hyp \bigr)}%
     {\bigl(\E \cup \E',\Hyp \bigr)}
\)
if $\bigl\{ C \cons[c] \bigr\} \to_{\C} \E'$
\\
\\
\textsf{Inductive~Simplification}:
\(
\Farc{ \bigl(\E \cup \bigl\{ C \cons[c] \bigr\},\Hyp \bigr)}%
     {\bigl(\E \cup \E',\Hyp \bigr)}
\)
if 
$\{C \cons[c]\} \lrstep{}{\E\cup \Hyp}_{\D} \E'$
\\
\\
\textsf{Narrowing}:
\(
\Farc{\bigl(\E \cup \bigl\{ C \cons[c] \bigr\},\Hyp\bigr)}%
     {\bigl(\E \cup \E_1 \cup \ldots \cup \E_n, \Hyp \cup \{C\cons[c]\}\bigr)}
\)
\\[1mm]
\multicolumn{1}{l}{%
\begin{tabular}{l}
  if $\bigl\{ C_i \cons[c_i]\bigr\}  \to_{\C} \E_i$,
 where $\{ C_1\cons[c_1],\ldots,C_n\cons[c_n] \}$
  is the set of all clauses such that\\
  $C\cons[c] \vdash^{\ast} C_i \cons[c_i]$
  and
  $\depth(C_i) - \depth(C) \leq \depth(\R)-1$
\end{tabular}
}
\\
\\
\textsf{Inductive~Narrowing}:
\(
\Farc{\bigl(\E \cup \bigl\{ C \cons[c] \bigr\}, \Hyp \bigr)}
     {\bigl( \E \cup \E_1 \cup \ldots \cup \E_n, \Hyp \cup \{C\cons[c]\} \bigr)}
\)
\\[1mm]
\multicolumn{1}{l}{%
\begin{tabular}{l}
  if 
  $\{ C_i \cons[c_i] \}
    \lrstep{}{\E\cup \Hyp \cup  \{C\cons[c]\}}_{\D} \E_i$,
  where $\{ C_1\cons[c_1],\ldots,C_n\cons[c_n] \}$ is the set\\[1mm]
  of all clauses
  such that $C\cons[c] \vdash^+ C_i \cons[c_i]$
  and
  $\depth(C_i) - \depth(C) \leq \depth(\R)-1$ 
\end{tabular}
} 
\\
\\
\textsf{Subsumption}:
\(
\Farc{ \bigl(\E \cup \bigl\{ C \cons[c] \bigr\},\Hyp \bigr)}
     {(\E,\Hyp)}
\)
 if $C \cons[c]$  is subsumed by another clause of $\R \cup \E \cup \Hyp$
\\
\\
\textsf{Disproof}:
\(
\Farc{\bigl( \E \cup \bigl\{ C \cons[c] \bigr\}, \Hyp\bigr)}
     {(\bot,\Hyp)}
\)
if no other rule applies to the clause $C \cons[c]$
\end{tabular}
\caption{Induction Inference Rules\label{iirf} \label{fi-inference}}
\end{mafigure}


\subsection{Induction Inference Rules}  \label{sb-inferences}
The main inference system is displayed in Figure~\ref{iirf}.
Its rules apply to pairs $(\E,\Hyp)$ 
whose components are respectively the sets of current conjectures 
and of inductive hypotheses.
Two inference rules below,
\textsf{Narrowing} and \textsf{Inductive Narrowing}, use 
the grammar $\G_\NF(\mathcal{R_C})$  for instantiating variables.
In order to be able to apply these inferences,
\begin{RR}
according to the definition of term generation in Section~\ref{sb-production},
\end{RR}
we shall initiate the process by adding to the conjectures
one membership constraint for each variable.
\begin{definition} \label{def:decoration}
Let $C \cons[c]$ be a constrained clause such that 
$c$ contains no membership constraint.
The \emph{decoration} of $C\cons[c]$, 
denoted $\decorate(C \cons[c])$
is the set of clauses 
$C\cons[c \land x_1 \deriv \state{u_1}\land\ldots\land {x_n} \deriv \state{u_n}]$
where $\{ x_1, \ldots, x_n \} = \var(C)$,
and for all $i \in [1..n]$,  $\state{u_i} \in Q_\NF(\RC)$
and $\sort(u_i) = \sort(x_i)$.
\end{definition}
The definition of $\decorate$ is extended to set of constrained clauses as expected.
A constrained clause $C \cons[c]$ is said \emph{decorated} if
$c = d \land x_1 \deriv \state{u_1}\land\ldots\land {x_n} \deriv \state{u_n}$
where $\{ x_1, \ldots, x_n \} = \var(C)$,
and for all $i \in [1..n]$,  $\state{u_i} \in Q_\NF(\RC)$,
$\sort(u_i) = \sort(x_i)$, and $d$ does not contain membership constraints.

\textsf{Simplification},
resp. \textsf{Inductive Simplification},
reduces conjectures according 
to the rules of Section~\ref{sb-src},   
resp. \ref{sb-srf}.
%
%
\textsf{Inductive Narrowing} generates new subgoals by application of
the production rules of the constrained grammar $\G_\NF(\RC)$ 
until the obtained clause is deep enough to cover left-hand side of
rules of $\RD$. Each obtained clause must be simplified
by one the rules of Figure~\ref{srdf}
(otherwise, if one instance cannot be simplified, then the rule \textsf{Inductive Narrowing} cannot be applied).
For sake of efficiency, the application 
can be restricted to 
so called \emph{induction variables}, as defined in~\cite{Bouhoula97jsc}
\begin{RR}
(see Section~\ref{sec:ex-test-set})
\end{RR}
while preserving all the results of the next section.
\textsf{Narrowing} 
is similar and uses the rules of Figure~\ref{srcf} for simplification.
This rule permits to eliminate the ground reducible constructor terms 
in a clause by simplifying
their instances, while deriving conjectures considered as new subgoals.  
The criteria on depth 
is the same for \textsf{Inductive~Narrowing} and
\textsf{Narrowing} and is a bit rough, for sake of clarity of the inference rules. 
However, in practice, it can be replaced by a tighter condition
(with, \emph{e.g.}, a distinction between $\RC$ and $\RD$)
while preserving the results of the next section.
\textsf{Subsumption} deletes clauses redundant with
axioms of $\R$, induction hypotheses of $\Hyp$ and other
conjectures not yet proved (in $\E$). 


\begin{ABS}
\begin{example} \label{ex:proof}
Let us come back to the running example of sorted lists, 
with the constructor system $\RC$ containing (\ref{eq:ins-double}) and (\ref{eq:ins-ord})
and the defined system $\RD$ containing the axioms (\ref{eq:occur2-empty}-\ref{eq:occur2-geq}) given in introduction%
\footnote{In~(\ref{eq:occur2-leq}), the constraints $y_1 \approx \ins(x_2, y_2), y_1 \deriv \mathsf{NF}$
can be replaced by $y_1 \deriv \state{\ins(x_2, y_2)}$.}
%
together with the following axioms defining a variant $\occur$ for the membership:
\begin{align}
x \occur \emptyset  & \rightarrow \false
   \tag{$\mathsf{m}_0$} \label{eq:occur-empty}\\
x_1 \occur \ins(x_2, y)  & \rightarrow \true \cons[x_1 \approx x_2]
 \tag{$\mathsf{m}_1$} \label{eq:occur-eq} \\
x_1 \occur \ins(x_2, y)  & \rightarrow x_1 \occur y \cons[x_1 \not\approx x_2] 
  \tag{$\mathsf{m}_2$} \label{eq:occur-neq}
\end{align}
We show, using our procedure,
that the conjecture
$x \occurbis y  =  x \occur y$
is an inductive theorem of $\R$, i.e. that the two variants $\occurbis$ and $\occur$ of membership are equivalent.

\noindent The normal-form grammar $\G_\NF(\RC)$ is described in example~\ref{ex:grammar-nf}. The decoration of the conjecture with its non-terminal gives the two clauses:
$x \occurbis y = x \occur y  \cons[x \deriv \dxstate{x}{\Nat}, y \deriv \dxstate{x}{\Set} ]$
and
$x \occurbis y  = x \occur y  \cons[x \deriv \dxstate{x}{\Nat}, y \deriv \state{\ins(x_1, y_1)} ]$.

The application of the production rules of $\G_\NF(\RC)$ to the first of these clauses
(in \textsf{Narrowing}) gives:
$x \occurbis \emptyset =  x \occur \emptyset$ 
which is reduced, 
using~(\ref{eq:occur2-empty}) and~(\ref{eq:occur-empty}),
to the tautology $\false = \false$.
For the second clause, applying $\G_\NF(\RC)$ returns:
\begin{align}
%
x \occurbis \ins(x_1, \emptyset) & =  x \occur \ins(x_1, \emptyset) 
  \cons[x, x_1 \deriv \dxstate{x}{\Nat}]
  \label{ex2-ii}\\ 
x \occurbis \ins(x_1,  \ins(x_2, \emptyset)) & =  x \occur \ins(x_1,  \ins(x_2, \emptyset))
 \cons[x, x_1, x_2 \deriv \dxstate{x}{\Nat}, x_1 \prec x_2]  
 \label{ex2-iii} \\ 
x \occurbis \ins(x_1, \ins(x_2, y_2)) & =  x \occur \ins(x_1,  \ins(x_2, y_2)) \notag \\
 & \cons[x, x_1, x_2 \deriv \dxstate{x}{\Nat}, y_2 \deriv \state{\ins(x_3, y_3)}, x_1 \prec x_2, x_2 \prec x_3]  
 \label{ex2-iv} 
\end{align}
The subgoals (\ref{ex2-ii}) and  (\ref{ex2-iii}) 
can be simplified by \textsf{Rewrite Splitting} 
with~(\ref{eq:occur2-eq}),~(\ref{eq:occur2-leq}) and~(\ref{eq:occur2-geq})
into clauses reduced into tautologies 
(see~\cite{BouhoulaJacquemard08rrlsv07} for details).



\noindent 
The subgoal~(\ref{ex2-iv}) is implified by \textsf{Rewrite Splitting} 
with (\ref{eq:occur2-eq}-\ref{eq:occur2-geq}) into 3 clauses.
Let us consider the third one, obtained with (\ref{eq:occur2-geq}):
$x \occurbis \ins(x_2, y_2)  =  x \occur \ins(x_1,  \ins(x_2, y_2)) 
 \cons[x, x_1, x_2, x_3 \deriv \dxstate{x}{\Nat}, y_2 \deriv \state{\ins(x_3, y_3)}, x_1 \prec x_2, x_2 \prec x_3, x_1 \prec x]$.
%
%
It is simplified by \textsf{Inductive Rewriting} with (\ref{eq:occur-neq}) into:\\
$
x \occurbis \ins(x_2, y_2)  =  x \occur \ins(x_2, y_2)
  \cons[x, x_2, x_3 \deriv \dxstate{x}{\Nat}, y_2 \deriv \state{\ins(x_3, y_3)}, 
             x_2 \prec x_3 ] 
$.

\noindent         
At this point, we are allowed to use the conjecture $x \occurbis y  =  x \occur y$
as an induction hypothesis with \textsf{Inductive Rewriting}, 
it returns the tautology:
\[
 x \occurbis \ins(x_2, y_2)  =  x \occurbis \ins(x_2, y_2)
  \cons[x, x_2, x_3 \deriv \dxstate{x}{\Nat}, y_2 \deriv \state{\ins(x_3, y_3)}, 
             x_2 \prec x_3 ] 
\]
%
%
The ommited details in the proof of the conjecture can be found 
in~\cite{BouhoulaJacquemard08rrlsv07}.
Note that this proof does not require the manual addition of lemma.
\finex
\end{example}
\end{ABS}

\subsection{Soundness and Completeness} \label{sec-soundcomplete}
We show now that our inference system is sound and refutationally complete.
The proof of soundness is not straightforward. 
The main difficulty is to make sure that the exhaustve application of the 
rules preserve a counterexample when one exists. 
We will show more precisely that a \emph{minimal}
counterexample is preserved along a \emph{fair} derivation. 

A \emph{derivation} is a sequence of inference steps generated by a
pair of the form $(\E_0,\emptyset)$, using the inference rules in $\Ind$, written 
$(\E_0,\emptyset) \InferInd (\E_1,\Hyp_1) \InferInd \ldots$
It is called {\em fair} if the set of persistent constrained clauses 
$(\cup_i \cap_{j\geq i} \E_j)$ is
empty or equal to $\{\bot\}$. 
The derivation is said to be a \emph{disproof} in the latter case,
and a {\em success} in the former.

Finite success is obtained when the set of conjectures to be proved is
exhausted.  Infinite success is obtained when the procedure diverges,
assuming fairness.  When it happens, the clue is to guess some
lemmas which are used to subsume or simplify the generated infinite
family of subgoals, therefore stopping the divergence.  This is
possible in principle with our approach, 
since lemmas can be specified in the same way as axioms are.

\begin{theorem}[Soundness of successful derivations] \label{sosdt} \label{th:soundness}
Assume that $\RC$ is terminating and that $\R$ is sufficiently complete.
Let $\D_0$ be a set of unconstrained clauses
and let $\E_0 = \decorate(\D_0)$.
If there exists a successful derivation
$(\E_0, \emptyset)\InferInd (\E_1 , \Hyp_1)\InferInd~\cdots$ 
then $\R \modelsind \D_0$.
\end{theorem}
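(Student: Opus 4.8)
The plan is to prove the contrapositive: assuming $\R \not\modelsind \D_0$, I would exhibit a persistent counterexample, contradicting the definition of a successful derivation (whose set of persistent clauses must be empty or $\{\bot\}$). Since $\E_0 = \decorate(\D_0)$ and $\R$ is sufficiently complete with $\RC$ terminating, every ground instance of a clause in $\D_0$ factors through some constructor normal form generated by $\G_\NF(\RC)$; by Lemma~\ref{lem:GNFRC} the decoration loses no ground instances, so $\R \not\modelsind \D_0$ implies $\R \not\modelsind \E_0$, i.e.\ some decorated clause has a falsifying ground solution of its constraint.

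The core of the argument is a \emph{minimal counterexample} construction. I would fix the well-founded ordering $\gg$ on constrained clauses (lexicographic on the two complexity components) and, among all pairs $(C\cons[c], \sigma)$ where $C\cons[c]$ appears in some $\E_j$ along the derivation and $\sigma \in \sol(c)$ is grounding with $\R \not\models C\sigma$, choose one minimal with respect to the combined ordering (on the clause by $\gg$ and on the instance by the ground ordering $>$). The key step is then to show that whichever inference rule of Figure~\ref{fi-inference} is applied to this minimal counterexample clause, the minimal counterexample is \emph{preserved}: some clause in the resulting set $\E'$ (or $\E_i$) still admits a counterexample no larger than the chosen one. For \textsf{Simplification} and \textsf{Inductive Simplification} this reduces to soundness of the rules $\to_\C$ and $\lrstep{}{\Hyp}_\D$ of Figures~\ref{fi-constructors} and~\ref{fi-defined}: each such rewriting step sends a counterexample instance to a counterexample instance of strictly smaller or equal complexity, using that the induction hypotheses invoked are $\gg$-smaller than the goal and the side conditions $l\sigma > r\sigma$, $l\sigma > \Gamma\sigma$ control the ground ordering. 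For \textsf{Narrowing} and \textsf{Inductive Narrowing}, I would use the fact that the grammar production relation $\vdash^*_\G$ is exhaustive on $L(\G_\NF(\RC))$: the falsifying solution $\sigma$, being an irreducible constructor substitution, corresponds to a grammar derivation, so the offending ground instance is already an instance of one of the generated clauses $C_i\cons[c_i]$, which is therefore itself a (no-larger) counterexample that survives into $\E_i$.

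The main obstacle I anticipate is this preservation step for the narrowing rules together with \textsf{Inductive Deletion} and \textsf{Validity}: one must argue that a \emph{minimal} counterexample can never be removed by a deletion criterion. Tautologies and unsatisfiable-constraint clauses have no counterexamples, so deletion there is harmless; the delicate case is \textsf{Validity}, where a ground irreducible constructor clause with $\R\modelsind C\cons[c]$ is discarded — here I must invoke the semantic side condition itself to rule out a surviving counterexample, and ensure the ground-irreducibility test (via emptiness of $\G_\NF(\RC)$) is exactly what justifies it. Subsumption requires that a subsuming clause of $\R\cup\E\cup\Hyp$ be itself valid or $\gg$-smaller, so that a counterexample to the subsumed clause lifts to one for the subsumer without increasing complexity. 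Once preservation is established rule by rule, a minimal counterexample attached to a clause that persists in $\cap_{j\geq i}\E_j$ yields a nonempty set of persistent clauses distinct from $\{\bot\}$ (the counterexample clause is a genuine conjecture, not $\bot$), contradicting fairness of a successful derivation. Hence $\R \modelsind \D_0$.
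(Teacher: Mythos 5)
Your overall strategy is the paper's: reduce $\R\modelsind\D_0$ to $\R\modelsind\E_0$ via sufficient completeness of $\R$, termination of $\RC$ and Lemma~\ref{lem:GNFRC}, then run a minimal-counterexample argument over all clauses occurring in $\cup_i\E_i$ with a rule-by-rule case analysis. The ingredients you list (normalizing the falsifying substitution into a constructor normal form, exhaustiveness of the grammar productions for the narrowing rules, harmlessness of tautology and unsatisfiable-constraint deletion) are exactly those of the paper's proof.

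The gap is in how you close the argument. You only claim that each rule application \emph{preserves} a counterexample that is ``no larger'' than the minimal one, and you then conclude that this counterexample is ``attached to a clause that persists.'' Neither half works as stated: a non-strict decrease does not contradict minimality, and preservation alone does not force persistence --- in an infinite successful derivation every clause is eventually removed, and a same-size counterexample could simply hop to a new clause forever while the persistent set stays empty. The paper's proof instead establishes a \emph{strict} decrease: for \textsf{Inductive Narrowing} and \textsf{Narrowing} the generated instance $C_i\cons[c_i]$ carrying the counterexample must itself be simplified into $\E_i$ (this is a side condition of the rule), and the conditions $l\sigma>r\sigma$, $\{l\sigma\}>^\mul\Gamma\sigma$ of the simplification rules guarantee that the resulting counterexample lying in $\cup_i\E_i$ is strictly $\gg$-smaller, which immediately contradicts global minimality; hence no rule other than \textsf{Disproof} can be applied to the minimal clause, contradicting success. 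Note also that you flag \textsf{Validity} as the delicate deletion case, but it is the easy one (its side condition $\R\modelsind C\cons[c]$ directly excludes a counterexample of $C\cons[c]$); the genuinely delicate case is \textsf{Subsumption}, which the paper handles by a secondary minimality of $C\cons[c]$ with respect to the subsumption ordering, forcing the subsumer to coincide with $C$ and then ruling out $C\in\Hyp$ because \textsf{(Inductive) Narrowing} would then have been applied to it, falling back to the previous cases.
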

\begin{ABS}
\begin{proof}
(sketch, see~\cite{BouhoulaJacquemard08rrlsv07} for a complete proof).
The proof uses the fact that, under the hypotheses of Theorem~\ref{th:soundness},
$\R \modelsind \E_0$ implies $\R \modelsind \D_0$.

Intuitively, the reason is that 
in order to show that $\R \modelsind \D_0$, 
it is sufficient to show that $\R \models \D_0\sigma$ 
for all substitutions $\sigma$ whose images contain only ground constructor terms in normal form.
Every ground $\sigma$ can indeed be normalized into a substitution of this form
because $\RC$ is terminating and $\R$ sufficiently complete.
By definition of the decoration, the membership constraints and by construction of $\G_\NF(\RC)$,
this sufficient condition is a consequence of $\R \modelsind \E_0$.

We then show that $\R \modelsind \E_0$ by minimal counter-example.
Assume that $\R \not\modelsind \E_0$ and let $D_0$ be a clause, minimal wrt $\gg$, in the set:
\[ \bigl\{ D\sigma \bigm| D\cons[d] \in \cup_i \E_i,
\sigma \in \sol(d)~\mbox{\small is constructor and irreducible and }
 \R \not \models D\sigma \bigr\}.\]
Let $C \cons[c]$ be a clause of  $\cup_i \E_i$ minimal by subsumption ordering 
and $\theta \in \sol(c)$, irreducible and constructor ground substitution,
be such that $C\theta = D_0$.
%
We show in~\cite{BouhoulaJacquemard08rrlsv07} that whatever inference, other than \textsf{Disproof},
is applied to $C\cons[c]$, a contradiction is obtained, 
hence that the above derivation is not successful.\qed
\end{proof}
\end{ABS}
\begin{RR}
\begin{proof}
Assume that $\R \not\modelsind \D_0$,
and let $(\E_{0},\emptyset) \InferInd (\E_{1},\Hyp_{1}) \InferInd \cdots $  
be an arbitrary successful derivation.
By the following Fact, we have that $\R \not\modelsind \E_0$.
\begin{fact} \label{fac:soundness}
Assume that $\RC$ is terminating and that $\R$ is sufficiently complete.
If $\R \modelsind \E_0$ then $\R \modelsind \D_0$.
\end{fact}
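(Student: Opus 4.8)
The plan is to reduce inductive validity on arbitrary ground instances to inductive validity on instances by constructor $\RC$-normal forms, which is precisely the information recorded by the decoration. So I would assume $\R \modelsind \E_0$, fix a clause $D \in \D_0$ with $\var(D) = \{x_1, \ldots, x_n\}$, and take an arbitrary substitution $\sigma$ grounding for $D$. The target is $\R \models D\sigma$; since $D$ and $\sigma$ are arbitrary, this yields $\R \modelsind \D_0$.

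First I would prove the auxiliary claim that every ground term $t \in \T(\F)$ rewrites to some constructor $\RC$-normal form $t' \in \T(\C)$, arguing by structural induction on $t = f(t_1, \ldots, t_n)$. By the induction hypothesis each $t_i$ reduces to a constructor $\RC$-normal form $t_i'$, so $t \lrstep{*}{\R} f(t_1', \ldots, t_n')$; if $f \in \C$ this is already a constructor term, and if $f \in \D$ then sufficient completeness of $\R$ supplies a further reduction $f(t_1', \ldots, t_n') \lrstep{+}{\R} s$ with $s \in \T(\C)$. In either case termination of $\RC$ lets me normalize the resulting constructor term to an $\RC$-normal form $t'$ (which, containing no defined symbol, is in fact $\R$-irreducible). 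Applying this claim to each $x_i\sigma$ produces an irreducible constructor substitution $\sigma'$ with $x_i\sigma \lrstep{*}{\R} x_i\sigma'$ for every $i$. Note that this step uses only the two hypotheses of the Fact, and neither global termination nor confluence of $\R$.

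Next I would transfer validity between $\sigma$ and $\sigma'$. Since every rewrite step is a consequence of the (conditional) axioms of $\R$, we have $\R \models x_i\sigma = x_i\sigma'$ for each $i$, and as equality is a congruence in every model of $\R$ this gives $\R \models D\sigma$ iff $\R \models D\sigma'$, so it suffices to establish $\R \models D\sigma'$. Here Lemma~\ref{lem:GNFRC} enters: each constructor $\RC$-normal form $x_i\sigma'$ belongs to $L(\G_\NF(\RC), \state{u_i})$ for some non-terminal $\state{u_i} \in Q_\NF(\RC) \setminus \{\xstate{x}{\re}\}$, and sort-preservation of the grammar forces $\sort(u_i) = \sort(x_i)$. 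Consequently the decorated clause $D \cons[x_1 \deriv \state{u_1} \land \ldots \land x_n \deriv \state{u_n}]$ is one of the members of $\E_0 = \decorate(\D_0)$, and $\sigma'$ is a solution of its constraint that grounds $D$. From $\R \modelsind \E_0$ we thus obtain $\R \models D\sigma'$, hence $\R \models D\sigma$.

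The main obstacle I anticipate is the auxiliary normalization claim: I must be careful that reaching a constructor normal form rests only on sufficient completeness (to eliminate defined symbols from the bottom up) together with termination of $\RC$ (to normalize the constructor layer), without invoking properties of all of $\R$. The structural induction above is the natural way to keep these ingredients separate; the only points needing verification are that $\RC$-rewriting never leaves $\T(\C)$ and that an $\RC$-normal constructor term has no $\RD$-redex, so that the final $t'$ is genuinely the kind of term captured by Lemma~\ref{lem:GNFRC}.
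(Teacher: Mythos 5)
Your proof is correct and follows essentially the same route as the paper's: normalize an arbitrary grounding substitution into a constructor, $\RC$-irreducible substitution using sufficient completeness and termination of $\RC$, invoke Lemma~\ref{lem:GNFRC} to see that the normalized substitution satisfies one of the decoration constraints, and transfer validity back to the original substitution. The paper phrases this as a proof by contradiction and leaves the normalization claim and the validity-transfer step implicit, whereas you spell both out; these are only differences of presentation.
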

\begin{proof}
Assume that $\R \modelsind \E_0$  and that 
for some clause $C \in \D_0$ we have $\R \not\modelsind C$.
Let $\{ C\cons[c_1], \ldots, C\cons[c_n] \} = \decorate(C)$.
For all $i \in [1.. n]$, we have $\R \modelsind C\cons[c_i]$, 
but there exists $\sigma \notin \cup_{i=1}^{n}\sol(c_i)$
such that $\R \not\models C\sigma$.
Since $\R$ is sufficiently  complete and $\RC$ is terminating, 
we can rewrite $\sigma$ into a constructor and $\RC$-irreducible
ground substitution $\sigma'$. 
By Lemma~\ref{lem:GNFRC}, it follows that 
$\sigma' \in \sol(c_i)$ for some $i \in [1..n]$, and therefore
that $\R \models C\sigma'$, a contradiction
with $\R \not\modelsind C\sigma$.~\hfill$\Box$
\end{proof}

Let $D_0$ be a clause, minimal wrt $\gg$, in the set:
\[ \bigl\{ D\sigma \bigm| D\cons[d] \in \cup_i \E_i,
\sigma \in \sol(d)~\mbox{\small is constructor and irreducible and }
 \R \not \models D\sigma \bigr\}\]
Note that such a clause exists
since we have proved that $\R \not\modelsind \E_0$.
Let $C \cons[c]$ be a clause of  $\cup_i \E_i$
minimal by subsumption ordering 
and $\theta \in \sol(c)$, 
irreducible and constructor ground substitution,
be such that $C\theta = D_0$.

We show that whatever inference, other than \textsf{Disproof},
is applied to $C\cons[c]$, a contradiction is obtained, 
hence that the above derivation is not successful.\\

\noindent \textsf{Inductive Narrowing}.
Suppose that the inference \textsf{Inductive Narrowing} is applied to $C\cons[c]$.  
By hypothesis, $C$ has been decorated, \emph{i.e.}
$c = d \land x_1 \deriv \state{u_1}\land\ldots\land {x_n} \deriv \state{u_n}$
with $\{ x_1, \ldots, x_n \} = \var(C)$ and for all $i \in [1..n]$,  $\state{u_i} \in Q_\NF(\RC)$.
Hence, since $\theta \in \sol(c)$, 
there exists $\sigma$ and $\tau$ such that $\theta = \sigma\tau$ and
$C\cons[c] \vdash^+ C\sigma\cons[c']$.\\

\noindent $C\cons[c] \sigma$ cannot be a tautology and $c$ cannot be unsatisfiable and
therefore the rule \textsf{Inductive Deletion} cannot be applied.\\

\noindent Let $C'$ be the result of the application 
of the rule \textsf{Inductive Rewriting} to $C\sigma\cons[c']$. 
The instances of clauses of $\Hyp \cup \E \cup \{C\}$ used in the
rewriting step are smaller than $C \theta$ wrt $\gg$, and
therefore, they are inductive theorems of $\R$. 
Hence $\R \not \models C' \tau$.
Moreover, $C\theta \gg C' \tau$ 
and $C' \in \cup_i \E_i$, which is a contradiction.\\

\noindent With similar arguments as above, we can show that the
rule \textsf{Inductive Contextual Rewriting} cannot be applied to  $C\sigma\cons[c']$.\\

\noindent Assume that the rule \textsf{Rewrite Splitting} is applied to
$C[t]_p\sigma\cons[c']$. Let
\[\{ \Gamma_1 \Rightarrow l_1 \rightarrow r_1 \cons[c_1],~\ldots~ ,\Gamma_n \Rightarrow
l_n \rightarrow r_n\cons[c_n]\}\]
be the non-empty subset of $\RD$ such that for all $i$ in $[1..n]$, $t=l_i\sigma_i$ and 
\[\R \modelsind \Gamma_1 \sigma_1\cons[c' \wedge c_1 \sigma_1] \vee
 \ldots \vee 
 \Gamma_n \sigma_n\cons[c' \wedge c_n \sigma_n]\]
The result of the application of \textsf{Rewrite~Splitting} is:
\[ \{\Gamma_1\sigma_1\Rightarrow C[r_1\sigma_1]_p \cons[c' \land c_1\sigma_1],
\ldots, \Gamma_n\sigma_n\Rightarrow C[r_n\sigma_n]_p \cons[c' \land  c_n\sigma_n] \}\]
Then there exists $k$ such that $\R \models \Gamma_k \sigma_k \delta$ 
for some $\delta \in Sol(c' \wedge  c_k \sigma_k)$.  
Let $C_k \equiv \Gamma_k\sigma_k\Rightarrow C[r_k\sigma_k]_p \cons[c' \land c_k\sigma_k]$,
we have $\R \not \models C_k \delta$, 
since $\R \models \Gamma_k\sigma_k\delta$, $\R \models t \delta = r_k \sigma_k \delta$, 
and $\R \not \models C \theta$.  On the other hand, $C\theta \gg C_k \delta$ 
since $\{ t \} >^\mul \Gamma_k \sigma_k$, and 
$t > r_k \sigma_k$. This contradicts the minimality of $C\theta$.\\

\noindent \textsf{Narrowing, Inductive Simplification and Simplification}.
These cases are similar to the previous one.\\

\noindent \textsf{Subsumption}:
Since $\R \not\models C \theta$, $C \cons[c]$ cannot be subsumed by an axiom of $\R$.
If there exists $C'\cons[c'] \in \Hyp \cup (\E \setminus \{C\cons[c]\})$ such that
$C\cons[c] \equiv C'\delta \cons[c'\delta] \vee D$, then we have 
$\R \not\models C' \delta \theta$ ($\theta \in \sol(c')$).
Hence, $r=\emptyset$ and $\delta = \emptyset$, since $C\cons[c]$ is minimum in
$\cup_i \E_i$ wrt subsumption ordering. Therefore, 
$C' \not\in(\E\setminus \{C\})$. 
Moreover, $C' \not \in \Hyp$, otherwise the
inference \textsf{Inductive Narrowing} or \textsf{Narrowing} could
also be applied to $C \cons[c]$, 
in contradiction with previous cases.  Hence, \textsf{Subsumption} 
cannot be applied to $C\cons[c]$. \hfill $\Box$\\
\end{proof}
\end{RR}

Since there are only two kinds of fair derivations, we obtain
as a corollary:
\begin{corollary}[Refutational completeness] \label{ref-comp} \label{cor:refutational-completeness}
Assume that $\RC$ is terminating and that $\R$ is sufficiently complete. 
Let $\D_0$ be a set of unconstrained clauses and let $\E_0 = \decorate(\D_0)$.
If $\R \not \modelsind \E_0$, then 
all fair derivations starting from $(\E_0,\emptyset)$ end up with $(\bot,\Hyp)$.
\end{corollary}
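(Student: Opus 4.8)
The plan is to derive the statement as the contrapositive of the soundness theorem (Theorem~\ref{th:soundness}), exploiting the fact that a fair derivation can be of only two sorts. Recall that a derivation $(\E_0,\emptyset)\InferInd(\E_1,\Hyp_1)\InferInd\cdots$ is fair exactly when its set of persistent clauses $\bigcup_i\bigcap_{j\ge i}\E_j$ is either empty (a \emph{success}) or equal to $\{\bot\}$ (a \emph{disproof}, whose components are then of the form $(\bot,\Hyp)$). Hence it suffices to show that, under the hypothesis $\R\not\modelsind\E_0$, no fair derivation starting from $(\E_0,\emptyset)$ can be a success; the dichotomy then forces every fair derivation to be a disproof, i.e.\ to end in $(\bot,\Hyp)$.

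First I would record the elementary observation that $\R\modelsind\D_0$ implies $\R\modelsind\E_0$. This is immediate from the definition of $\decorate$: each clause of $\E_0=\decorate(\D_0)$ is a clause $C\in\D_0$ carried along with a conjunction of membership constraints, so the solution set of that constraint is merely a subset of the grounding substitutions for $C$. Validity of \emph{every} ground instance of $C$ therefore entails validity of the restricted instances selected by the decoration, whence $\R\modelsind C\cons[c]$ for each decorated clause $C\cons[c]\in\E_0$. (Note that only this trivial direction is needed here, not the converse Fact used inside the proof of Theorem~\ref{th:soundness}, which relies on termination of $\RC$ and sufficient completeness of $\R$.)

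Next I would invoke Theorem~\ref{th:soundness}: under the standing assumptions that $\RC$ is terminating and $\R$ is sufficiently complete, the existence of a successful derivation from $(\E_0,\emptyset)=(\decorate(\D_0),\emptyset)$ yields $\R\modelsind\D_0$, and hence, by the observation above, $\R\modelsind\E_0$. Reading this implication contrapositively, the hypothesis $\R\not\modelsind\E_0$ excludes any successful derivation from $(\E_0,\emptyset)$. Combining this with the fairness dichotomy of the first paragraph, every fair derivation starting from $(\E_0,\emptyset)$ must be a disproof, i.e.\ end with $(\bot,\Hyp)$, which is precisely the claim.

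The main obstacle has already been overcome: all the real work lies in Theorem~\ref{th:soundness} itself, and the present corollary adds nothing beyond the two bookkeeping steps above. The one point I would take care to verify is that ``ends up with $(\bot,\Hyp)$'' coincides exactly with the disproof case of the dichotomy — that is, once the \textsf{Disproof} rule introduces $\bot$, no further inference fires, so $\bot$ is persistent and the persistent set equals $\{\bot\}$. With that identification in place, the corollary is a one-line consequence of Theorem~\ref{th:soundness} and the classification of fair derivations into successes and disproofs.
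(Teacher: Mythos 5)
Your proof is correct and matches the paper's intended argument: the paper derives this corollary with no further proof beyond the remark that there are only two kinds of fair derivations, i.e.\ exactly your combination of the contrapositive of Theorem~\ref{th:soundness} with the success/disproof dichotomy. Your extra bookkeeping step — that $\R \modelsind \D_0$ trivially implies $\R \modelsind \E_0$ since the decorated instances form a subset of all ground instances, so only the easy direction (not Fact~\ref{fac:soundness}) is needed — is precisely the detail the paper leaves implicit, and you supply it correctly.
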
 

When we assume 
that all the variables in goals are decorated 
(restricting the domain for this variables to ground constructor irreducible terms), 
the above hypotheses 
that $\RC$ is terminating and $\R$ is
sufficiently complete can be dropped.
\begin{theorem}[Soundness of successful derivations] \label{sosdt-constrained} \label{th:soundness-constrained}
Let $\E_0$ be a set of decorated constrained clauses.
If there exists a successful derivation
$(\E_0, \emptyset)\InferInd (\E_1 , \Hyp_1)\InferInd~\cdots$ 
then $\R \modelsind \E_0$.
\end{theorem}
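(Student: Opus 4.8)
The plan is to reuse, almost verbatim, the minimal counter-example argument that forms the core of the proof of Theorem~\ref{th:soundness}, while simply discarding the part that invoked Fact~\ref{fac:soundness}. Recall that in that proof the \emph{only} place where the hypotheses ``$\RC$ terminating'' and ``$\R$ sufficiently complete'' were used was Fact~\ref{fac:soundness}, whose sole purpose was to bridge from $\R \modelsind \E_0$ (decorated clauses) back to $\R \modelsind \D_0$ (unconstrained clauses). Here the conclusion to establish is $\R \modelsind \E_0$ itself, so this bridging step is unnecessary, and the remaining argument never appeals to termination or sufficient completeness.

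First I would assume, for contradiction, that $\R \not\modelsind \E_0$, and consider the set
\[ \bigl\{ D\sigma \bigm| D\cons[d] \in \cup_i \E_i,\
   \sigma \in \sol(d)\ \text{constructor and irreducible},\ \R \not\models D\sigma \bigr\}. \]
The observation that replaces Fact~\ref{fac:soundness} is that decoration is preserved along the derivation: \textsf{Narrowing} and \textsf{Inductive Narrowing} instantiate via the production relation $\vdash$, which keeps every variable constrained by a non-terminal of $Q_\NF(\RC)$, and the simplification rules introduce no fresh undecorated variable. Hence, for every clause $D\cons[d] \in \cup_i \E_i$, Lemma~\ref{lem:GNFRC} guarantees that \emph{every} solution $\sigma \in \sol(d)$ grounding for $D$ is automatically a ground constructor $\RC$-normal form substitution, i.e. constructor and irreducible. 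Since $\R \not\modelsind \E_0$ exhibits some $C\cons[c] \in \E_0$ and some $\sigma \in \sol(c)$ with $\R \not\models C\sigma$, the set above is non-empty, and as $\gg$ is well-founded it admits a $\gg$-minimal element $D_0$.

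Next I would pick $C\cons[c] \in \cup_i \E_i$ minimal by subsumption and $\theta \in \sol(c)$, necessarily constructor and irreducible, with $C\theta = D_0$. Since the derivation is successful, its set of persistent clauses is empty, so $C\cons[c]$ is eventually removed by some inference; it cannot be \textsf{Disproof}, for that would make $\bot$ persistent and contradict success. The heart of the proof is then the case analysis over the remaining rules (\textsf{Inductive Narrowing}, \textsf{Narrowing}, \textsf{Inductive Simplification}, \textsf{Simplification}, \textsf{Subsumption}), carried out exactly as in the proof of Theorem~\ref{th:soundness}: in each case one produces from $D_0$ a strictly $\gg$-smaller counter-example again lying in $\cup_i \E_i$ (using $l\sigma > r\sigma$, the multiset conditions on the conditions $\Gamma$, and the $\modelsind$ side condition of \textsf{Rewrite Splitting}, which is an applicability hypothesis of the rule and needs no sufficient completeness), contradicting the minimality of $D_0$. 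I expect the point most worth verifying to be precisely the invariant that all clauses generated along the derivation remain decorated, since it is this invariant---rather than termination or sufficient completeness---that now supplies the constructor-irreducibility of the witnessing substitutions for free; once that is in hand, the entire case analysis transfers without change.
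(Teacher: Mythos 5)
Your proposal matches the paper's own proof: the paper likewise argues that the minimal counter-example argument of Theorem~\ref{th:soundness} carries over unchanged once Fact~\ref{fac:soundness} is dropped, since $\E_0$ is already decorated and that Fact was the only place where termination of $\RC$ and sufficient completeness of $\R$ were used. Your extra observation that decoration is preserved along the derivation, so that by Lemma~\ref{lem:GNFRC} every witnessing substitution is automatically constructor and $\RC$-irreducible, is a correct explicitation of a step the paper leaves implicit.
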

\begin{ABS}
\begin{proof} (sketch).
We use the second part of the proof of Theorem~\ref{th:soundness}
(which does not use the sufficient completeness of $\R$ and termination of $\RC$).
With the hypothesis that the clauses of $\E_0$ are decorated, 
the fact given at the beginning of this proof is indeed no more needed ($\D_0 = \E_0$).
The restriction to substitutions into ground constructor normal forms
in order to show that $\R \not\modelsind \E_0$ is made explicit by the membership constraints
in the decoration.
\qed
\end{proof}
\end{ABS}
\begin{RR}
\begin{proof}
The proof is the same as for 
Theorem~\ref{th:soundness} except that we do no need
the Fact~\ref{fac:soundness} since the goals of $\E_0$ are
already decorated. Hence we do neither need the hypotheses that 
$\RC$ is terminating and that $\R$ is sufficiently complete
which where only used for the proof of Fact~\ref{fac:soundness}.
\hfill~$\Box$
\end{proof}
\end{RR}

\begin{RR}
As a consequence, of the above theorem,
we immediately have the refutational completeness
of our inference system 
if the goals are decorated constrained clauses.
\end{RR}

\begin{corollary}[Refutational completeness] \label{ref-comp-constrained}
Let $\E_0$ be a set of decorated constrained clauses.
If $\R \not \modelsind \E_0$, then 
all fair derivations starting from $(\E_0,\emptyset)$ end up with $(\bot,\Hyp)$.
\end{corollary}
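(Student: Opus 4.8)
The plan is to obtain this corollary as an immediate consequence of Theorem~\ref{th:soundness-constrained}, using the dichotomy that is already built into the notion of a fair derivation. First I would recall that, by definition, a fair derivation starting from $(\E_0,\emptyset)$ has a set of persistent constrained clauses $(\cup_i \cap_{j\geq i}\E_j)$ that is either empty---in which case the derivation is a \emph{success}---or equal to $\{\bot\}$---in which case it is a \emph{disproof}. There is no third possibility, so every fair derivation is of exactly one of these two kinds. Moreover, a derivation is a disproof precisely when it reaches a configuration of the form $(\bot,\Hyp)$, since $\bot$ is introduced only by the \textsf{Disproof} rule and, once introduced, persists; this is what identifies ``being a disproof'' with ``ending up with $(\bot,\Hyp)$''.

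Next I would invoke the contrapositive of Theorem~\ref{th:soundness-constrained}. That theorem states that, whenever $\E_0$ is a set of decorated constrained clauses, the existence of a successful derivation $(\E_0,\emptyset)\InferInd(\E_1,\Hyp_1)\InferInd\cdots$ entails $\R \modelsind \E_0$. Since here we assume $\R \not\modelsind \E_0$, it follows that no successful derivation can start from $(\E_0,\emptyset)$. I would stress that the hypothesis needed to apply the theorem is exactly that the clauses of $\E_0$ are decorated, which is given; in particular, and unlike Corollary~\ref{cor:refutational-completeness}, we need no assumption on termination of $\RC$ or on sufficient completeness of $\R$, because those were used only in Fact~\ref{fac:soundness}, which is not required in the decorated case.

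Combining the two observations finishes the argument: any fair derivation starting from $(\E_0,\emptyset)$ cannot be a success, hence by the dichotomy it must be a disproof, i.e.\ it ends up with $(\bot,\Hyp)$. I do not expect a genuine obstacle here, as all the mathematical weight lies in the minimal-counterexample proof underlying Theorem~\ref{th:soundness-constrained} and in the fact that fairness is defined so as to admit only the success/disproof alternative. The only point deserving a sentence of care is the equivalence between ``being a disproof'' and ``reaching $(\bot,\Hyp)$'', which I would justify directly from the shape of the \textsf{Disproof} inference rule.
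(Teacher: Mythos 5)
Your proposal is correct and matches the paper's own argument: the paper derives this corollary immediately from Theorem~\ref{th:soundness-constrained} together with the observation that a fair derivation is either a success or a disproof, exactly the dichotomy-plus-contrapositive reasoning you spell out. Your extra sentence identifying ``disproof'' with ``ending up with $(\bot,\Hyp)$'' via the \textsf{Disproof} rule is a reasonable elaboration of what the paper leaves implicit.
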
 
We shall see in Section~\ref{sec:partial} some example
of applications of Theorem~\ref{th:soundness-constrained}
and Corollary~\ref{ref-comp-constrained} to specifications
which are not sufficiently complete.

\begin{ABS}
\end{ABS}
\begin{RR}
\bigskip
Our inference system can refute false conjectures. 
This result is a
consequence of the following lemma.

\begin{lemma}
\label{ref-sound}
let $(\E_i, \Hyp_i)$ $\InferInd$
$(\E_{i+1},$ $\Hyp_{i+1})$ be a derivation step.  
If $\R\modelsind \E_i \cup \Hyp_i$ then $\R
\modelsind  \E_{i+1} \cup \Hyp_{i+1}$.
\end{lemma}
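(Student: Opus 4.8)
The plan is to prove the lemma by a case analysis on which rule of Figure~\ref{fi-inference} produces the step $(\E_i,\Hyp_i)\InferInd(\E_{i+1},\Hyp_{i+1})$. The decisive simplification, compared with the minimal–counterexample argument of Theorem~\ref{th:soundness}, is that here I am allowed to \emph{assume} that every clause of $\E_i\cup\Hyp_i$ is valid; consequently every use of an induction hypothesis taken from $\E_i\cup\Hyp_i$ in a simplification step is automatically sound, and no appeal to well-foundedness of $\gg$ is required for this direction. Accordingly, I would first isolate two soundness facts about the auxiliary relations: (i) if $\R\modelsind C\cons[c]$ and $\{C\cons[c]\}\to_{\C}\E'$, then $\R\modelsind\E'$; and (ii) if every clause used as a hypothesis is valid, $\R\modelsind C\cons[c]$ and $\{C\cons[c]\}\lrstep{}{\E_i\cup\Hyp_i}_{\D}\E'$, then $\R\modelsind\E'$.

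To establish (i) and (ii) I would inspect each simplification rule of Figures~\ref{fi-constructor} and~\ref{fi-defined}. \textsf{Rewriting} and \textsf{Inductive~Rewriting} replace a subterm by an $\R$-equal one, using an axiom of $\RC$/$\RD$ or a valid hypothesis, so the rewritten clause is $\R$-equivalent to the original and validity is preserved. \textsf{Partial~Splitting} and \textsf{Rewrite~Splitting} replace a clause by a family indexed by the cases $c'\sigma/\lnot c'\sigma$, resp.\ by the constraints $c_i\sigma_i$; their side conditions (for \textsf{Rewrite~Splitting}, $\R\modelsind\Gamma_1\sigma_1\cons[c_1\sigma_1]\vee\ldots$) guarantee that these cases cover all grounding solutions of $c$, so each member of the family is valid whenever the original is. \textsf{Inductive~Contextual~Rewriting} is analogous once its discharged condition is used. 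Finally \textsf{Deletion}, \textsf{Inductive~Deletion} and \textsf{Validity} only remove clauses that are tautologies, have an unsatisfiable constraint, or are checked to satisfy $\R\modelsind C\cons[c]$; removing valid or vacuous clauses from a valid set leaves it valid.

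With (i) and (ii) in hand the inference rules follow quickly. \textsf{Simplification} and \textsf{Inductive~Simplification} are immediate. For \textsf{Narrowing} and \textsf{Inductive~Narrowing}, the clause $C\cons[c]$ is moved unchanged into $\Hyp$ (harmless, since it is valid), and the added clauses arise by applying $\to_{\C}$, resp.\ $\lrstep{}{\E_i\cup\Hyp_i\cup\{C\cons[c]\}}_{\D}$, to the production instances $C_i\cons[c_i]$ with $C\cons[c]\vdash^{*}C_i\cons[c_i]$. Here I would first show that each $C_i\cons[c_i]$ is itself valid: by the definition of $\vdash_{\G}$ in Section~\ref{sb-production}, every grounding solution $\rho$ of $c_i$ yields, via the replacement $y\mapsto f(y_1,\ldots,y_n)$, a grounding solution $\rho'$ of $c$ with $C_i\rho=C\rho'$, so $\R\modelsind C_i\cons[c_i]$ follows from $\R\modelsind C\cons[c]$; then (i)/(ii) give validity of the added $\E_j$. \textsf{Subsumption} merely deletes a clause and therefore trivially preserves validity.

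The remaining and genuinely delicate case is \textsf{Disproof}, which produces $(\bot,\Hyp)$ with $\R\not\modelsind\bot$; the implication can hold only if its premise fails, so I must show that \textsf{Disproof} cannot fire while $\R\modelsind\E_i\cup\Hyp_i$, i.e.\ that a valid decorated clause $C\cons[c]$ always admits some inference other than \textsf{Disproof}. This is the main obstacle, and it is precisely where the global assumptions needed for refutation (strong completeness and ground confluence of $\R$) enter. I would argue that, $C$ being decorated, strong completeness forces every sufficiently deep production instance $C_i\cons[c_i]$ either to contain an $\RD$-redex whose cases are covered (enabling $\lrstep{}{\cdots}_{\D}$, hence \textsf{Inductive~Narrowing}) or to reduce, through \textsf{Narrowing}, to a ground-irreducible constructor clause removable by \textsf{Validity}; ground confluence ensures that these normal-form computations are coherent, so that no valid clause is left as a dead end. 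Hence whenever \textsf{Disproof} is applicable, $\E_i\cup\Hyp_i$ already contains an invalid clause and the implication holds vacuously, completing the case analysis.
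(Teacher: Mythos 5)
Your proof follows essentially the same route as the paper's: the paper's own argument is a two-line sketch (``the instances of clauses of $\Hyp\cup\E\cup\{C\cons[c]\}$ used during rewriting steps are valid, hence proceed by case analysis on the rule applied''), and your facts (i) and (ii) together with the rule-by-rule inspection of Figures~\ref{fi-constructor}, \ref{fi-defined} and~\ref{fi-inference} are exactly the details that sketch leaves implicit; your observation that the ordering condition $C\cons[c]\gg\psi$ plays no role in this direction is also correct. The one place you genuinely diverge is \textsf{Disproof}. The paper's proof is silent about it, and the lemma is in fact only ever invoked (in the proof of Theorem~\ref{th:refutational-soundness}) on the steps of the derivation \emph{preceding} the step at which \textsf{Disproof} fires, so that case never arises; the honest reading is that the statement tacitly concerns the other rules. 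Your patch --- arguing that \textsf{Disproof} cannot apply to a valid clause --- is essentially a replay of the proof of Theorem~\ref{th:refutational-soundness} and needs strong completeness and ground confluence of $\R$, which are \emph{not} hypotheses of this lemma; without them a valid clause containing a defined symbol may admit no simplification, \textsf{Disproof} fires, and the implication fails. So either exclude \textsf{Disproof} from the statement (as the paper implicitly does) or state the extra hypotheses explicitly, rather than presenting that case as settled under the lemma's own assumptions. A minor further point: in \textsf{Rewrite~Splitting} the coverage condition $\R\modelsind\Gamma_1\sigma_1\cons[c_1\sigma_1]\vee\ldots\vee\Gamma_n\sigma_n\cons[c_n\sigma_n]$ is not what makes the new clauses valid --- each $\Gamma_i\sigma_i\Rightarrow C[r_i\sigma_i]_p\cons[c\land c_i\sigma_i]$ is valid because the rewrite is sound on instances where $\Gamma_i\sigma_i$ holds and the implication is vacuous otherwise; coverage is only needed for preserving counterexamples in the converse direction.
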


\begin{proof}
Let $C\cons[c]$ be a clause in $\E_i$ and $(\E_i \cup \{C\cons[c]\},\Hyp_i) \InferInd
(\E_{i+1},\Hyp_{i+1})$ be a derivation step obtained by the application 
of an inference to $C\cons[c]$ and assume that $\R \modelsind
\E_i \cup \Hyp_i$.
By hypothesis, the instances of clauses of $\Hyp \cup \E \cup \{C\cons[c]\}$
which are used during rewriting steps, are valid.
Hence, we can show that $\R \modelsind \E_{i+1} \cup \Hyp_{i+1}$   
by a case analysis according to the rule applied to $C\cons[c]$.~\cqfd
\end{proof}

The following lemma is also used in the proof of soundness of disproof.
\begin{lemma} \label{lem:R-to-RC}
If $\R$ is ground confluent and sufficiently complete
then for every constructor clause $C\cons[c]$,
if $\R \modelsind C\cons[c]$ then $\RC \modelsind C\cons[c]$.
\end{lemma}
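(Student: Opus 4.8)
The plan is to reduce the statement to a pointwise comparison of the two congruences $=_\R$ and $=_{\RC}$ on ground constructor terms. Fix an arbitrary $\sigma \in \sol(c)$ grounding for $C$; since $C\cons[c]$ is a constructor clause and $\sigma$ is a constructor substitution, $C\sigma$ is a ground clause all of whose terms lie in $\T(\C)$, and by hypothesis $\R \models C\sigma$. It therefore suffices to prove, for every such ground constructor clause, that $\R \models C\sigma$ entails $\RC \models C\sigma$, and for this I would first settle the atomic level: the claim I would isolate is that for all ground constructor terms $s,t \in \T(\C)$ one has $s =_\R t$ iff $s =_{\RC} t$.

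The observation making this claim work is that a ground constructor term cannot be rewritten by any rule of $\RD$: every rule of $\R \setminus \RC$ has a left-hand side containing a defined symbol, which does not occur in a term of $\T(\C)$, while $\RC$-reducts of constructor terms are again constructor terms. Hence $\lrstep{*}{\R}$ and $\lrstep{*}{\RC}$ coincide on $\T(\C)$, and ground confluence of $\R$ restricts to ground confluence of $\RC$ on $\T(\C)$. Since (ground) confluence entails the (ground) Church--Rosser property, $s =_\R t$ is equivalent to $s \downarrow_\R t$, which by the previous remark is $s \downarrow_{\RC} t$, i.e. $s =_{\RC} t$; the converse inclusion is immediate from $\RC \subseteq \R$. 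This settles the atomic claim, and by the same reasoning each ground constructor disequation $s \neq t$ receives the same truth value under $=_\R$ and under $=_{\RC}$.

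From the atomic claim I would conclude at the clause level. In the minimal Herbrand model reading of $\modelsind$ advocated in the introduction, a ground clause is valid iff one of its literals is true in the model, and the truth of a ground (dis)equation is governed solely by the congruence; since $C\sigma$ is a ground constructor clause and $=_\R$, $=_{\RC}$ agree on $\T(\C)$, the clause $C\sigma$ has the same status for $\R$ and for $\RC$, which yields the desired implication (in fact an equivalence). The role of sufficient completeness is to guarantee that this restriction to the constructor world loses nothing: every ground term is $\R$-equal to a constructor term, so the constructor part of the initial model of $\R$ already exhausts it and faithfully matches the initial model of $\RC$, which is what licenses reading the validity of a ground constructor clause off the constructor congruence alone.

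I expect the main obstacle to be the clause-level step rather than the atomic one. Lifting the congruence coincidence from equations to arbitrary clauses requires care with the disjunctive structure and, above all, with negative literals, whose interpretation is sensitive to the choice of model (all models versus the minimal model). The cleanest route is to commit to the minimal-model semantics and invoke sufficient completeness to justify it; if instead one insists on the ``valid in all models'' reading, the negative literals must be moved to the left as added ground constructor equations $G$, and one then has to show that $s =_{\R\cup G} t$ implies $s =_{\RC\cup G} t$ on $\T(\C)$ by projecting an $\R\cup G$-equality proof down onto the constructor signature, a projection whose well-definedness again rests on ground confluence and sufficient completeness.
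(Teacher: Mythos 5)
Your proof is correct and follows essentially the same route as the paper's: reduce to the literals of the ground constructor instance, use ground confluence to turn $\R$-equality of ground constructor terms into joinability, observe that joinability of constructor terms only involves $\RC$-steps, and handle disequations via $\RC \subseteq \R$. The extra scaffolding you add (stating the coincidence of the two congruences on $\T(\C)$ as an explicit ``iff'', and the discussion of minimal-model versus all-models semantics for negative literals) is sound but not needed beyond what the paper's shorter argument already contains.
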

\begin{proof}
Let $\tau \in \sol(c)$ be a substitution grounding for $C$.
By the sufficient completeness of $\R$, we may assume without loss of generality
that $\tau$ is a constructor substitution.
By hypothesis, $\R \models C\tau$.
Assume that for some literal $u = v$ of $C$, 
we have $\R \models u\tau = v\tau$.
Since $\R$ is ground confluent, 
it means that $u\tau \downarrow_\R v \tau$, 
and hence that $u\tau \downarrow_\RC v \tau$,
\emph{i.e.} $\RC \models u\tau = v\tau$,
because $u\tau, v\tau \in \T(\C)$.
Moreover, if $\R \models u\tau \neq v\tau$ 
then $\RC \models u\tau \neq v\tau$
because $\RC \subseteq \R$.
\hfill $\Box$
\end{proof}
\end{RR}

\begin{theorem}[Soundness of disproof] \label{Refutational-soundness} \label{th:refutational-soundness}
Assume that $\R$ is strongly complete and ground confluent.
If a derivation starting from $(\E_0,\emptyset)$ returns
the pair $(\bot,\Hyp)$, then $\R \not \modelsind \E_0$.
\end{theorem}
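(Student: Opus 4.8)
The plan is to argue by contraposition: assuming $\R \modelsind \E_0$, I will show that no derivation can ever reach a state of the form $(\bot,\Hyp)$. The backbone is Lemma~\ref{ref-sound}, which states that each (non-disproof) inference step preserves validity of the pair of conjectures and hypotheses. Since $\Hyp_0 = \emptyset$, the assumption $\R \modelsind \E_0$ gives $\R \modelsind \E_0 \cup \Hyp_0$, and it is convenient to use the contrapositive form of Lemma~\ref{ref-sound}: if $\R \not\modelsind \E_{i+1}\cup\Hyp_{i+1}$ then $\R \not\modelsind \E_i\cup\Hyp_i$. Hence it suffices to exhibit one state in which validity already fails and to propagate this failure backwards, step by step, down to $\E_0$.

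First I would isolate the step that introduces $\bot$. The only rule producing $\bot$ is \textsf{Disproof}, and by definition it fires on some clause $C\cons[c]\in\E_i$ precisely when no other rule is applicable to $C\cons[c]$. The heart of the proof is therefore a \emph{progress} statement: under the hypotheses that $\R$ is strongly complete and ground confluent, any reachable clause $C\cons[c]$ with $\R \modelsind C\cons[c]$ admits an applicable inference rule other than \textsf{Disproof}. Contraposing this, the clause on which \textsf{Disproof} fires satisfies $\R \not\modelsind C\cons[c]$, whence $\R \not\modelsind \E_i\cup\Hyp_i$; the backwards propagation through the preceding (non-disproof) steps via Lemma~\ref{ref-sound} then yields $\R\not\modelsind\E_0$, contradicting the assumption.

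The progress statement is proved by case analysis on $C$. If $c$ is unsatisfiable or $C\cons[c]$ is a tautology, then \textsf{Deletion} (inside \textsf{Simplification} or \textsf{Inductive~Simplification}) applies, and if $C\cons[c]$ is subsumed then \textsf{Subsumption} applies; so these situations are incompatible with \textsf{Disproof}. If $C$ contains a defined symbol, I invoke strong completeness: after instantiating the relevant variables with the production rules of $\G_\NF(\RC)$ through \textsf{Inductive~Narrowing} so as to expose a redex headed by some $f\in\D$, the strong completeness condition $\R\modelsind \Gamma_1\mu_1\cons[c_1\mu_1]\vee\ldots\vee\Gamma_n\mu_n\cons[c_n\mu_n]$ is exactly the side condition of \textsf{Rewrite~Splitting}, whose ordering requirements hold because $>$ is a simplification ordering; hence a reduction applies. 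Consequently the clause on which \textsf{Disproof} fires must be a pure constructor clause. For such a clause, if it is not ground irreducible then \textsf{Rewriting} or \textsf{Partial~Splitting} applies; if it is ground irreducible then the only remaining obstruction to \textsf{Validity} is its semantic premise, and since $\R\modelsind C\cons[c]$ holds by assumption, \textsf{Validity} would apply. Here Lemma~\ref{lem:R-to-RC} together with ground confluence is what licenses transferring between the premise $\R\modelsind C\cons[c]$ and the constructor-level statement $\RC\modelsind C\cons[c]$, so that the check performed by \textsf{Validity} faithfully reflects genuine inductive validity and its failure genuinely witnesses $\R\not\modelsind C\cons[c]$.

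The main obstacle I anticipate is exactly this progress step: establishing that \emph{every} valid reachable clause is touched by some rule other than \textsf{Disproof}. This requires care that enough \textsf{Inductive~Narrowing} has been performed to expose redexes, so that strong completeness can legitimately be invoked in the defined-symbol case, and that the ground-irreducibility bookkeeping in the constructor case matches the preconditions of \textsf{Validity}, \textsf{Rewriting} and \textsf{Partial~Splitting}. Ground confluence and Lemma~\ref{lem:R-to-RC} enter only to guarantee that the constructor-level validity test used by \textsf{Validity} cannot spuriously reject a valid clause; without them the procedure could fire \textsf{Disproof} on a clause that is in fact an inductive theorem, which is precisely the failure mode this theorem rules out.
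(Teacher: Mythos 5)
Your proposal is correct and takes essentially the same route as the paper's own proof: it isolates the clause on which \textsf{Disproof} fires, shows via strong completeness and the \textsf{Narrowing}/\textsf{Rewrite~Splitting} rules that this clause must be a ground irreducible constructor clause, uses the failure of \textsf{Validity} together with Lemma~\ref{lem:R-to-RC} and ground confluence to conclude it is not an inductive theorem of $\R$, and propagates the failure back to $\E_0$ through Lemma~\ref{ref-sound}. The only difference is cosmetic --- you phrase the core step as a contrapositive ``progress'' statement while the paper argues directly that the \textsf{Disproof} clause is invalid --- so the two arguments coincide.
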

\begin{ABS}
\end{ABS}
\begin{RR}
\begin{proof} Under our assumptions, there exists a step $k$ in the
derivation, such that \textsf{Disproof} applies to a constrained clause 
$C\cons[c]$ in $\E_k$.

We prove first that $C\cons[c]$ is a constructor clause.
Assume indeed that $C\cons[c]$ contains a term of the form $f(t_1, \ldots ,t_n)$,
where $f \in \D$ and for all $i \in [1.. n]$, $t_i \in T(\C,\X)$.  
The constraint $c$ is satisfiable, 
otherwise \textsf{Inductive~Deletion} could be applied.
Let $\tau \in \sol(c)$.
Hence by Lemma~\ref{lem:GNFRC},
for each $x \in \var(C)$, $x\tau$ is in $\RC$-normal form.
We have now two possibilities:
\begin{enumerate}
\item for one $i \in [1..n]$, $t_i \tau$ is reducible.
  In this case, there exists a substitution $\sigma$
  such that $\tau = \sigma\theta$ and 
  $t_i \cons[c] \vdash^+ t_i \sigma \cons[c']$
  and $t_i\sigma$ contains as a subterm an instance of a left-hand side
  of rule of $\RC$. Therefore, either \textsf{Rewriting} or
  \textsf{Partial~Splitting} can be applied to $t_i\sigma \cons[c']$.
  It implies that \textsf{Narrowing} can be applied to $C\cons[c]$, 
  which is a contradiction.
\item
  every $t_i \tau$ is irreducible.
  The term $f(t_1, \ldots, t_n)\tau$ is reducible
  at root position because $f$ is strongly complete wrt $\R$.
  Then there exists $\sigma$ such that $\tau = \sigma\theta$ and 
  $f(t_1,\ldots, t_n) \cons[c] \vdash^+  f(t_1,\ldots, t_n)\sigma \cons[c']$
  and moreover $f(t_1,\ldots, t_n)\sigma$ is an instance of a left-hand side
  of rule of $\RD$. 
  Therefore, either \textsf{Inductive rewriting} or \textsf{Rewrite Splitting} 
  can be applied.
  Indeed the application condition of the latter inference
  is a consequence of the strongly completeness of $\R$. 
  Hence, the inference \textsf{Inductive Narrowing} can be applied to $C\cons[c]$, which  
  is a contradiction.  
\end{enumerate}
In conclusion, the clause $C\cons[c]$ contains only constructor terms.

Then, we deduce that $C\cons[c]$ contains ground irreducible terms only,
otherwise \textsf{Narrowing} would apply. 
Since \textsf{Validity} does not apply either, $C\cons[c]$ is not an inductive consequence of $\RC$.
By lemma~\ref{lem:R-to-RC}, and since $\R$ is ground confluent, we conclude 
that $C\cons[c]$ is not an inductive theorem of $\R$.
As a consequence, $\R \not\modelsind \E_k$.
Finally, by lemma~\ref{ref-sound}, we deduce that $\R\not\modelsind \E_0$.~\cqfd
\end{proof}
\end{RR}

\subsection{Handling Non-Terminating Constructor Systems} \label{se-ordering}
Our procedure applies rules of $\RC$ and $\RD$ only when they
reduce the terms wrt the given simplification ordering $>$.
This is ensured when the rewrite relation induced by $\RC$ and $\RD$
is compatible with $>$, and hence that $\RC$ and $\RD$ are
terminating 
\begin{ABS}
(separately).
\end{ABS}
\begin{RR}
(separately), like in the example of Section~\ref{se-example}.
\end{RR}
Note that this is in contrast with other procedures 
like~\cite{BouhoulaRusinowitch95jar,Bouhoula97jsc}
where the termination of the whole system $\R$ is required.

If $\RC$ 
is non-terminating then one can
apply e.g. the constrained completion technique~\cite{KirchnerKirchnerRusinowitch90ria}
in order to generate an equivalent orientable theory (with ordering constraints).
The theory obtained (if the completion succeeds)
can then be handled by our approach.  

\begin{example}\label{ex-order-completion}
Consider this non-terminating system for sets:
\begin{ABS}
\(
\{
\ins(x, \ins(x,y)) =  \ins(x,y), 
\ins(x, \ins(x',y)) = \ins(x',\ins(x,y))
\}
\).
\end{ABS}
\begin{RR}
\[
\begin{array}{rcl}
\ins(x, \ins(x,y)) & = & \ins(x,y)\\
\ins(x, \ins(x',y)) & = & \ins(x',\ins(x,y))
\end{array}
\]
\end{RR}
Applying the completion procedure we obtain the 
\begin{ABS}
constrained rules (\ref{eq:ins-double}) and (\ref{eq:ins-ord}).\finex
\end{ABS}
\begin{RR}
constrained system of Section~\ref{sec:example}.\finex
\end{RR}
\end{example}

\begin{ABS}
\subsection{Decision Procedures for Conditions in Inferences} \label{sb-scheme}
Constrained tree grammars are involved in the inferences
\textsf{Narrowing} and \textsf{Inductive Narrowing} in order to 
generate subgoals from goals,
by instantiation using the productions rules.
They are also the key for the decision procedures applied in order to check
the conditions of constraint unsatisfiability
(in rules for rewriting and
 \textsf{Inductive~Deletion}, \textsf{Deletion}, \textsf{Subsumption}),
ground irreducibility
and
validity of ground irreducible constructor clauses
(in the rules \textsf{Validity}, hence \textsf{Simplification}, and \textsf{Disproof}).
These conditions are decided by reduction into the 
decision problem 
of emptiness 
(of $L(\G,\state{u})$)
for constrained tree grammars build from $\G_\NF(\RC)$.
The decision rely on similar decision results for
constrained tree automata, some cases are summarized in~\cite{tata}.
The reductions are detailed in~\cite{BouhoulaJacquemard08rrlsv07}.
\end{ABS}

\begin{RR}

\section{Decision Procedures for Conditions in Inference Rules} \label{sb-scheme}
We present a reduction of the conditions in 
the inference rules of Figures~\ref{fi-defined}, \ref{fi-constructors}, and~\ref{fi-inference}
to emptiness decision problems for tree automata with constraints.
We deduce a decision procedure for these tests
in the case where the constraints in the specification are limited to
syntactic equality and disequality.

We assume here that, like in Theorem~\ref{th:soundness},
the inference system is applied to a set $\decorate(\D_0)$
where $\D_0$ is a set of unconstrained clauses.

\subsection{Reductions} \label{sb-decision}\label{sb-emptiness} 
Consider the following decision problems,
given two constrained grammars $\G$, $\G'$ and
two non terminals $\state{u}$, $\state{u'}$ of respectively  $\G$ and $\G'$,
\begin{description}
\item[(ED)] emptiness decision: $L(\G,\state{u}) = \emptyset$?
\item[(EI)] emptiness of intersection: $L(\G,\state{u}) \cap L(\G',\state{u'}) = \emptyset$?
\end{description}

\paragraph{\bf Ground instances.} \label{sb-instance}
Let $t \cons[c]$ be a constrained term (or clause) 
such that the constraint $c$ has the form 
\( x_1\deriv\state{u_1} \land \ldots \land x_m\deriv\state{u_m} \mathbin{\land} d \) 
where $d$ contains no membership constraints.
Note that starting with decorated clauses, 
any goal or subgoal occurring during the inference 
is of the above form.
%
The set of ground instances of $t$ satisfying $c$ is recognized 
by a constrained grammar $\G(t\cons[c])= \bigl(Q(t\cons[c]), \Delta(t\cons[c])\bigr)$
whose construction is described in Figure~\ref{instdgf}.

\noindent For technical reasons concerning non-terminals separation, 
we use in the construction of $\G(t\cons[c])$
a relabeling isomorphism ${}^\circ$ from the signature $(\mathcal{S},\F)$
to the signature $(\mathcal{S}^\circ,\F^\circ)$, 
such that the function symbol $f^\circ$ has profile $S_1^\circ \times \ldots \times S_n^\circ \to S^\circ$
if $f$ has profile $S_1 \times \ldots \times S_n \to S$, 
and its extension from $T(\F, \X)$ to $\T(\F^\circ, X)$, 
such that (recursively) $f(t_1,\ldots,t_n) = f^\circ(t_1^\circ,\ldots,t_n^\circ)$,
and for each $x\in \X$, $x^\circ = x$.
\begin{mafigure}[ht]
\begin{tabular}{l}
\quad \( Q(t\cons[\mathop{\bigwedge}_{i=1}^{m} x_i\deriv\state{u_i}\land d]) =  Q_\NF(\RC) \cup
 \{ \state{u^\circ} \mid u \unlhd t \}\)\\     
~\\ 
\quad $\Delta(t\cons[\mathop{\bigwedge}_{i=1}^{m} x_i\deriv\state{u_i}\land d])$ 
contains all the production rules of $\Delta_\NF(\RC)$ plus:~\\     
\( 
\begin{array}{rcl}  
\state{t^\circ} & := & g(\state{t_1^\circ},\ldots,\state{t_m^\circ}) \cons[d], \mbox{~if~} t = g(t_1,\ldots,t_m)\\           
\mbox{and every~} \state{f^\circ(v_1^\circ,\ldots,v_n^\circ)} & := & 
 f(\state{s_1},\ldots,\state{s_n}) \cons \mbox{~such that~} f(u_1,\ldots,u_n) \lhd t,\\ 
\end{array} 
\)\\
         \quad and  $\forall j \leq m$  if $v_j^\circ = x_i$ for some $i$, then $\state{s_j} = \state{u_i}$\\    
\phantom{\quad and  $\forall j \leq m$}
 if $v_j^\circ \in \X \setminus \{ x_1,\ldots,x_m\}$ then $\state{s_j} \in Q_\NF(\RC)$\\   
\phantom{\quad and  $\forall j \leq m$}
 if $v_j^\circ \notin  \X$ then $\state{s_j} = \state{v_j^\circ}$     
\end{tabular}   
\caption{Constrained Grammar $\G(t, c)$ Ground instances  \label{instdgf}}    
\end{mafigure}
\begin{lemma} \label{le-instance}
$L(\G(t\cons[c]), \state{t}) = \{ t\sigma \mid \sigma|_{\var(c)} \in\sol(c) \}$.
\end{lemma}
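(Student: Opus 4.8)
The plan is to prove the two inclusions separately, in each case by induction on the structure of $t$, exploiting the fact that $\G(t\cons[c])$ is built so as to reproduce the top-down shape of $t$ (here I read the start non-terminal $\state{t}$ of the statement as the non-terminal $\state{t^\circ}$ of the constructed grammar). The structural observation I would establish first is that the relabeling isomorphism ${}^\circ$ keeps the skeleton non-terminals $\state{u^\circ}$, for $u \unlhd t$, disjoint from $Q_\NF(\RC)$. Consequently, the only production whose left-hand side is $\state{t^\circ}$ is the root rule $\state{t^\circ} := g(\state{t_1^\circ},\ldots,\state{t_m^\circ})\cons[d]$, and for every non-variable proper subterm $f(v_1,\ldots,v_n) \lhd t$ the only applicable production from $\state{f^\circ(v_1^\circ,\ldots,v_n^\circ)}$ is the one prescribed by Figure~\ref{instdgf}. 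Hence any derivation from $\state{t^\circ}$ is forced to rebuild the non-variable skeleton of $t$ deterministically, switching to the non-terminals of $Q_\NF(\RC)$ precisely at the variable positions of $t$: $\state{u_i}$ at a position carrying the constrained variable $x_i$, and some $\state{s}\in Q_\NF(\RC)$ of the right sort at a position carrying an unconstrained variable. From those non-terminals only the rules of $\Delta_\NF(\RC)$ can subsequently fire, so the corresponding subderivations stay entirely inside $\G_\NF(\RC)$.

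For the inclusion $\subseteq$ I would take $s \in L(\G(t\cons[c]),\state{t})$ with a witnessing derivation $y\cons[y\deriv\state{t^\circ}] \vdash^* s\cons[c']$, $c'$ satisfiable. By the determinism above, this derivation factors into one application of the root rule (recording the renamed constraint $d\tau$) followed by independent subderivations rooted at the subterm non-terminals, whose leaves are the variable occurrences of $t$. At a leaf carrying $x_i$ the residual derivation lies in $\G_\NF(\RC)$ and starts from $\state{u_i}$, so by definition the generated value $x_i\sigma$ belongs to $L(\G_\NF(\RC),\state{u_i})$, i.e. $\sigma$ solves $x_i\deriv\state{u_i}$; the unconstrained positions are handled analogously. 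Since $t$ is linear, these local values assemble into a single substitution $\sigma$ with $s = t\sigma$, and collecting the recorded constraints shows $c'$ is, up to renaming, $d$ conjoined with the membership constraints, so its satisfiability yields $\sigma|_{\var(c)}\in\sol(c)$.

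For the reverse inclusion $\supseteq$ I would start from $\sigma$ with $\sigma|_{\var(c)}\in\sol(c)$ and build a derivation of $t\sigma$ by induction on $t$. I first apply the root rule, which is legitimate because $\sigma\in\sol(d)$ makes the recorded constraint satisfiable, and then recurse through the skeleton. At a position carrying $x_i$, from $\sigma\in\sol(x_i\deriv\state{u_i})$ we get $x_i\sigma\in L(\G_\NF(\RC),\state{u_i})$ and may splice in a derivation of $x_i\sigma$ from $\state{u_i}$ inside $\G_\NF(\RC)$; at an unconstrained position I splice in a derivation of the (constructor) value from a non-terminal of $Q_\NF(\RC)$ of the right sort, which exists by the observation following Lemma~\ref{lem:GNFRC} that every ground constructor term is generated by $\G_\NF(\RC)$. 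Concatenating these subderivations produces $y\cons[y\deriv\state{t^\circ}]\vdash^* t\sigma\cons[c']$ with $c'$ satisfiable, hence $t\sigma \in L(\G(t\cons[c]),\state{t})$.

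I expect the main obstacle to be the bookkeeping of constraints and variable renamings in the production relation of Section~\ref{sb-production}. One must verify that when the root rule carrying $d$ fires, the matching substitution $\tau$ is a variable renaming, so that $d\tau$ is equivalent to $d$, and that the fresh variables introduced at the leaves of the skeleton are exactly those whose membership constraints $y_i \deriv \state{u_i}$ encode the conditions $x_i\sigma\in L(\G_\NF(\RC),\state{u_i})$. Pinning down the correspondence between a leaf of the skeleton, the variable occurrence of $t$ it comes from, and the component of $\sigma$ it fixes is exactly where linearity of $t$ is needed and where the relabeling ${}^\circ$ does the essential work of preventing the skeleton derivation from being short-circuited by an $\RC$-normal-form production; the remainder is a routine structural induction.
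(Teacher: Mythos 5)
Your proposal is correct and follows essentially the same route as the paper, whose proof is just the one-line remark that both inclusions are straightforward inductions (on the derivation length for $\subseteq$ and on the ground instance $t\sigma$ for $\supseteq$); your structural induction on $t$, with the observations about the separation of skeleton non-terminals via ${}^\circ$ and the splicing of $\G_\NF(\RC)$-derivations at variable positions, is a faithful and more detailed elaboration of exactly that argument.
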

\begin{proof}
The proofs of both directions $\subseteq$ are straightforward
inductions
resp. on the length of a derivation of a term of $L(\G(t\cons[c]), \state{t})$
and on 
a ground instance $t\sigma$ such that $\sigma|_{\var(c)}$ is a solution of $c$.~\cqfd
\end{proof}

\paragraph{\bf Constraints unsatisfiability.}
This property is required for rules 
\textsf{Inductive~Rewriting}, \textsf{Inductive~Contextual~Rewriting},
\textsf{Rewrite~Splitting}, \textsf{Inductive~Deletion}, \textsf{Deletion},
and \textsf{Subsumption}.
\begin{lemma}
Given a constraint $c$, there exists
a constrained grammar $\G(c)$ such that 
$c$ is unsatisfiable iff  $L\bigl(\G(c)\bigr) = \emptyset$. 
\end{lemma}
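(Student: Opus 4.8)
The plan is to reduce unsatisfiability of $c$ to the emptiness problem (ED) by exhibiting a grammar whose generated language is, up to a fresh top symbol, exactly the set of solutions of $c$. Since every constraint arising during the inference is decorated, I may assume that $c$ has the shape $\bigwedge_{i=1}^m x_i \deriv \state{u_i} \land d$, where $\var(c) = \{x_1,\ldots,x_m\}$, each $\state{u_i} \in Q_\NF(\RC)$, and $d$ contains no membership constraint.

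First I would extend the signature with an auxiliary tupling symbol $f$ whose profile collects the sorts of $x_1,\ldots,x_m$, and set $t = f(x_1,\ldots,x_m)$, so that $t$ is linear and $\var(t) = \var(c)$ (the degenerate case $m = 0$ is covered by taking $f$ to be a fresh constant). Then I would apply verbatim the construction of Figure~\ref{instdgf} to the constrained term $t\cons[c]$, obtaining the grammar $\G(t\cons[c])$ together with its distinguished non-terminal $\state{t}$. Lemma~\ref{le-instance} then yields directly $L(\G(t\cons[c]), \state{t}) = \{\, t\sigma \mid \sigma|_{\var(c)} \in \sol(c)\,\}$.

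Because $t\sigma$ depends only on $\sigma|_{\var(c)}$ and the assignment $\sigma|_{\var(c)} \mapsto t\sigma$ is injective (as $t$ is a tuple of the distinct variables of $c$), this language is nonempty exactly when $\sol(c) \neq \emptyset$, that is, exactly when $c$ is satisfiable. I would therefore define $\G(c)$ to be $\G(t\cons[c])$ with $\state{t}$ as its sole axiom, trimming every non-terminal not co-reachable from $\state{t}$ so that $L(\G(c)) = L(\G(c), \state{t})$. This gives $L(\G(c)) = \emptyset$ iff $c$ is unsatisfiable, which is the claim.

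The reduction itself is unconditional and short, the real content being delegated to Lemma~\ref{le-instance}; the delicate point is not the \emph{existence} of $\G(c)$ but what it buys downstream. The residual constraint $d$ is carried verbatim on the single production $\state{t^\circ} := f(\state{t_1^\circ},\ldots,\state{t_m^\circ})\cons[d]$, so deciding $L(\G(c)) = \emptyset$ is only as feasible as deciding emptiness for the class of constrained grammars whose production constraints are Boolean combinations of the atoms occurring in $d$ (syntactic equality, disequality, ordering). I expect the main obstacle to lie precisely there — in matching $d$ against a class of tree automata with constraints for which (ED) is known to be decidable — rather than in the reduction, which goes through uniformly for any $c$ of the decorated form.
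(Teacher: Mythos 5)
Your proposal is correct and follows essentially the same route as the paper: both reduce unsatisfiability of $c$ to emptiness by packaging the variables of $c$ under a fresh tupling symbol, invoking the ground-instances construction of Figure~\ref{instdgf} and Lemma~\ref{le-instance}, and observing that the generated language is nonempty iff $\sol(c)\neq\emptyset$. The only (cosmetic) difference is that the paper tuples fresh variables $y_1,\ldots,y_m$ tied to the occurrences of the variables of $c$ by added equality constraints $y_i\approx x_i$, whereas you tuple the distinct variables of $c$ directly, which works just as well.
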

\begin{proof}
Let $x_1,\ldots,x_m$ be the list of all the variables
occurring in $c$, eventually with repetition in case of multiple occurrences.
Let $y_1,\ldots,y_m$ be a list of fresh distinct variables, 
let $f^m$ be a new function symbol of arity $m$ and let $\tilde{c} = \bigwedge_{i=1}^{m} y_i \approx x_i$.
The constrained grammar $\G(c)$ is defined by
$\G(c) = \G\bigl(f^m(y_1,\ldots,y_m) \cons[c \land \tilde{c}]\bigr)$.~\cqfd
\end{proof}
\begin{corollary} \label{co-satifiability}
Constraints unsatisfiability is reducible to \emph{(ED)}.
\end{corollary}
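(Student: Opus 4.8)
The plan is to read off the corollary directly from the construction of $\G(c)$ in the preceding lemma, once the emptiness question is phrased at the distinguished top non-terminal. Recall that the grammar built there is $\G(c) = \G\bigl(f^m(y_1,\ldots,y_m)\cons[c \land \tilde c]\bigr)$, where $f^m$ is fresh, $\tilde c = \bigwedge_{i=1}^m y_i \approx x_i$, and the $y_i$ are fresh distinct variables. Writing $t = f^m(y_1,\ldots,y_m)$ for the chosen term, the reduction I would exhibit maps the input constraint $c$ to the instance of \textup{(ED)} asking whether $L(\G(c), \state{t}) = \emptyset$. This map is computable, since $\G(c)$ is assembled effectively from $c$ and $\G_\NF(\RC)$ following Figure~\ref{instdgf}.

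The correctness of the reduction I would then derive in two short steps. First, applying Lemma~\ref{le-instance} to $t$ with constraint $c \land \tilde c$ yields $L(\G(c), \state{t}) = \{\, t\sigma \mid \sigma|_{\var(c \land \tilde c)} \in \sol(c \land \tilde c) \,\}$, so this language is empty exactly when $c \land \tilde c$ is unsatisfiable. Second, I would relate the satisfiability of $c \land \tilde c$ to that of $c$: because the $y_i$ are fresh and distinct, any $\sigma \in \sol(c)$ extends to a solution of $c \land \tilde c$ by setting $y_i \mapsto x_i\sigma$, and conversely every solution of $c \land \tilde c$ restricts to a solution of $c$. Hence $\sol(c) \neq \emptyset$ iff $\sol(c \land \tilde c) \neq \emptyset$, and therefore $c$ is unsatisfiable iff $L(\G(c), \state{t}) = \emptyset$, which is the desired instance of \textup{(ED)}.

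I do not expect a genuinely hard step here: the substance lies in the preceding lemma and in Lemma~\ref{le-instance}, and the corollary only packages them. The single point that requires care is that \textup{(ED)} is posed for one non-terminal, whereas the quantity $L(\G(c))$ in the preceding lemma ranges over all of $Q$ — including the non-terminals of $Q_\NF(\RC)$, whose languages are in general non-empty. I would therefore make explicit that the reduction targets $\state{t}$, and use the freshness of $f^m$ — which occurs in no left-hand side of $\RC$ — to argue that terms are generated from $\state{t}$ only through the production $\state{t} := f^m(\ldots)$, so that $L(\G(c), \state{t})$ coincides exactly with the set of ground instances of $t$ satisfying $c \land \tilde c$, and its emptiness captures unsatisfiability of $c$ precisely.
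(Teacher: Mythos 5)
Your proof is correct and takes essentially the same route as the paper: the corollary is read off directly from the construction $\G(c) = \G\bigl(f^m(y_1,\ldots,y_m)\cons[c \land \tilde{c}]\bigr)$ in the preceding lemma together with Lemma~\ref{le-instance}, via the observation that the fresh distinct $y_i$ make $\sol(c)$ and $\sol(c \land \tilde{c})$ simultaneously empty or non-empty. Your additional care in posing the emptiness question at the single non-terminal $\state{f^m(y_1,\ldots,y_m)}$, rather than for $L(\G(c))$ as a whole (which also includes the generally non-empty languages of $Q_\NF(\RC)$), is a sound and welcome sharpening of a point the paper leaves implicit.
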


\paragraph{\bf Ground (ir)reducibility.}
The rules \textsf{Validity}, hence \textsf{Simplification}, and \textsf{Disproof}
(by negation) check ground irreducibility.
\begin{lemma} \label{le-gi}
Ground reducibility and ground irreducibility decision are reducible to \emph{(EI)}.
\end{lemma}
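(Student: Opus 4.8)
The plan is to reduce both decisions to the \emph{emptiness of an intersection} problem \emph{(EI)}, using on one side the instance grammar $\G(t\cons[c])$ provided by Lemma~\ref{le-instance}, and on the other side the normal-form grammar $\G_\NF(\RC)$ of Figure~\ref{nfgf}. I work, as assumed, with a decorated constructor term $t\cons[c]$ (the clause case follows by applying the argument to each literal argument), so that $c$ constrains every variable of $t$ to a non-terminal of $Q_\NF(\RC)$. Two facts are then combined. By the decoration, $\sol(c)$ consists of $\RC$-irreducible constructor substitutions, so the irreducible solutions of $c$ grounding for $t$ coincide with the solutions of $c$, and by Lemma~\ref{le-instance} the set of corresponding instances of $t$ is exactly $L(\G(t\cons[c]),\state{t})$. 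By Lemma~\ref{lem:GNFRC} and its corollary, $L(\G_\NF(\RC),\xstate{x}{\re})$ is the set of all $\RC$-reducible ground terms, whereas the terms generated from the remaining non-terminals, $\bigcup_{\state{u}\in Q_\NF(\RC)\setminus\{\xstate{x}{\re}\}}L(\G_\NF(\RC),\state{u})$, are exactly the $\RC$-normal forms. Since $t$ is a constructor term, $\R$-reducibility and $\RC$-reducibility of $t\sigma$ coincide (no rule of $\RD$ rewrites a pure constructor term), so these grammars suffice.

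First I would handle ground irreducibility. By definition $t\cons[c]$ is ground irreducible iff every relevant instance $t\sigma$ is irreducible, i.e. iff none of them is $\RC$-reducible. Passing to languages, this holds iff
\[ L(\G(t\cons[c]),\state{t}) \,\cap\, L(\G_\NF(\RC),\xstate{x}{\re}) = \emptyset, \]
which is directly an instance of \emph{(EI)}.

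Ground reducibility is the dual statement and is slightly more delicate, being a universally quantified reducibility property. Here $t\cons[c]$ is ground reducible iff every relevant instance $t\sigma$ is reducible, equivalently iff no such instance is an $\RC$-normal form (every ground constructor term is either reducible or a normal form). Using the description of normal forms above, and the fact that $Q_\NF(\RC)$ is finite, this is equivalent to the finite conjunction
\[ L(\G(t\cons[c]),\state{t}) \,\cap\, L(\G_\NF(\RC),\state{u}) = \emptyset \quad\text{for every } \state{u}\in Q_\NF(\RC)\setminus\{\xstate{x}{\re}\}, \]
i.e. to finitely many instances of \emph{(EI)}, which gives the reduction.

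The grammar constructions are already in hand (Lemma~\ref{le-instance} and Figure~\ref{nfgf}), so the remaining work is to justify the two equivalences above. The step I expect to be the real crux is ground reducibility: it is a language inclusion $L(\G(t\cons[c]),\state{t}) \subseteq L(\G_\NF(\RC),\xstate{x}{\re})$, and turning an inclusion into an emptiness test would normally require complementing a constrained grammar language. The key observation that avoids any complementation is that the complement of the reducible terms is already generated, non-terminal by non-terminal, by $\G_\NF(\RC)$ (Lemma~\ref{lem:GNFRC}); replacing the complement by this explicit finite union is what keeps everything phrased directly as \emph{(EI)}. I would note finally that this lemma establishes only the \emph{reduction}: the actual decidability of \emph{(EI)} for the relevant classes of constrained grammars (equality and disequality constraints) is invoked separately.
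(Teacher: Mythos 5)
Your proof is correct and follows essentially the same route as the paper: build the instance grammar $\G(t\cons[c])$ from Lemma~\ref{le-instance}, use Lemma~\ref{lem:GNFRC} (and its corollary for $\xstate{x}{\re}$) to identify the normal forms and the reducible terms among the languages of $\G_\NF(\RC)$, and reduce ground irreducibility to one intersection-emptiness with $L(\G_\NF(\RC),\xstate{x}{\re})$ and ground reducibility to intersection-emptiness against the remaining non-terminals. Your version merely spells out details the paper leaves implicit (the role of decoration, the coincidence of $\R$- and $\RC$-reducibility on constructor terms, and the fact that the union over $Q_\NF(\RC)\setminus\{\xstate{x}{\re}\}$ yields finitely many instances of \emph{(EI)} rather than a complementation).
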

\begin{proof}
By definition and Lemmas~\ref{lem:GNFRC} and~\ref{le-instance},
a constrained clause $C\cons[c]$ is ground reducible iff 
$L\bigl(\G(C\cons[c])\bigr) \cap L\bigl(\G_\NF(\RC), Q_\NF(\RC) \setminus \{ \xstate{x}{\re} \}\bigr) =\nolinebreak \emptyset$
and ground irreducible iff 
$L\bigl(\G(C\cons[c])\bigr) \cap L\bigl(\G_\NF(\RC), \xstate{x}{\re} \bigr) =\nolinebreak \emptyset$.
\cqfd
\end{proof}

\paragraph{\bf Validity of ground irreducible constructor clauses.} 
The rule \textsf{Validity}, hence \textsf{Simplification}, checks this property.
\begin{lemma} \label{le-valid}
When $\R$ is ground confluent, validity of ground irreducible constructor constrained clauses is reducible to \emph{(ED)}.
\end{lemma}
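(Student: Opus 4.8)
The plan is to reduce the validity problem $\R \modelsind C\cons[c]$, for a ground irreducible constructor clause $C\cons[c]$, to the unsatisfiability of a single derived constraint, which by Corollary~\ref{co-satifiability} is already known to be reducible to \textbf{(ED)}. The first step is to descend from $\R$ to its constructor subsystem: since $\R$ is ground confluent (and, as assumed throughout this section, sufficiently complete), Lemma~\ref{lem:R-to-RC} yields $\R \modelsind C\cons[c]$ iff $\RC \modelsind C\cons[c]$. It therefore suffices to decide validity in the initial model of $\RC$, whose elements are the $\RC$-normal forms.

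The crucial observation is that on $\RC$-normal forms semantic equality collapses to syntactic equality. Indeed, two ground constructor terms are equal in the initial model iff they are $\RC$-joinable, and when both are already $\RC$-normal forms this happens iff they are syntactically identical. Writing $C = \bigvee_i (s_i = t_i) \vee \bigvee_j (s'_j \neq t'_j)$, I would then use the ground irreducibility of $C\cons[c]$: for every $\RC$-irreducible constructor $\sigma \in \sol(c)$ the instantiated subterms $s_i\sigma, t_i\sigma, s'_j\sigma, t'_j\sigma$ are themselves $\RC$-normal forms, so each literal holds in the initial model exactly when the corresponding syntactic (dis)equality holds. Consequently $C\sigma$ is falsified precisely when $\sigma$ satisfies the syntactic constraint
\[ \bar{C} \;=\; \bigwedge_i (s_i \not\approx t_i) \;\land\; \bigwedge_j (s'_j \approx t'_j). \]

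Assembling these steps, $C\cons[c]$ is valid iff the constraint $c \land \bar{C}$ admits no solution among the $\RC$-normal forms. By Lemma~\ref{le-instance} (and Lemma~\ref{lem:GNFRC}, which guarantees that the relevant non-terminals of $\G_\NF(\RC)$ generate exactly these normal forms) the instances satisfying $c \land \bar{C}$ form the language of a constrained grammar, so by Corollary~\ref{co-satifiability} deciding this emptiness is an instance of \textbf{(ED)}, as required.

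The step I expect to be the main obstacle is the syntactic characterization of falsification, i.e.\ justifying that on ground irreducible instances semantic (dis)equality coincides with syntactic (dis)equality. This is exactly where both hypotheses are indispensable: ground confluence is needed to identify the initial model with the $\RC$-normal forms and to exclude spurious joinability between distinct normal forms, while the ground irreducibility of $C\cons[c]$ is needed to ensure the instantiated terms are genuine normal forms rather than reducible terms for which equality need not be syntactic. Once this characterization is secured, forming $\bar{C}$ and appealing to the grammar construction is routine; the only care required is to keep the atoms of $\bar{C}$ within the syntactic fragment ($\approx$, $\not\approx$) for which the resulting emptiness test \textbf{(ED)} is decidable.
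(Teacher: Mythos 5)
Your proof is correct and follows essentially the same route as the paper's: replace each literal of $C$ by its syntactic counterpart, observe that on ground irreducible instances semantic and syntactic (dis)equality coincide because joinable normal forms are identical under ground confluence, and reduce validity to emptiness of the language of $\G\bigl(C\cons[c \land \lnot\tilde{C}]\bigr)$ via Lemma~\ref{le-instance}. The only (harmless) deviation is your opening appeal to Lemma~\ref{lem:R-to-RC}, which drags in sufficient completeness --- a hypothesis the lemma does not assume and that the direct argument does not need.
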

%
\begin{proof}
Let $C\cons[c]$ be a ground irreducible constructor constrained clause.
Let $\tilde{C}$ be the constraint obtained from $C$ by 
replacement of every equation $s=t$ (resp. disequation $s \neq t$) 
by the atom $s \approx t$ (resp. $s \not\approx t$).
Since $C\cons[c]$ is ground irreducible and  $\R$ is ground-confluent, we have that
$C\cons[c]$ is valid in the initial model of $\R$ 
iff every substitution $\sigma \in \sol(c)$ grounding for $C$  
is such that 
$\sigma \in \sol\bigl(\tilde{C}\bigr)$.
This is equivalent to $L\bigl(\G(C\cons[c \land \lnot\tilde{C}])\bigr) = \emptyset$.\cqfd
\end{proof}

\subsection{Decision} \label{sb-ta} 
It remains to give decision procedures for (ED) and (EI).
We proceed by reduction to analogous problems on
tree automata with (dis)equality constraints~\cite{tata},
for a class of tree grammars defined as follows.
\begin{definition}
A constrained grammar $\G$ is called \emph{normalized} if
for each of its productions 
\(\state{t} := f(\state{u_1},\ldots,\state{u_n}) \cons[c]\)
all the atomic constraints in $c$ have the
form $P(s_1,\ldots,s_k)$ where $P \in \mathcal{L}$
and $s_1,\ldots,s_k$ are strict subterms of $f(u_1,\ldots,u_n)$.
\end{definition}
%
Every normalized constrained grammar which contains only 
constraints with $\approx$, $\not\approx$ in its production rules is equivalent to a tree automaton
with equality and disequality constraints (AWEDC), see~\cite{tata} for a survey.
Therefore, constrained grammars inherit the properties of AWEDC concerning emptiness decision,
and (ED), (EI) are decidable for a normalized constrained grammar when
for each production \(\state{t} := f(\state{u_1},\ldots,\state{u_n}) \cons[c]\):
\begin{enumerate}
\item \label{en-bogaert}%
  the constraints in $c$ have the form $u_i \approx u_j$ or $u_i \not\approx u_j$~\cite{bogaert92stacs},
\item \label{en-comon}%
  the constraints in $c$ are only disequalities $s_1 \not\approx s_2$~\cite{ComonJacquemard03},
\item \label{en-dauchet}%
  the constraints in $c$ are equalities and disequalities, and for every (ground) constrained term $t\cons[c]$ 
  generated by $\G$, for every path $p \in \Pos(t)$, the number of
  subterms $s$  occurring along $p$ in $t$ and such that
   $s \approx s'$ or $s'\approx s$ is an atomic constraint of $c$ is
   bounded (independently from $t$  and $c$)~\cite{dauchet95jsc},
\item \label{en-caron}%
  the constraints in $c$ are equalities and disequalities,
  and for every (ground) constrained term $t\cons[c]$ generated by $\G$,
  for every path $p \in \Pos(t)$, the number of subterms $s$ satisfying the following conditions (i--iii)
  is bounded (independently from $t$ and $c$)~\cite{CaronComonCoquideDauchetJacquemard94}
  \begin{itemize}
  \item[(i)] $s$ occurs along $p$ in $t$, 
  \item[(ii)] $s \approx s'$ or $s'\approx s$ is an atomic constraint of $c$,
  \item[(iii)] $s$, $s'$ are not brothers in a subterm
    $f(\ldots,s,\ldots,s',\ldots)$ occurring on $p$.
  \end{itemize}
\end{enumerate}

\label{sb-restriction-RC}
\begin{theorem} \label{th-effective}
All the conditions of the simplification rules in Figures~\ref{fi-defined},\ref{fi-constructor}
and the inference rules in Figure~\ref{fi-inference} 
are decidable or make recursive call to the procedure itself
when $\R$ is ground confluent 
and, for all  $l \to r \cons[c] \in \RC$,
for all $s \approx s' \in c$, 
(resp. all $s \not\approx s' \in c$)
$s$ and $s'$ are either variables or strict subterms of $l$
(resp. variables or strict subterms occurring at sibling positions in $l$).
\end{theorem}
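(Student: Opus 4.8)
The plan is to treat each side condition of the rules in Figures~\ref{fi-defined}, \ref{fi-constructor} and~\ref{fi-inference} separately, sorting them into those that are \emph{recursive calls} to the procedure and those that must be \emph{decided} outright. The conditions of the form $\R \modelsind \ldots$ occurring in \textsf{Inductive Contextual Rewriting} and in the premise of \textsf{Rewrite Splitting} are, by definition, inductive validity checks, hence recursive invocations of the very procedure we are analysing; they account for the ``make recursive call to the procedure itself'' clause of the statement. The ordering conditions $l\sigma > r\sigma$, $\{ l\sigma \} >^\mul \Gamma\sigma$ and $C\cons[c] \gg C'\cons[c]$ are decidable because $>$ is a recursive (lpo-based) simplification ordering and $\gg$ is built from $>$ together with a purely syntactic count of constraints. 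The rewriting, matching and narrowing preconditions (existence of an applicable rule, $\vdash^{*}$-derivations bounded by $\depth(\R)-1$, etc.) involve only finite searches and are therefore decidable. What remains are the genuinely semantic tests: tautology, \emph{unsatisfiability} of a constraint $c$, \emph{subsumption}, \emph{ground (ir)reducibility}, and \emph{validity of ground irreducible constructor clauses}.

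The second step is to discharge these semantic tests through the reductions already established in Section~\ref{sb-decision}. Tautology is decided by the syntactic criterion in the definition of a tautology (compute the mgu of the equational part and test whether it equates the two sides of some literal). Constraint unsatisfiability reduces to emptiness \textbf{(ED)} by Corollary~\ref{co-satifiability}, and subsumption reduces likewise, since for each of the finitely many candidate matchers $\sigma$ its definition amounts to the unsatisfiability of $c' \land \lnot c\sigma$. Ground reducibility and ground irreducibility reduce to emptiness of an intersection \textbf{(EI)} by Lemma~\ref{le-gi}, and validity of a ground irreducible constructor clause reduces to \textbf{(ED)} by Lemma~\ref{le-valid}, which is exactly where ground confluence of $\R$ is used. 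After this step every non-recursive condition has become an instance of \textbf{(ED)} or \textbf{(EI)} for grammars obtained from $\G_\NF(\RC)$ by the instance construction $\G(t\cons[c])$ of Figure~\ref{instdgf} (Lemma~\ref{le-instance}) and by product constructions for intersections.

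The core of the argument is then to show that all these grammars are \emph{normalized} and fall into one of the decidable classes (\ref{en-bogaert})--(\ref{en-caron}), and this is exactly where the syntactic restriction on $\RC$ is used. In a production $\state{t} := f(\state{u_1},\ldots,\state{u_n})\cons[c]$ of $\G_\NF(\RC)$ (Figure~\ref{fi-nf}), the constraint is a Boolean combination of instances $e\theta$ of the constraints $e$ of those rules $l \to r\cons[e]$ whose left member matches, $l\theta = f(u_1,\ldots,u_n)$. Since rule left members are headed by $f$, every variable and every strict subterm of $l$ is sent by $\theta$ to a \emph{strict} subterm of $f(u_1,\ldots,u_n)$; hence, under the hypothesis, each atom $s \approx s'$ (resp.\ $s \not\approx s'$) of $e$ becomes an equality between strict subterms (resp.\ a disequality between strict subterms at sibling positions) of $f(u_1,\ldots,u_n)$, which is precisely the normalization condition and keeps the constraints in the $\approx/\not\approx$ fragment. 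The observation making emptiness decidable is that equalities arise in $\G_\NF(\RC)$ only from the reducibility-witnessing productions of $\xstate{x}{\re}$, whereas along any path reducibility is witnessed at a single position; the number of equality-constrained subterms occurring along a path is therefore bounded by $\depth(\RC)$, while the disequalities produced by the normal-form rules $\cons[\lnot c]$ may occur freely. This is exactly the boundedness hypothesis of case~(\ref{en-dauchet}), so \textbf{(ED)} and \textbf{(EI)} are decidable for $\G_\NF(\RC)$, and it remains only to check that the finite, non-looping top productions added by $\G(t\cons[c])$ and by the product for \textbf{(EI)} contribute constraints near the root and thus cannot affect the asymptotic count along paths. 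Combining the three steps yields the statement.

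The step I expect to be the main obstacle is this last one: verifying that the boundedness and locality hypotheses of the automata results (\ref{en-bogaert})--(\ref{en-caron}) survive both the complementation/normal-form construction of $\G_\NF(\RC)$ and the subsequent product constructions used for instances and intersections. The delicate points will be to confirm that equalities are genuinely confined to the single reducibility witness along each path, so that the bound of case~(\ref{en-dauchet}) really holds, and that the sibling restriction on disequalities is preserved under the product, so that synchronising two grammars does not create a new, unbounded family of non-local equality constraints.
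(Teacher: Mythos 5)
Your overall architecture is the same as the paper's: separate the conditions of the form $\R \modelsind \cdots$ as recursive calls, reduce the remaining semantic tests to the emptiness problems (ED) and (EI) via Corollary~\ref{co-satifiability} and Lemmas~\ref{le-gi} and~\ref{le-valid} (the latter being where ground confluence enters), and then argue that the constrained grammars built from $\G_\NF(\RC)$ lie in a decidable class. The gap is in that last step, which is the only place where the syntactic hypothesis on $\RC$ is actually used. You assert that equalities arise in $\G_\NF(\RC)$ only from the reducibility-witnessing productions of $\xstate{x}{\re}$, and that the normal-form productions $\state{t} := f(\state{u_1},\ldots,\state{u_n})\cons[\lnot c]$ contribute only disequalities. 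This is false as soon as some rule of $\RC$ carries a disequality constraint $s \not\approx s'$: its negation contributes the equality $s\theta \approx s'\theta$ to the normal-form productions of Figure~\ref{fi-nf}, and these productions can be used at every level of a derivation. The number of equality-constrained subterms along a path is therefore not bounded in general, and the grammar is \emph{not} in class~\ref{en-dauchet} as you claim.

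The paper instead places $\G_\NF(\RC)$ in class~\ref{en-caron}, whose bounded count excludes equalities between brother subterms, and this is exactly what the sibling-position restriction on the disequalities of $\RC$ buys: after negation those disequalities become equalities between brothers, which class~\ref{en-caron} tolerates in unbounded number along a path, while the non-brother equalities (coming from the constrained productions of $\xstate{x}{\re}$, used at most once per path since the propagation rules for $\xstate{x}{\re}$ are unconstrained) remain bounded. In your account the sibling hypothesis plays no role at all, which is a sign that the argument cannot be the intended one: dropping that hypothesis would leave the grammar outside every one of the classes~\ref{en-bogaert}--\ref{en-caron}. Your closing concern about whether the instance construction $\G(t\cons[c])$ and the product for (EI) preserve the class is legitimate (the paper is silent on it), but it is secondary to fixing the categorization itself.
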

\begin{proof}
When the constraints of $\RC$ fulfill the above conditions, 
then $\G_\NF(\RC)$ is in category ~\ref{en-caron}, hence (ED) and (EI) are decidable.
Hence the conditions in the inference and simplification rules 
in Figures~\ref{fi-defined},\ref{fi-constructor},\ref{fi-inference}
which are not  recursive call,
are decidable by Corollary~\ref{co-satifiability} and Lemmas~\ref{le-gi},\ref{le-valid}.
\cqfd
\end{proof}

The algorithms provided in the literature for the emptiness decision 
for the classes \ref{en-bogaert}
to \ref{en-caron} of tree automata with equality and disequality constraints 
are all very costly, 
due to the inherent  complexity of the problem.
For instance, for the ``easiest'' class~\ref{en-bogaert}, 
the problem is EXPTIME-complete~\cite{tata}, 
see also~\cite{Kapur91acta,ComonJacquemard03} concerning class~\ref{en-comon}.
The problem is however less difficult for deterministic automata
(\textit{e.g.}, PTIME for class~\ref{en-bogaert}), 
like the one of Figure~\ref{fi-nf}.

Cleaning algorithms, which may behave better in the average, 
have been proposed~\cite{CaronComonCoquideDauchetJacquemard94} 
for optimizing emptiness decision.
An interesting aspect of the cleaning algorithm is its monotonicity:
an incremental change on the automaton in input causes only an incremental change of the 
intermediate structure constructed by the algorithm for emptiness decision.
This should permit to reuse such structures in our setting because all the constrained grammars
of Section~\ref{sb-emptiness} are incrementally obtained from the
unique normal form grammar  $\G_\NF(\RC)$.

Another promising approach for implementation is 
the use of first-order saturation techniques.
It has been studied for solving various decision problem 
for several classes of tree automata
with or without constraints~\cite{JRV-jlap08,CJP-fossacs07,JGL-ipl2005}.
\end{RR}

\section{Handling Partial Specifications} \label{se-partial} \label{se-powerlist} \label{sec:powerlist} \label{sec:partial}
The example of sorted lists 
\begin{ABS}
(Example~\ref{ex:proof})
\end{ABS}
\begin{RR}
in Section~\ref{sec:example} 
\end{RR}
can be treated with our procedure because it 
is based on a sufficiently complete
and ground confluent 
conditional constrained TRS $\R$ whose constructor part $\RC$ is terminating.
Indeed, under these hypotheses,
Theorem~\ref{th:soundness} ensures the soundness
of our procedure for proving inductive conjectures on this specification,
and Corollary~\ref{cor:refutational-completeness}
and Theorem~\ref{th:refutational-soundness}
ensure respectively refutational completeness and soundness of disproof.

For sound proofs of inductive theorems
wrt specifications which are not sufficiently complete,
%
we can rely on Theorem~\ref{th:soundness-constrained} and Corollary~\ref{ref-comp-constrained} 
which 
do not require sufficient completeness of the specification 
but instead suppose that the conjecture is decorated,
i.e. that each of its variables is constrained to belong
to a language associated to a non-terminal of the normal-form
(constrained) grammar.
In this section, we propose two applications of this principle
of decoration of conjectures to the treatment of partial specifications.
\begin{RR}
We treat the case where the specification of defined function is partial in Section~\ref{sec:partial-defined},
and the case where axioms for constructors are partial in Section~\ref{sec:partial-constructor}.
\end{RR}

\begin{ABS}
\paragraph{\bf Partially Defined Functions.}
An inductive proof of a decorated conjecture $C$ in $\R$
remains valid in an extension of $\R$ (possibly not complete).
\end{ABS}

\begin{RR}
\subsection{Partially Defined Functions}  \label{sec:partial-defined}
Under the condition that the conjecture is decorated, 
extending a given sufficiently complete 
specification with additional axioms for defining partial
(defined) functions preserves successful derivations.
\end{RR}
\begin{theorem} \label{th:partial}
Assume that $\R$ is sufficiently complete and
let $\R'$ be an consistent extension 
of $\R$ where $\RC' = \RC$ and 
$\RD' = \RD \cup \RD''$ 
($\RD''$ defines additional partial defined functions).
Let $\E_0$ be a set of decorated constrained clauses.
Every derivation 
$(\E_0, \emptyset)\InferInd \cdots$ 
successful wrt $\R$ is also a successful derivation wrt $\R'$.
\end{theorem}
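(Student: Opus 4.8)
The plan is to show that the \emph{very same} sequence of inference steps constituting the successful $\R$-derivation is again a valid and successful derivation once the ambient system is replaced by $\R'$. Since a derivation is a chain $(\E_0,\emptyset)\InferInd(\E_1,\Hyp_1)\InferInd\cdots$ whose success is determined solely by its set of persistent clauses, it suffices to establish two points: first, that every side condition checked against $\R$ in a given step still holds when checked against $\R'$; and second, that the pair of clause sets produced at each step is literally identical, so that the persistent set (empty, by the assumed success) is unchanged. I would carry this out by induction on the length of the derivation, replaying it step by step with the same rules, substitutions and grammar productions.

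The backbone of the argument rests on the two structural facts $\RC'=\RC$ and $\RD\subseteq\RD'$, together with the observation that the left-hand sides of $\RD''$ are headed by the fresh partial symbols (hence disjoint from those of $\RD$) and that, the replayed derivation being an $\R$-derivation, only symbols of $\R$ ever occur in the $\E_i$ and $\Hyp_i$. From $\RC'=\RC$ I would record that the normal-form grammar is unchanged, $\G_\NF(\RC')=\G_\NF(\RC)$; consequently the instantiation performed by \textsf{Narrowing} and \textsf{Inductive Narrowing}, the interpretation of every membership constraint, the solution sets $\sol(c)$, constraint (un)satisfiability, the notion of tautology, the induction ordering $\gg$ (whose second component counts $\RC$-constraints), and ground irreducibility of constructor clauses (which can be reduced only by rules with constructor left-hand sides, i.e. $\RC$) are all \emph{identical} in the two settings. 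Since $\E_0$ is decorated and, as in the paper's setup, every inference preserves decoration, all conjectures stay decorated, so the narrowing rules remain applicable exactly as before.

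Next I would treat the conditions that genuinely refer to the defined part; these are all \emph{positive}, hence monotone in the system. Joinability satisfies $\downarrow_{\R}\subseteq\downarrow_{\R'}$, since every $\R$-derivation is an $\R'$-derivation, so each conditional rewrite step and every application of \textsf{Inductive Rewriting}, \textsf{Rewriting} and \textsf{Inductive Simplification} replays. Subsumption by a clause of $\R\cup\E\cup\Hyp$ only gets easier as $\R$ grows. For the inductive-validity conditions of \textsf{Inductive Contextual Rewriting}, \textsf{Rewrite Splitting} and \textsf{Validity} I would invoke the model inclusion $\mathrm{Mod}(\R')\subseteq\mathrm{Mod}(\R)$, immediate from $\R\subseteq\R'$, which together with the unchanged $\sol(c)$ yields the key lemma $\R\modelsind\varphi\Rightarrow\R'\modelsind\varphi$ for every constrained clause $\varphi$ — this covers disequations as well, being a statement about validity in \emph{all} models. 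For \textsf{Rewrite Splitting} I would additionally note that the set of matching instances coincides (the redex is headed by an $\RD$-symbol, on which $\RD''$ contributes no rule) and that its coverage condition survives because each $f\in\D$ remains sufficiently complete with respect to $\R'$, a witnessing $\R$-reduction to a constructor term being also an $\R'$-reduction. The consistency hypothesis, with $\RC'=\RC$, is what makes the resulting validity in $\R'$ meaningful for the intended use through Theorem~\ref{th:soundness-constrained}.

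Finally, since no antimonotone condition is used in a successful run — \textsf{Disproof}, the only such step, produces $\bot$ and therefore cannot occur in a success — every step replays verbatim, yielding identical pairs $(\E_i,\Hyp_i)$; hence the persistent clause set is the same as for the $\R$-derivation, namely empty, and the derivation is successful with respect to $\R'$. I expect the main obstacle to be the bookkeeping of the third paragraph: one must check, for each of the eight simplification rules and six inference rules, that its condition is either independent of the defined part (and thus fixed by $\RC'=\RC$ and the constraint interpretation) or positive (and thus transported by $\downarrow_{\R}\subseteq\downarrow_{\R'}$, subsumption monotonicity, or $\mathrm{Mod}(\R')\subseteq\mathrm{Mod}(\R)$), paying particular attention to the coverage condition of \textsf{Rewrite Splitting} and to the fact that $\gg$ and ground irreducibility of constructor clauses rest only on $\RC$.
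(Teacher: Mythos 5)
Your proposal is correct and follows essentially the same route as the paper's (very terse) proof: since $\RC'=\RC$ the normal-form grammar is unchanged, and every inference step of the successful derivation wrt $\R$ replays verbatim wrt $\R'$. Your write-up additionally makes explicit the monotonicity checks (joinability, subsumption, $\R\modelsind\varphi\Rightarrow\R'\modelsind\varphi$, and the ``all matching instances'' condition of \textsf{Rewrite~Splitting} via the freshness of the $\RD''$ symbols) that the paper leaves implicit, which is a welcome elaboration rather than a different argument.
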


\begin{ABS}
\noindent In~\cite{BouhoulaJacquemard08rrlsv07}, we use 
Theorem~\ref{th:partial}
for the proof of conjectures on
an extension of the specification of Example~\ref{ex:proof}
with the incomplete definition of a function $\minl $.
\end{ABS}

\begin{RR}
\begin{proof}
The grammars $\G_\NF(\RC')$ and $\G_\NF(\RC)$ are the same.
Therefore every inference step wrt $\R$ is also 
an inference step wrt $\R'$.\cqfd
\end{proof}
We apply Theorem~\ref{th:partial} to a partial extension of 
the specification of Section~\ref{sec:example}.
%
\paragraph{\bf Specification of min for sorted lists.}
Let us complete the specification of Section~\ref{sec:example}
with a new defined symbol $\minl: \Set \to \Nat$
and the following rules of $\R_\D$:
\[  
\begin{array}{rcl}
\minl(\ins(x,\emptyset)) & \to & x\\
\minl(\ins(x, \ins(y,z))) & \to & \minl(\ins(x,z)) \cons[x \prec y]\\[2mm]
\end{array} 
\]
The function $\minl$ is not sufficiently complete wrt $\R$
(the case $\minl(\emptyset)$ is missing).
\paragraph{\bf Proof of two conjectures for min.}
We shall prove, using our inference system,
that the two following constrained and decorated conjectures are inductive
theorems of $\R$.
\begin{equation} \label{ex-goal2}
\minl(\ins(x, \ins(y,z))) \to \minl(\ins(y,z)) 
\cons[x \succcurlyeq y \wedge x,y \deriv \dxstate{x}{\Nat} \wedge z \deriv \dxstate{x}{\Set}] 
\end{equation}
\begin{equation} \label{ex-goal3}
\minl(\ins(x, \ins(y,z))) \to \minl(\ins(y,z)) 
\cons[x \succcurlyeq y \wedge x,y \deriv \dxstate{x}{\Nat}
\wedge z \deriv \state{\ins(x_1,x_2)} ] 
\end{equation}


Let us now prove that the conjecture (\ref{ex-goal2}) is an inductive
theorem of $\R$.
We start by the simplification of (\ref{ex-goal2}) using a
\textsf{Partial~Splitting}.  
We obtain:
\begin{equation} \label{}
\minl(\ins(y,z)) =  \minl(\ins(y,z)) 
\cons[x \approx y \wedge x \succcurlyeq y \wedge  
x,y \deriv \dxstate{x}{\Nat} \wedge z \deriv \dxstate{x}{\Set} ] 
            \label{ex-goal2.1}
\end{equation}
\begin{equation} \label{}
\minl(\ins(x,\ins(y,z))) =  \minl(\ins(y,z)) 
\cons[x \not \approx y \wedge  x \succcurlyeq y \wedge  
x,y \deriv \dxstate{x}{\Nat} \wedge z \deriv \dxstate{x}{\Set} ] 
             \label{ex-goal2.2}
\end{equation}

The clause (\ref{ex-goal2.1}) is a tautology.
Subgoal (\ref{ex-goal2.2}) is simplified using 
\textsf{Partial~Splitting} again.  We obtain:
\begin{multline}
\minl(\ins(y,\ins(x,z))) =  \minl(\ins(y,z))\\
\cons[x \succ y \wedge x \succcurlyeq y \wedge x \not \approx y \wedge
  x,y \deriv \dxstate{x}{\Nat} \wedge z \deriv \dxstate{x}{\Set} ] 
            \label{ex-goal2.2.1}
\end{multline}
\begin{multline} 
\minl(\ins(y,\ins(x,z))) =  \minl(\ins(y,z))\\
\cons[x \nsucc y \wedge x \succcurlyeq y \wedge x \not \approx y \wedge
  x,y \deriv \dxstate{x}{\Nat} \wedge z \deriv \dxstate{x}{\Set} ] 
            \label{ex-goal2.2.2}
\end{multline}
\noindent Subgoal (\ref{ex-goal2.2.1}) is simplified by $\RD$ into $\minl(\ins(y,z)) = \minl(\ins(y,z))$, a tautology.
Subgoal (\ref{ex-goal2.2.2}) can also be deleted since the constraint 
$x \nsucc y, x \succcurlyeq y, x \not \approx y$ is unsatisfiable
This ends the proof that (\ref{ex-goal2}) is an inductive theorem of
$\R$.

The proof of (\ref{ex-goal3}) follows the same steps.

Note that by Theorem~\ref{th:partial} the proofs of the decorated conjectures
(\ref{ex-decore1}), (\ref{ex-decore2}) 
and (\ref{ex2-deco1}), (\ref{ex2-deco2})
in Section~\ref{sec:example} remain valid for
the above extended specification. 
\end{RR}

\begin{ABS}
\paragraph{\bf Partial Constructors.}
\end{ABS}
\begin{RR}
\subsection{Partial Constructors and Powerlists} \label{sec:partial-constructor}
\end{RR}
The restriction to decorated conjectures also permits
to deal with partial constructor functions.
In this case,  we are generally interested in proving 
conjectures only for constructor terms in the definition domain of
the defined function (well-formed terms).
\begin{ABS}

In~\cite{BouhoulaJacquemard08rrlsv07}, we
present an example of automatic proof where
\end{ABS}
\begin{RR}
This is possible with our procedure when
\end{RR}
$\RC$ is such that 
the set of well-formed terms 
is the set of constructor $\RC$-normal forms.
Hence, decorating the conjecture 
with grammar's non-terminals, as in Theorem~\ref{th:soundness-constrained},
amounts in this case at
restricting the variables to be instantiated by well-formed terms.

\begin{ABS}
The example is a specification of powerlists
(lists of $2^n$ integers stored in the leaves of a complete binary tree)
also treated in~\cite{Kapur}.
A particularity of this example is that $\RC$
contains constraints of the form $t \sim t'$
meaning that $t$ and $t'$ are well-formed lists 
of the same length
(\textit{i.e.} balanced trees of the same depth).
Such constraints are added to $\G_\NF(\RC)$
and we show that emptiness is decidable for these grammars
by reduction to the same problem for visibly tree automata
with one memory~\cite{CJP-fossacs07}.
\end{ABS}

\begin{RR}
We illustrate this approach in this section
with an example of application of
Theorem~\ref{th:soundness-constrained} to a 
non complete specification of powerlists.

\paragraph{\bf Specification of powerlists.}
A powerlist~\cite{Misra94powerlist} is a list of length $2^n$ (for $n \geq 0$)
whose elements are stored in the leaves of a balanced binary tree.
Kapur gives in~\cite{Kapur} a specification of powerlists
and some proofs of conjectures with an extension of RRL mentioned in introduction.
This example is carried out with an extension of the algebraic specification approach
where some partial constructor symbols are restricted by 
\emph{application conditions}.
We propose below another specification of powerlists which contains only constrained 
rewrite rules, and which can be efficiently handled by our method.

We consider a signature for representing powerlists of natural numbers,
with the sorts:
$\mathcal{S} = \{ \Nat, \List \}$ and the constructor symbols:
\[
\C = \bigl\{
0 : \Nat, s : \Nat \to \Nat,
v : \Nat \to \List, 
\tie: \List \to \List,
\bot: \List
\bigr\} \]
The symbols $0$ and $s$ are used to represent the natural numbers in 
unary notation, $v$ creates a singleton powerlist $v(n)$ of length 1 from a number $n$,
and $\tie$ is the concatenation of powerlists.
The operator $\tie$ is restricted to well balanced constructor terms of the same depth.
In order to express this property, we shall consider a constructor rewrite system 
$\RC$ which reduces  to $\bot$ every term $\tie(s, t)$ which is not well balanced.
This way, only the well defined powerlists are $\RC$-irreducible.
For this purpose, we shall use a new binary constraint predicate $\sim$ 
defined on constructor terms of sort $\List$ as the smallest equivalence such that:
\[
\begin{array}{rcl}
v(x) & \sim & v(y) \quad\mbox{for~all~}x ,y : \Nat\\
\tie(x_1, x_2) & \sim & \tie(y_1, y_2) 
	\quad\mbox{iff~}x_1 \sim x_2 \sim y_1  \sim y_2	
\end{array}
\]

The constructor TRS $\RC$ 	has one rule constrained by $\sim$:
\[ 
\begin{array}{c}
\tie(y_1, y_2) \to \bot \cons[y_1 \not\sim y_2] \quad
\tie(\bot, y) \to \bot \quad 
\tie(y, \bot) \to \bot
\end{array}
\]

\paragraph{\bf Tree grammars with $\sim$-constraints on brother subterms.}
The normal form tree grammar $\G_\NF(\RC)$ associated to $\RC$
generates the well founded ground constructor terms.
Its non-terminals, according to the construction in Section~\ref{sb-nf},
are: $\xstate{x}{\Nat}$, $\xstate{x}{\List}$,
$\state{\bot}$, $\state{\tie(x_1, x_2)}$
and its production rules:
\[
\begin{array}{rcll}
\multicolumn{4}{l}{%
\xstate{x}{\Nat} := 0 \quad
\xstate{x}{\Nat} := s(\xstate{x_2}{\Nat}) \quad
\xstate{x}{\List} := v(\xstate{x}{\Nat}) \quad
\state{\bot} := \bot
} 
\\[2mm]
\state{\tie(x_1, x_2)} & := &
 \tie(\xstate{x_3}{\List}, \xstate{x_4}{\List}) & \cons[x_3^\List \sim x_4^\List]\\
\state{\tie(x_1, x_2)} & := & 
 \tie\bigl(\state{\tie(x_3, x_4)}, \state{\tie(x_5, x_6)}\bigr) & \cons[ \tie(x_3, x_4) \sim \tie(x_5, x_6)]\\
\end{array}
\]
Note that all the constraints in these production rules
are applied to brother subterms.
We have omitted in the above list the non-terminal $\xstate{x}{\re}$,
and production rules of the form:
$\xstate{x}{\re} := \tie(\xstate{x_1}{\List}, \xstate{x_2}{\List})\cons[x_1^\List \not\sim x_2^\List]$ or
$\xstate{x}{\re} := \tie(\state{\bot}, \xstate{x_2}{\List})$.

The emptiness problem is decidable for such constrained tree grammars.
This can be shown with an adaptation of the proof in~\cite{bogaert92stacs}
to $\sim$-constraints (instead of equality constraints) or also 
by an encoding 
into the visibly tree automata with one memory of~\cite{CJP-fossacs07}.

\paragraph{\bf Proof of a conjecture.}
We add to the specification a defined symbol $\rev$:
$\D = \bigl\{ \rev: \List \to \List \bigr\}$
and a defined TRS $\RD$:
\begin{align}
\rev(\bot) & \to \bot
 \tag{$\mathsf{r}_0$} \label{eq:rev-bot}\\
\rev(v(y)) & \to v(y)
 \tag{$\mathsf{r}_1$} \label{eq:rev-single}\\
\rev(\tie(y_1, y_2)) & \to \tie(\rev(y_2), \rev(y_1))
 \tag{$\mathsf{r}_2$} \label{eq:rev-2}
\end{align}

The conjecture is:
\begin{equation} 
\rev(\rev(x)) = x 
\label{eq:rev-conjecture}
\end{equation}
A proof of Conjecture~(\ref{eq:rev-conjecture}) can be found in~\cite{Kapur}.
We prove~(\ref{eq:rev-conjecture}) by the analysis 
of several cases, where each case is treated quickly.
As explained above, we need to decorate its variables with non-terminals of 
the normal form grammar.
There are three possibilities:
\begin{eqnarray}
 \rev(\rev(x)) & = & x \cons[ x \deriv \dxstate{x}{\List} ]
 	\label{ex:rev1}\\
 \rev(\rev(x)) & = & x \cons[ x \deriv \state{\bot} ]
 	\label{ex:bot}\\
 \rev(\rev(x)) & = & x \cons[ x \deriv \state{\tie(x_1, x_2)} ]
 	\label{ex:rev-tie}
\end{eqnarray}

Let us apply the production rules of the grammar to Conjectures~(\ref{ex:rev1})
and~(\ref{ex:bot}) (inference \textsf{Inductive Narrowing}). It returns respectively:
\begin{eqnarray}
 \rev(\rev(v(x))) & = & x \cons[ x \deriv \dxstate{x}{\Nat} ]
 	\label{ex:rev2-v}\\
 \rev(\rev(\bot)) & = & \bot 
 	\label{ex:rev1-bot}
\end{eqnarray}
The subgoals (\ref{ex:rev2-v}) and~(\ref{ex:rev1-bot}) 
are reduced by the rules 
(\ref{eq:rev-single}) and (\ref{eq:rev-bot}) of $\RD$ 
(\textsf{Inductive Rewriting} for \textsf{Inductive Narrowing})
into the respective tautologies:
$v(x) = v(x) \cons[ x \deriv \dxstate{x}{\Nat} ]$ and $\bot = \bot$.

Now, let us apply \textsf{Inductive Narrowing} to Conjecture~(\ref{ex:rev-tie}).
The application of the production rules of the grammar $\G_\NF(\RC)$ returns:
\begin{multline}
 \rev(\rev(\tie(x_1, x_2)))  =  \tie(x_1, x_2) \\
 \cons[ x_1 \deriv \dxstate{x_3}{\List} \wedge x_2 \deriv \dxstate{x_4}{\List} \wedge x_3^\List \sim x_4^\List ]
 \label{ex:rev-tie1}
\end{multline}
\begin{multline}
 \rev(\rev(\tie(x_1, x_2)))  =  \tie(x_1, x_2)  \\
{\cons[ x_1 \deriv \dxstate{x_3}{\List} \wedge x_2 \deriv \state{\tie(x_4, x_5)} \wedge x_3^\List \sim \tie(x_4, x_5) ]}
 \label{ex:rev-tie2}
\end{multline}
\begin{multline}
 \rev(\rev(\tie(x_1, x_2)))  =  \tie(x_1, x_2)  \\
 {\cons[ x_1 \deriv \state{\tie(x_3, x_4)} \wedge x_2 \deriv \dxstate{x_5}{\List} \wedge \tie(x_3, x_4) \sim x_5^\List ]}
 \label{ex:rev-tie3}
\end{multline}
\begin{multline}
 \rev(\rev(\tie(x_1, x_2)))  =  \tie(x_1, x_2)  \\
 {\cons[ x_1 \deriv \state{\tie(x_3, x_4)} \wedge x_2 \deriv \state{\tie(x_5, x_6)} \wedge \tie(x_3, x_4) \sim \tie(x_5, x_6) ]}
 \label{ex:rev-tie4}
\end{multline}
Note that, with~(\ref{eq:rev-2}):
\[
\begin{array}{rcl}
 \rev(\rev(\tie(x_1, x_2))) & \to_\RD & \rev(\tie(\rev(x_2), \rev(x_1))) \\
	& \to_\RD & \tie(\rev(\rev(x_1)), \rev(\rev(x_2))) 
\end{array}	
\] 
Hence, the reduction of~(\ref{ex:rev-tie1}) with the rule~(\ref{eq:rev-2}) of $\RD$ gives:
\begin{multline}
 \tie(\rev(\rev(x_1)), \rev(\rev(x_2)))   =  \tie(x_1, x_2) \\
 \cons[ x_1 \deriv \dxstate{x_3}{\List} \wedge x_2 \deriv \dxstate{x_4}{\List} \wedge x_3^\List \sim x_4^\List ]
 \label{eq:rev-subgoal1}
\end{multline}
ans similarly for ~(\ref{ex:rev-tie2}), ~(\ref{ex:rev-tie3}), and~(\ref{ex:rev-tie4}).

This later equation~(\ref{eq:rev-subgoal1}) 
can be reduced by Conjecture~(\ref{ex:rev-tie}), 
considered as an induction hypothesis (this is a case of \textsf{Inductive Rewriting}), 
giving the tautology:
\begin{equation}
 \tie(x_1, x_2))   =  \tie(x_1, x_2)
 \cons[ x_1 \deriv \dxstate{x_3}{\List} \wedge x_2 \deriv \dxstate{x_4}{\List} \wedge x_3^\List \sim x_4^\List ]
\end{equation}
The situation is the same for the other reduced equation and this completes
the proof of Conjecture~(\ref{ex:rev-tie}).

\end{RR}

\section{Conclusion}   
\begin{ABS}
We have proposed a procedure for automated inductive theorem proving
in specification made of conditional and constrained rewrite rules.
Constraints in rules can serve to 
transform non terminating specifications into terminating ones
(ordering constraints),
define ad-hoc evaluation strategies
(normal form constraints),
or for the analysis of trace properties of
infinite state systems like security protocols
(constraints of membership in a regular tree language representing
faulty traces~\cite{BouhoulaJacquemard07arspa}).
The expressiveness and efficiency of the procedure are obtained 
by the use of constrained tree grammars as a finite representation
of the initial model of specifications.
They are used both as induction schema and for decision procedures.
\end{ABS}

\begin{RR}
A fundamental issue in automatic theorem proving by induction 
is the computation of a suitable finite description of the set 
of ground terms in normal form, which can be used as an induction scheme.
Normal form constrained tree grammars are perfect induction schemes
in the sense that they generate \emph{exactly} the set of 
constructor terms in normal form. 
At the opposite, test sets and cover sets are approximated induction schemes
when the constructors are not free.
They may indeed also represent some reducible ground terms, 
and therefore may cause the failure (a result of the form ``don't know'') 
of an induction proof when constructors are not free.
In this case, refutational completeness is not guaranteed.
This explains the choice of constrained grammars for the incremental generation of subgoals. 
Constrained tree grammars are also used (by mean of emptiness test) 
in order to detect in some cases that constructor subgoals are inductively valid.
Moreover, this formalism permits to handle naturally constraint of 
membership in a fixed regular tree language.

Our inference system allows rewrite rules between constructors which can  be constrained.
Hence it permits to automate induction proofs on complex data structures.
It is sound and refutationally complete,
and allows for the refutation of false conjectures,
even with constrained constructor rules.
Moreover, all the conditions of inference rules are either recursive calls to
the procedure (\textsf{Rewrite~Splitting} or \textsf{Inductive~Contextual~Rewriting}),
or either some tests decidable 
under some assumptions on the constraints of the rewrite system for constructors.
These assumptions are required for decision of emptiness of constrained grammar languages.

Constraints in rules can serve to 
transform non terminating specifications into terminating ones,
for instance in presence of associativity and commutativity axioms
(ordering constraints),
define ad-hoc evaluation strategies, 
like e.g. innermost rewriting, directly in the axioms
(normal form constraints),
or for the analysis of trace properties of
infinite state systems like security protocols
(constraints of membership in a regular tree language representing
faulty traces~\cite{BouhoulaJacquemard07arspa}).
The treatment of membership constraints permits 
to express in a natural way, in conjectures, trace properties for the verification
of systems. This idea has been applied 
for the validation and research of attacks (by refutation)
on security protocols 
in a model with explicit destructor functions~\cite{BouhoulaJacquemard07arspa}.
These symbols represent operators like projection or decryption
whose behaviour is specified with constructor axioms.

%

Our procedure can handle partial specifications:
specifications which are not sufficiently complete
and specifications with partial constructor functions in the lines of~\cite{Kapur}.
Moreover, it preserves the proofs 
of decorated conjectures
made in a sufficiently complete specification
when this specification is extended with partial symbols.

\end{RR}

\begin{RR}
\label{se-extension}
\bigskip
The definition of tree grammars with constraints in Section~\ref{se-grammar} is very general. 
It embeds some classes of grammars 
for which the emptiness problem is decidable
(see Section~\ref{sb-scheme})
and also classes for which this problem is still open. 
Therefore, advances in tree automata theory can benefit our approach, 
and we are planing to study new classes of tree automata with constraints.



\end{RR}

\subsubsection*{Acknowledgments.} 
We wish to thank Michael Rusinowitch, 
Hubert Comon-Lundh, Laurent Fribourg and Deepak Kapur
for the fruitful discussions that we had together regarding this work.
We are also grateful to Jared Davis and Sorin Stratulat for having processed 
the example on sorted lists with 
respectively 
\textsf{ACL2} and \textsf{SPIKE}.



\begin{thebibliography}{10}

\begin{RR}
\bibitem{bogaert92stacs}
B.~Bogaert and S.~Tison.
\newblock Equality and disequality constraints on brother terms in tree automata.
\newblock In {\em Proc. of the 9th Symp. on Theoretical Aspects of Computer Science}, 1992.
\end{RR}

\begin{RR}
\end{RR}

\bibitem{Bouhoula97jsc}
A.~Bouhoula.
\newblock Automated theorem proving by test set induction.
\newblock {\em Journal of Symbolic Computation}, 23(1):47--77, 1997.

\bibitem{BouhoulaJacquemard07arspa}
A.~Bouhoula and F.~Jacquemard.
\newblock Verifying regular trace properties of security protocols with
  explicit destructors and implicit induction.
\newblock In {\em {P}roc. of the workshop {FCS-ARSPA}},
pages 27--44, 2007.

\bibitem{BouhoulaJacquemard08ijcar}
A.~Bouhoula and F.~Jacquemard.
\newblock Automated Induction with Constrained Tree Automata.
\newblock in {\em Proceedings of the 4th {I}nternational {J}oint {C}onference on {A}utomated {R}easoning (IJCAR)},
  vol. 5195 of Springer LNCS, pages 539--553, 2008.

\bibitem{BouhoulaJouannaud01}
A.~Bouhoula and J.-P. Jouannaud.
\newblock Automata-driven automated induction.
\newblock {\em Information and Computation}, 169(1):1--22, 2001.

\bibitem{BouhoulaJouannaudMeseguer}
A.~Bouhoula, J.-P. Jouannaud, and J.~Meseguer.
\newblock Specification and proof in membership equational logic.
\newblock {\em Theoretical Computer Science}, 236(1-2):35--132, 2000.

\bibitem{BouhoulaRusinowitch95jar}
A.~Bouhoula and M.~Rusinowitch.
\newblock Implicit induction in conditional theories.
\newblock {\em Journal of Automated Reasoning}, 14(2):189--235, 1995.

\begin{RR}
\bibitem{CaronComonCoquideDauchetJacquemard94}
A.C.~Caron, H.~Comon, J-L.~Coquid{\'e}, M.~Dauchet, and F.~Jacquemard.
\newblock Pumping, cleaning and symbolic constraints solving.
\newblock In {\em Proc. of the 21st  Int. Conf. on Automata, Languages and Programming}, 1994.
\end{RR}


\bibitem{ComonThese88}
H.~Comon.
\newblock {\em Unification et disunification. {T}h{\'e}ories et applications}.
\newblock PhD thesis, Institut Polytechnique de Grenoble (France), 1988.

\bibitem{tata}
H.~Comon, M.~Dauchet, R.~Gilleron, F.~Jacquemard, C.~L{\" o}ding, D.~Lugiez,
  S.~Tison, and M.~Tommasi.
\newblock Tree automata techniques and applications.
\newblock \url{http://www.grappa.univ-lille3.fr/tata}, 2007.

\bibitem{ComonJacquemard03}
H.~Comon and F.~Jacquemard.
\newblock Ground reducibility is exptime-complete.
\newblock {\em Information and Computation}, 187(1):123--153, 2003.

\bibitem{Comon01handbook}
H.~Comon-Lundh.
\newblock {\em Handbook of Automated Reasoning}, chapter Inductionless
  Induction.
\newblock Number chapter 14. Elsevier, 2001.

\bibitem{CJP-fossacs07}
H.~Comon{-}Lundh, F.~Jacquemard, and N.~Perrin.
\newblock Tree automata with memory, visibility and structural constraints.
\newblock In {\em {P}roc. of the 10th {I}nt. {C}onf. on {F}ound. of
  {S}oftware {S}cience and {C}omp. {S}truct. ({FoSSaCS}'07)}, 
  vol. 4423 of {\em LNCS}, pages 168--182. Springer, 2007.

\begin{RR}
\bibitem{dauchet95jsc}
M.~Dauchet, A.-C.~Caron, and J.-L.~Coquid{\'e}.
\newblock Automata for reduction properties solving.
\newblock {\em Journal of Symbolic Computation}, 20, 1995.
\end{RR}

\bibitem{Davis-osets}
J.~Davis.
\newblock Finite set theory based on fully ordered lists.
\newblock In {\em In 5th International Workshop on the ACL2 Theorem Prover and
  Its Applications (ACL2 2004)}, 2004.
\newblock Sets Library Website:
  \url{http://www.cs.utexas.edu/users/jared/osets/Web}.

\bibitem{DershowitzJouannaud90}
N.~Dershowitz and J.-P. Jouannaud.
\newblock Rewrite systems.
\newblock In {\em Handbook of Theoretical Computer Science, Volume B: Formal
  Models and Sematics}, pages 243--320. MIT Press, 1990.

\begin{RR}
\bibitem{JGL-ipl2005}
J.~Goubault{-}Larrecq.
\newblock Deciding {\(\mathcal{\MakeUppercase{H}}_1\)} by Resolution. 
\newblock {\em Information Processing Letters}, 95(3):401--408, 2005.
\end{RR}

\bibitem{JRV-jlap08}
F.~Jacquemard, M.~Rusinowitch, and L.~Vigneron.
\newblock Tree automata with equality constraints modulo equational theories.
\newblock {\em Journal of Logic and Algebraic Programming}, 75(2), pages 182--208, 2008.

\begin{RR}
\bibitem{JouannaudKounalisLICS86}
J.-P. Jouannaud and E.~Kounalis.
\newblock Proof by induction in equational theories without constructors.
\newblock In {\em Proc. 1st IEEE Symposium on Logic in Computer Science}, 1986.
\end{RR}

\bibitem{Kapur}
D.~Kapur.
\newblock Constructors can be partial too.
\newblock In {\em Essays in Honor of Larry Wos}. MIT Press, 1997.

\begin{RR}
\bibitem{Kapur91acta}
D.~Kapur, P.~Narendran, D.~Rosenkrantz, and H.~Zhang.
\newblock Sufficient completeness, ground reducibility and their complexity.
\newblock {\em Acta Informatica}, 28:311--350, 1991.
\end{RR}

\begin{RR}
\bibitem{ACL2}
M.~Kaufmann, P.~Manolios, and J.S.~Moore.
\newblock Computer-Aided Reasoning: An Approach.
\newblock Kluwer Academic Publishers, 2000.
\end{RR}

\bibitem{KirchnerKirchnerRusinowitch90ria}
C.~Kirchner, H.~Kirchner, and M.~Rusinowitch.
\newblock Deduction with symbolic constraints.
\newblock {\em Revue d'Intelligence Artificielle}, 4(3):9--52, 1990.
\newblock Special issue on Automatic Deduction.

\begin{RR}
\bibitem{Misra94powerlist}
Jayadev Misra.
\newblock Powerlist: {A} structure for parallel recursion.
\newblock {\em ACM Transactions on Programming Languages and Systems},
  16(6):1737--1767, 1994.
\end{RR}

\begin{RR}
\bibitem{Paulson98}
Lawrence~C. Paulson.
\newblock The inductive approach to verifying cryptographic protocol.
\newblock {\em Journal of Computer Security}, 6:85--128, 1998.
\end{RR}

\begin{RR}
\bibitem{plaisted85ic}
David~A. Plaisted.
\newblock Semantic confluence tests and completion methods.
\newblock {\em Information and Control}, 65(2-3):182--215, 1985.
\end{RR}

\begin{RR}
\bibitem{sengler96termination}
C.~Sengler.
\newblock Termination of Algorithms over Non-freely Generated Data Types.
\newblock In proceedings of the 13th Int. Conf. on Automated Deduction, 
   vol. 1104 of  Springer LNCS, pages 121-135, 1996.
\end{RR}

\bibitem{stratulat01}
S.~Stratulat.
\newblock A general framework to build contextual cover set induction provers.
\newblock {\em Journal of Symbolic Computation}, 32(4):403--445, 2001.

\bibitem{Zhang-CTRS92}
H.~Zhang.
\newblock Implementing contextual rewriting.
\newblock In {\em In Proc. 3rd Int. Workshop on Conditional Term Rewriting
  Systems}, 1992.
  
\begin{RR}
\bibitem{ZhangKKCADE88}
H.~Zhang, D.~Kapur, and M.~S. Krishnamoorthy.
\newblock A mechanizable induction principle for equational specifications.
\newblock In {\em Proc. 9th  Int. Conf. on Automated Deduction}, vol. 310 
  of Springer LNCS, pages 162--181, 1988.
\end{RR}
  

\end{thebibliography}

\end{document}